\newcommand{\ex}[1]{\mathbb{E}\left[ #1 \right] }
\newcommand{\norm}[1]{\left\lVert #1 \right\rVert}
\newcommand{\Q}{ {\mathbf Q}}
\newcommand{\Qc}{ {\mathbf Q}_{\perp}}
\newcommand{\Qp}{ {\mathbf Q}_{\parallel}}
\newcommand{\A}{ {\mathbf A}}
\newcommand{\s}{ {\mathbf S}}
\newcommand{\UU}{ {\mathbf U}}
\newcommand{\inner}[2]{\langle #1, #2 \rangle}
\newcommand{\lep}[1]{\mathop  \le \limits^{(#1)}}
\newcommand{\gep}[1]{\mathop  \ge \limits^{(#1)}}
\newcommand{\gp}[1]{\mathop  > \limits^{(#1)}}
\newcommand{\ep}[1]{\mathop  = \limits^{(#1)}}
\newtheorem{definition}{Definition}
\newtheorem{lemma}{Lemma}
\newtheorem{proposition}{Proposition}
\newtheorem{assumption}{Assumption}
\newtheorem{theorem}{Theorem}
\newtheorem{remark}{Remark}
\begin{document}

\title{Designing Low-Complexity Heavy-Traffic Delay-Optimal Load Balancing Schemes: Theory to Algorithms}  %
\author{Xingyu Zhou\\Department of ECE\\The Ohio State University\\zhou.2055@osu.edu \and Fei Wu\\
Department of CSE \\The Ohio State University\\wu.1973@osu.edu \and Jian Tan\\Department of ECE\\The Ohio State University\\tan.252@osu.edu\\
\and Yin Sun\\Department of ECE\\Auburn University\\yzs0078@auburn.edu\\ \and Ness Shroff\\Department of ECE and CSE\\The Ohio State University\\
shroff.11@osu.edu }

\maketitle
\begin{abstract}

We establish a unified analytical framework for load balancing systems, which allows us to construct a general class $\Pi$ of policies that are both throughput optimal and heavy-traffic delay optimal.
This general class $\Pi$ includes as special cases  popular policies such as join-shortest-queue and power-of-$d$, but not the join-idle-queue (JIQ) policy.  
In fact we show that JIQ, which is not in $\Pi$, is actually not heavy-traffic delay optimal.
Owing to the significant flexibility offered by class $\Pi$, we are able to design a new policy called join-below-threshold (JBT-d), which maintains  the simplicity of pull-based policies such as JIQ, but updates its threshold dynamically.  We prove that JBT-$d$ belongs to the class $\Pi$ when the threshold is picked appropriately and thus it is heavy-traffic delay optimal.
Extensive simulations show that the new policy not only has a low complexity in message rates, but also achieves excellent delay performance, comparable to the optimal join-shortest-queue in various system settings.
\end{abstract}



\section{Introduction}

	Load balancing, which is responsible for dispatching jobs on parallel servers, is a key component in computer networks and distributed computing systems. For a large number of practical applications, such as, Web service~\cite{gupta2007analysis}, distributed caching systems (e.g., Memcached~\cite{nishtala2013scaling}), large data stores (e.g., HBase~\cite{george2011hbase}), embarrassingly parallel computations~\cite{dean2008mapreduce} and grid computing~\cite{foster2008cloud}, the system performance critically depends on the load balancing algorithm it employs. 


In a load balancing system, there are two directions of message flows: push messages (from the dispatcher to the servers) and pull messages (from the servers to the dispatcher).   In a push-based policy, the dispatcher actively sends query messages to the servers and waits for their responses; 
In a pull-based policy,  the dispatcher passively listens to the report  from the servers. The job dispatching decision is conducted at the dispatcher based on the pull-messages sent from the servers. 
Push-based policies (e.g., the join-shortest-queue (JSQ) policy~\cite{weber1978optimal},~\cite{eryilmaz2012asymptotically} and the power-of-$d$ policy~\cite{mitzenmacher2001power},~\cite{vvedenskaya1996queueing}) have been shown to be delay optimal in the heavy-traffic regime~\cite{eryilmaz2012asymptotically},~\cite{maguluri2014heavy}.  
Recently, the pulled-based policies such as join-idle-queue (JIQ)~\cite{lu2011join} and the equivalent one in~\cite{stolyar2015pull}, have been proposed. Compared with the push-based policies, these pull-based policies not only achieve good delay performance, but also have some nice features, such as, lower message overhead, lower computational complexity, and zero dispatching delay.
However, as shown in the simulations of~\cite{lu2011join}, the delay performance of existing pull-based polices will degrade substantially as the load gets higher. In fact, as shown in Theorem~\ref{THM:JIQ} of this paper, JIQ is not heavy-traffic delay optimal even for homogeneous servers. Therefore, one key question is how to design load balancing policies that are heavy-traffic delay optimal and meanwhile possess all the nice features of pull-based policies such as zero dispatching delay, low message overhead and low computational complexity.

In this paper, we take a systematic approach to address this question. To that end, the main contributions of this paper are summarized as follows: 

\begin{itemize}

	\item  We derive inner-product based sufficient conditions for proving that a load-balancing policy is throughput optimal and heavy-traffic delay optimal. Using these sufficient conditions, we obtain a general class $\Pi$ of load balancing policies that are throughput optimal and heavy-traffic delay optimal. This class of load balancing policies contains the famous (push-based) JSQ and the power-of-d policies as special cases, but not the (pull-based) JIQ policy.

	\item On the other hand, we show that JIQ, which is not in $\Pi$, is not heavy-traffic delay optimal even for homogeneous servers. While it has been empirically shown in the past that the delay using JIQ is quite bad at high loads, the question of whether it was heavy-traffic delay optimal in homogeneous servers has been previously left unsolved. Furthermore, our novel Lyapunov-drift approach offers a new avenue to show a policy is not heavy-traffic delay optimal. 

	\item Owing to the significant flexibility offered by class $\Pi$, we are able to design a new policy called Join-Below-Threshold (JBT-$d$).
	To the best of our knowledge, this is the first load balancing policy that guarantees heavy-traffic delay optimality while enjoying nice features of pull-based policy, e.g., zero dispatching delay, low message overhead and low computational complexity. Through extensive simulations, we demonstrate that JBT-$d$ has excellent delay performance for different system sizes and various arrival and service processes over a large range of traffic loads.

\end{itemize}

	The rest of the paper is organized as follows. Section \ref{sec:relatedwork} reviews the related work on load balancing schemes. Section~\ref{sec:notation} introduces the necessary 
	notation in the paper.  Section~\ref{sec:model} describes the system model and the related definitions. 
	Section \ref{sec:mainresult} presents the main results of the paper. In particular, a class $\Pi$ of flexible 
	load balancing policies are introduced, containing as special cases the popular existing ones and motivating new ones.  
	Sufficient conditions are derived to guarantee throughput and heavy-traffic delay optimality.  Section~\ref{sec:simulation} contains the simulation results on comparing 
	different policies, demonstrating the performance and simplicity of our new policy.
	Section \ref{sec:proofMain} contains the proof of the main results.

\subsection{Related work: push versus pull}
\label{sec:relatedwork}
	This section reviews state-of-the-art load balancing policies with a focus on the system performance in heavy traffic.  We group these policies mainly into two categories: push-based and pull-based as shown in Fig. \ref{fig:model}.  

	\textbf{Push-based policy:} Under a push-based policy, the dispatcher tries to ``push'' jobs to servers. More specifically, upon each job arrival, the dispatcher sends probing messages to the servers, which feed back the required information for dispatching decisions, e.g., queue lengths. After receiving the feedback, the dispatcher sends the incoming jobs to servers based on a dispatching distribution. 
	A classical example in this category is the JSQ policy, under which the dispatcher queries the queue length information of each server upon new job arrivals, 
	and sends the incoming jobs to the server with the shortest queue, with ties broken randomly. It has been shown~\cite{weber1978optimal} that for homogeneous servers this policy is delay optimal in a stochastic ordering sense under the assumption of renewal arrival and non-decreasing failure rate service. In the heavy-traffic regime, it has been proved that it is heavy-traffic delay optimal for both heterogeneous and homogeneous servers~\cite{eryilmaz2012asymptotically}. Nevertheless, the performance of this policy comes at the cost of substantial overhead as it has to sample the queue lengths of all the servers, which is undesirable in large-scale systems. To overcome this problem, an alternative load balancing policy called power-of-$d$ has been introduced~\cite{mitzenmacher2001power},~\cite{vvedenskaya1996queueing}; see also related works~\cite{ying2015power},~\cite{tsitsiklis2013queueing}. Under this policy, the dispatcher routes all the incoming jobs to the server that has the shortest queue length, with ties broken randomly,  out of the $d$ servers sampled uniformly at random. This policy has also been shown to be heavy-traffic optimal for homogeneous servers~\cite{maguluri2014heavy}. However, for heterogeneous servers, the power-of-$d$ policy is neither throughput optimal, nor delay optimal in heavy traffic. 

	\textbf{Pull-based policy:} Under a pull-based policy, the servers spontaneously send messages to  ``pull''  jobs from the dispatcher according to a fixed policy. One illustrative example is the JIQ policy~\cite{lu2011join} and the equivalent one in~\cite{stolyar2015pull}.  Under the JIQ policy, each server sends a pull message to the dispatcher whenever it becomes idle. Upon job arrivals, the dispatcher checks the available pull messages in memory.  
	If such messages exist,  it removes one uniformly at random, and sends the jobs to the corresponding server.  Otherwise, the new jobs will be dispatched uniformly at random to one of the servers in the system. 
	This policy has several favorable properties. The most important property is that the required number of messages in steady-state is at most one for each job arrival, which is smaller than the $2d$ of the power-of-$d$-choices ($d$ for query and $d$ for response per job).  However, as already shown in~\cite{lu2011join}, when the load becomes heavy, the performance of JIQ keeps empirically degrades substantially, and in fact, in Theorem~\ref{THM:JIQ} we show that it is not heavy traffic delay optimal even for homogeneous servers. 



	\subsection{Notations}
	\label{sec:notation}
	We use boldface letters to denote vectors in $\mathbb{R}^N$ and ordinary letters for scalers. Denote by $\overline{\Q}$ the random vector whose probability distribution is the same as the steady-state distribution of $\{\Q(t), t\ge 0\}$. The dot product in $\mathbb{R}^N$ is denoted by $\inner{\mathbf{x} }{ {\mathbf{y} } } := \sum_{i=1}^N x_iy_i$. For any $\mathbf{x} \in \mathbb{R}^N$, the $l_1$ norm is denoted by $\norm{\mathbf{x} }_1 := \sum_{n=1}^N |x_n|$ and $l_2$ norm is denoted by $\norm{\mathbf{x} } := \sqrt{\inner{\mathbf{x} }{\mathbf{x} } }$. The parallel and perpendicular component of the queue length vector $\Q$ with respect to a vector $\mathbf{c}$ with unit norm is denoted by $\Qp:=\inner{\mathbf{c}}{\Q}\mathbf{c}$  and $\Qc:=\Q - \Qp$, respectively.

	\begin{figure}[t]\vspace{-0.0cm}
	\graphicspath{{./Figures/}}
	\centering
	\includegraphics[width=10cm]{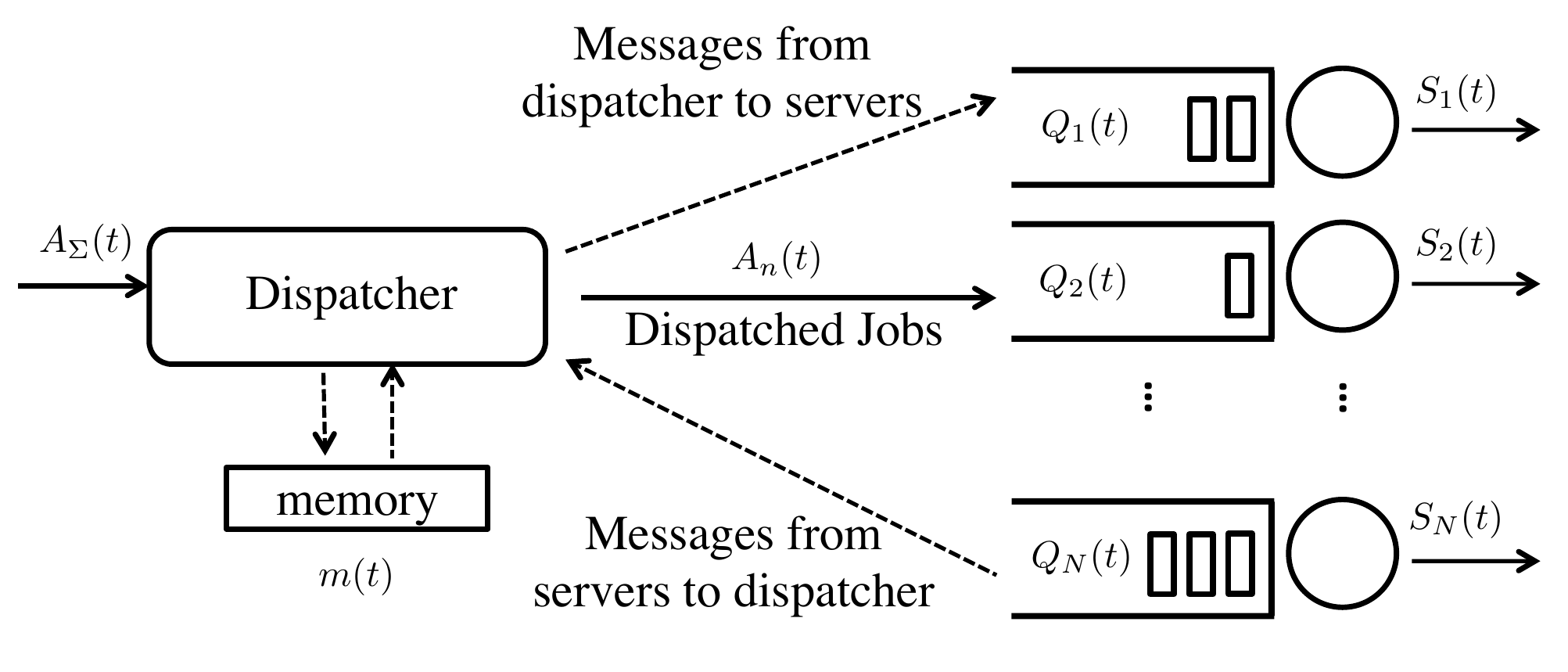}
	\caption{System model of general load balancing. \normalfont{(a) For push-based policy, we have $m(t) = \emptyset$ for all $t$ since it does not require any memory. The message exchange is bidirectional: probing from the dispatcher and feedback messages from servers. (b) For pull-based policy, $m(t)$ stores the ID of the servers that satisfy a certain condition at time $t$. The message exchange is unidirectional, i.e., only the pull-message is sent from the servers to the dispatcher.}}\label{fig:model}
	\vspace{-0.4cm}
	\end{figure}  

\section{Model and Definitions}
\label{sec:model}
	This section describes a general model for the load balance systems as shown in Fig. \ref{fig:model}, and introduces necessary definitions.   
	  \subsection{Model Description}
	Consider a time-slotted load balancing system,  with one central dispatcher and $N$ parallel servers.  These servers are
	  indexed by its ID $n = 1,2,\ldots,N$. Each server $n$ is associated with a FIFO (first-in, first-out) queue of length $Q_n(t)$ at the beginning of time slot $t$, $t=0,1, 2, \ldots$. Thus, we use index $n$ to represent both the server and the associated queue.  Once a job joins a queue, it will remain in that queue until its service is completed.

	  \begin{assumption}[Arrival Process]
	  \label{assume:arrival}
	   	 Let $A_\Sigma(t)$ and $A_n(t)$ denote the number of exogenous job arrivals and the number of arrivals routed to queue $n$ at time slot $t$, respectively.   We assume that all the exogenous arrivals at time $t$ are routed to one selected queue $s$, using the standard model as in \cite{eryilmaz2012asymptotically},~\cite{maguluri2014heavy}, i.e., $A_s(t) = A_\Sigma(t), s\in \mathcal{N} = \{1,2,\ldots, N\} $ and $A_i(t) = 0,$ for all $i \in \mathcal{N}\setminus\{s\}$.  The job arrival process $\{ A_\Sigma(t), t \ge 0\}$ is a nonnegative integer valued stochastic process that is \emph{i.i.d} across time $t$, with mean $\ex{A_\Sigma(t)} = \lambda_\Sigma$ and variance $\text{Var} (A_\Sigma(t))  = \sigma_\Sigma^2$.  
	  We further assume that the number of exogenous arrivals at each time slot is bounded by a constant, i.e., $A_\Sigma(t) \le A_{\max} < \infty$ for all $t \ge 0$.
	  \end{assumption}

	  \begin{assumption}[Service Process]
	  \label{assume:service}
	  	  Let $S_n(t)$ denote the potential service offered to queue $n$ at time $t$, which represents the maximum number of jobs that can be served in time slot $t$. Therefore, if the offered service $S_n(t)$ is larger than the number of pending jobs in queue $n$ at time slot~$t$,  it will cause an unused service
	 $U_n(t)$, as defined in~(\ref{eq:q_dynamic}). 
	     For each $n$,  the process $\{S_n(t), t\ge 0\}$ is a nonnegative integer valued \emph{i.i.d.} stochastic process with mean $\ex{S_n(t)} = \mu_n$ and variance $\text{Var}(S_n(t)) = \nu_n^2$. Moreover, $\lambda_{\Sigma} < \sum_{i=1}^N \mu_n$. Furthermore,  the processes $\{S_n(t), t\ge 0\}, n \in \mathcal{N}$ are mutually independent across different queues, which are also independent of the arrival processes. 
	 The offered service $S_n(t)$ to each queue is uniformly bounded by a constant, i.e., $S_n(t) \le S_{\max} < \infty$ for all $t \ge 0$ and all $n \in \mathcal{N}$.
	  \end{assumption}

	Let $\Q(t) = \{Q_1(t), \ldots, Q_N(t)\}$ be the queue lengths observed at the beginning of time $t$.  
	Define $m(t)$ to be the set of server IDs stored in the dispatcher at the beginning of time slot $t$.  
	  In general, the dispatcher makes the decision of $A_n(t)$ based on $(\Q(t), m(t))$ for each time slot $t$. This includes the cases that the dispatching decision depends only on $\Q(t)$ (e.g., JSQ), partial information of $\Q(t)$ (e.g., power-of-$d$) or only on $m(t)$ (e.g., JIQ).
	  In each time slot,  the queueing dynamics evolves according to the following
	  procedure. 
	  The job arrivals occur at the beginning of time slot $t$. Then, 
	   the dispatching decision $A_n(t)$ is selected 
	   based on $(\Q(t), m(t))$. Further, the routed jobs are processed by the allocated servers.  
	    Thus,  the queueing dynamics is given by the following equation, 

	    \begin{equation}
	    \label{eq:q_dynamic}
	    	\begin{split}
	    		Q_n(t+1) &= \left[ Q_n(t) + A_n(t) - S_n(t) \right]\\
	    		& = Q_n(t) + A_n(t) - S_n(t) + U_n(t),
	    	\end{split}
	    \end{equation}
	    where $[x]^+ = \max(0,x)$, $U_n(t) = \max(S_n(t) - Q_n(t) - A_n(t),0)$ denotes the unused service of queue $n$.

\subsection{Definitions}
	The load balancing system is modeled as a discrete-time Markov chain $\{Z(t) = (\Q(t), m(t)), t\ge0 \}$ with state space $\mathcal{Z}$, using queue length vector $\Q(t)$ together with the memory state $m(t)$. We consider a system  $\{Z^{(\epsilon)}(t),t\ge0 \}$ parameterized by $\epsilon$, i.e., the exogenous arrival process is  $\{ A_\Sigma^{(\epsilon)}(t), t \ge 0\}$ with $\lambda_\Sigma^{(\epsilon)} = \mu_\Sigma - \epsilon = \sum_n \mu_n - \epsilon$. That is, we use $\epsilon$ to indicate the distance of arrival rate to the capacity boundary, and it is also adopted as a superscript to represent the corresponding random variables and processes. 
	\begin{definition}[Stability] 
	$\{Z^{(\epsilon)}(t),t\ge0 \}$ is said to be stable if we have 
	      \begin{displaymath}
	        \limsup_{C \to \infty}\limsup_{t\to\infty}\mathbb{P}\left(\sum_n Q_n^{(\epsilon)}(t) > C\right) = 0.
	      \end{displaymath}
	\end{definition}

	A load balancing policy is said to be throughput optimal if it stabilizes the system under any arrival rate in the capacity region. Since the capacity region in our model is simply $\lambda_\Sigma < \mu_\Sigma$, the definition of throughput optimality is given as follows.
	\begin{definition}[Throughput Optimality]
	      A load balancing policy is said to be throughput optimal if it stabilizes $\{Z^{(\epsilon)}(t),t\ge0 \}$ for any $\epsilon > 0$.
	    \end{definition}

	For the definition of heavy-traffic delay optimality, we need the following definition and property.

	\begin{definition}[Resource-pooled System]
	    A single-server FCFS (first-come, first-serve) system $\{q^{(\epsilon)}(t),t\ge0\}$ is said to be the resource-pooled system with respect to $\{Z^{(\epsilon)}(t),t\ge0 \}$, if its arrival and service process satisfy $a^{(\epsilon)}(t) = A_\Sigma^{(\epsilon)}(t)$ and $s(t) = \sum S_n(t)$ for all $t\ge0$. Then, we have 
	      \begin{equation}
	      \label{eq:stochastic}
	      	\ex{q^{(\epsilon)}(t)} \le \ex{\sum Q_n^{(\epsilon)}(t)},
	      \end{equation}
	      for all $t\ge 0$ and $\epsilon > 0$.
	    \end{definition}

	     In words, a resource-pooled system is a system that merges the total resource of $N$ servers and queues to a single server with a single queue. 
	     Eq.~\eqref{eq:stochastic} holds due to the fact for any $t$, the overall arrivals to the resource-pooled system and to  load balancing system are the same, and the overall service in the resource-pooled system is stochastically larger than the overall service in the load balancing system. This is due to the fact that the jobs in load balancing system cannot be moved from one queue to another, which often results in a strict inequality in Eq.~\eqref{eq:stochastic}. However, in the heavy-traffic regime, this lower bound can be achieved under some policy in an asymptotic sense as defined in the next definition, and hence based on Little's law this policy achieves the minimum average delay of the system.

	    \begin{definition}[Heavy-traffic Delay Optimality]
	    \label{def:heavy-opt}
	       A load balancing policy is said to be heavy-traffic delay optimal if the stationary workload of  $\{Z^{(\epsilon)}(t),t\ge0 \}$ under all the arrival and service processes in Assumptions \ref{assume:arrival} and \ref{assume:service}, satisfies \footnote{Assume $(\sigma_\Sigma^{(\epsilon)})^2$ converges to a constant.}

	      \begin{equation}
	      \label{eq:heavy_traffic_def}
	        \lim_{\epsilon \downarrow 0} \epsilon\ex{\sum_n\overline{Q}_n^{(\epsilon)}} = \lim_{\epsilon \downarrow 0} \epsilon \ex{\overline{q}^{(\epsilon)}},
	      \end{equation}
	      where $\overline{\Q}$ is the random vector whose probability distribution is the same as the steady-state distribution of $\{\Q(t),t\ge 0\}$.
	    \end{definition}


	    \begin{remark}
	    \label{rem:not_optimal}
	    	Based on the definition above, in order to show a policy, say $\mathcal{P}_1$, is not heavy-traffic delay optimal, it is sufficient to find a class of $\{A_\Sigma^{(\epsilon)}(t)\}$ and $\{S_n(t)\}$ such that Eq. \eqref{eq:heavy_traffic_def} does not hold. In other words, there exists a class of arrival and service processes for which policy $\mathcal{P}_1$ cannot achieve the lower bound (i.e., the resource-pooled system) while JSQ can (since it is heavy-traffic delay optimal).
	    \end{remark}

\section{Main Results}
\label{sec:mainresult}
In this section, we introduce a class $\Pi$ of load balancing policies which are proven to be delay-optimal in the heavy-traffic regime.
Popular load balancing policies, such as JSQ and power-of-$d$, are special cases in $\Pi$; but the JIQ policy does not belong to $\Pi$ as we will show in Theorem~\ref{THM:JIQ} that it is not heavy-traffic delay optimal. In order to improve the delay performance of JIQ while maintaining its low message overhead and simplicity, we develop a new load balancing policy named join-below-threshold (JBT-$d$), which is  heavy-traffic delay-optimal as we can show JBT-$d$ is in $\Pi$ and has a low message overhead similar to JIQ.
 
\subsection{The Class of Load Balancing Policies {\large $\Pi$} }
\label{sec:pi}

Let us denote $\mathbf{p}(t)=(p_1(t),\ldots,p_N(t))$, where $p_n(t)$ is the probability that the new arrivals  in time slot $t$ are dispatched to queue $n$ such that $\sum_{n=1}^N p_n(t) = 1$.
We consider a class of load balancing policies in which $\mathbf{p}(t)$ is a function of the system state $Z(t)=\{\Q(t),m(t)\}$.  
Consider a permutation $\sigma_t(\cdot)$ of $(1,2,\ldots, N)$ that satisfies $Q_{\sigma_t(1)}(t)\le Q_{\sigma_t(2)}(t)\le \ldots \le Q_{\sigma_t(N)}(t)$ for all $t$, i.e.,  the queues are sorted according to an increasing order of the queue lengths in time slot $t$ with ties broken randomly. Define $\mathbf{P}(t)=(P_1(t),\ldots,P_N(t))$ such that $\mathbf{P}(t)$ is a permutation of $\mathbf{p}(t)$ with $P_n(t) = p_{\sigma_t(n)}(t)$. Let
	\begin{align}
		\Delta_n(t) & = p_{\sigma_t(n)}(t) - {\mu_{\sigma_t(n)}}/{\mu_\Sigma}\nonumber\\
		&=P_n(t) - {\mu_{\sigma_t(n)}}/{\mu_\Sigma}.\label{eq_delta}
	\end{align}
	



	\begin{definition}[Equivalence in inner-product]
		  A dispatching distribution $\mathbf{\hat{P}}(t) $ is said to be  \emph{equivalent to another dispatching distribution $\mathbf{P}(t)$ in inner product}, if 
		\begin{equation}
			\sum_n Q_{\sigma_t(n)}\Delta_n(t) = \sum_n Q_{\sigma_t(n)}\hat{\Delta}_n(t), \label{eq_equivalent1}
		\end{equation}
		or equivalently, if 
				\begin{equation}
			\sum_n Q_{\sigma_t(n)}P_n(t) = \sum_n Q_{\sigma_t(n)}\hat{P}_n(t).\label{eq_equivalent2}
		\end{equation}
	\end{definition}
	
	The equivalence between \eqref{eq_equivalent1} and \eqref{eq_equivalent2} follows immediately from \eqref{eq_delta}.
Intuitively speaking, a load-balancing policy is `good' if the inner product between $\Q_{\sigma_t}(t)$ and $\mathbf{P}(t)$ is as small as possible such that more packets are dispatched to shorter queues. If $\mathbf{P}(t)$ is  equivalent to $\mathbf{\hat{P}}(t)$ in inner-product, we can replace $\mathbf{\hat{P}}(t)$ by $\mathbf{P}(t)$ without affecting the property of the policy in heavy-traffic regime, which will be explained in details later.


	The following definitions enable us to distinguish different load balancing policies based on $\mathbf{P}(t)$ or equivalently $\Delta(t)$: 


	\begin{definition}[Tilted distribution]
		 A dispatching distribution $\mathbf{P}(t)$ is said to be \emph{tilted}, if there exists $k\in\{2,\ldots,N\}$ such that $\Delta_n(t) \ge 0$ for all $n<k$ and  $\Delta_n(t) \le 0$ for all $n\ge k$.
	\end{definition}

	\begin{definition}[$\delta$-tilted distribution]
		  A dispatching distribution $\mathbf{P}(t)$ is said to be \emph{$\delta$-tilted}, if (i) $\mathbf{P}(t)$ is tilted and (ii) these exists a constant $\delta>0$ such that $\Delta_1(t) \ge \delta$ and $\Delta_N(t)\le -\delta$.
	\end{definition}

	Some examples are presented in Fig.~\ref{fig:dispatch} to facilitate the understanding of tilted distribution, $\delta$-tilted distribution, and equivalence in inner-product. Fig.~\ref{fig:dispatch} (a)-(f) illustrate six dispatching distributions $\mathbf{P}(t)$. The queue state $\Q(t)$ is given by  (i) or (ii). The service rates are $\mu_A = \mu_B = \mu_C = \mu_D = 1$ such that $\mu_{i}/{\mu_\Sigma} =1/4$ for $i=A,B,C,D$. By direct computation, one can obtain that $P_n(t)$ is tilted in scenario (a), (b), (d), (e), and (f), and is $\delta$-tilted in scenario (d), (e), and (f). If $\Q(t)$ is in the State (i), there is no tie in the queue length and hence the permutation $\sigma_t(\cdot)$ is unique, which means that $\mathbf{P}(t)$ is fully determined by $\mathbf{p}(t)$. If 
$\Q(t)$ is in the State (ii), all queue lengths are equal and hence the permutation $\sigma_t(\cdot)$ is non-unique, which means that $\mathbf{P}(t)$ is determined by both $\mathbf{p}(t)$ and $\sigma_t(\cdot)$. In this case, however, the  inner product between $\Q_{\sigma_t}(t)$ and $\mathbf{P}(t)$ is one in all (a)-(f), and hence the dispatching distributions $\mathbf{P}(t)$ in (a)-(f) are mutually equivalent in inner product. For example, in this case even though $\mathbf{P}(t)$ in (c) is neither tilted nor $\delta$-tilted, it is equivalent in inner product to $\mathbf{P}(t)$ in (d) which is both tilted and $\delta$-tilted.

	From the perspective of heavy-traffic delay performance, tilted distribution is a dispatching distribution that is not worse than random routing and $\delta$-tilted distribution is a dispatching distribution that is strictly better than random routing. In addition, the equivalence in inner-product allows us to 
transfer a tilted dispatching distribution to a $\delta$-tilted dispatching distribution when there are ties in queue lengths, that is, it allows to merges probability in $\mathbf{P}(t)$ from longer queues to shorter queues without changing the inner product.

	\begin{figure}[t]\vspace{-0.0cm}
	\graphicspath{{./Figures/}}
	\centering
	\includegraphics[width=10cm]{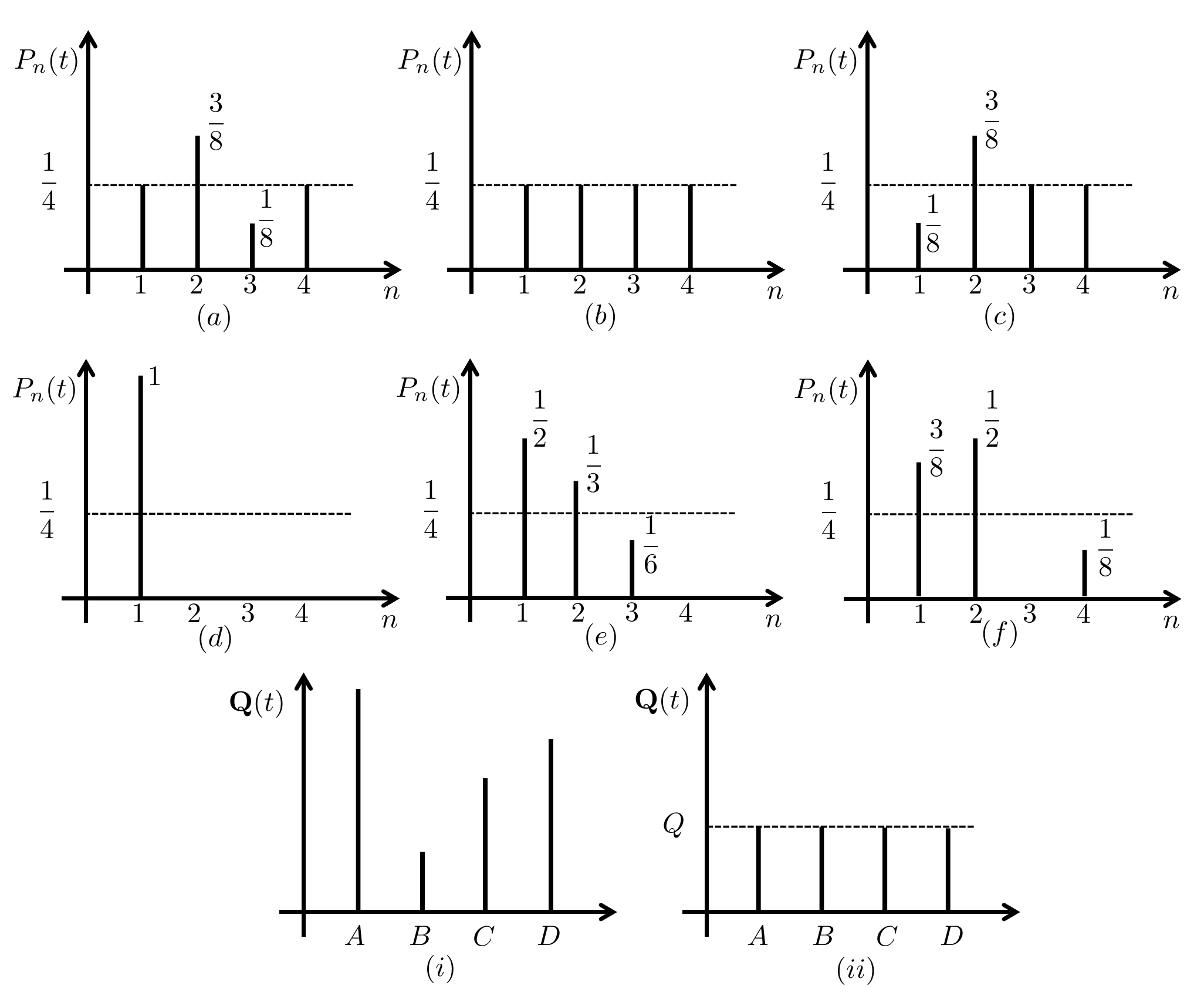}
	\caption{  \small \normalfont{Illustrations of tilted distribution, $\delta$-tilted distribution, and equivalence in inner-product.}}  \label{fig:dispatch} 
	\end{figure}

	We now introduce a class of load balancing algorithms $\Pi$ based 
	on the property of $\mathbf{P}(t)$ or its equivalent distributions in inner product.

	\begin{definition}
		A load balancing policy is said to belong to class $\Pi$ if it satisfies the following two conditions:
		\begin{enumerate}[(i)]
			\item $\mathbf{P}(t)$ or one of its equivalent distributions in inner product is tilted  for all $Z(t)$ and $t\geq0$.
			\item For some finite positive constants $T$ and $\delta$ that both are  independent of $\epsilon$, there exists a time slot $t_k\in\{kT,kT+1,\ldots,(k+1)T-1\}$ for each $k\in \mathbb{N}$ such that $\mathbf{P}(t_k)$ or one of its equivalent distributions in inner product is $\delta$-tilted for all $Z(t_k)$.
			
		\end{enumerate}

	\end{definition}

	In the sequel, we will show that any policy in $\Pi$ satisfies the following two sufficient conditions for throughput and heavy-traffic delay optimality, which are obtained via the Lyapunov-drift based approach developed in~\cite{eryilmaz2012asymptotically}.

	\begin{lemma}
	      \label{LEMMA:THTROUGHPUT}
	    If there exist $T_1>0$, $K_1 \ge 0$, and $\gamma>0$ such that for all $t_0=1,2,\ldots$, all $Z\in \mathcal{Z}$ and  $\lambda_\Sigma < \mu_\Sigma$
	        \begin{equation}
	          \label{eq:condition_throughput}
	            \ex{\sum_{t=t_0}^{t_0+T_1-1} \inner{\Q(t)}{\A(t) -\s(t)} \mid Z(t_0) = Z} \le -\gamma \norm{\Q} + K_1,
	            \end{equation}
	         then the system is throughput-optimal. Moreover, the stationary distribution of the queueing system has bounded moments, i.e., there exist finite $M_r$ such that for all $\epsilon > 0 $ and $r\in \mathbb{N}$
	            \begin{equation*}
	              \ex{\norm{\overline{\Q}^{(\epsilon)} } ^r} \le M_r.
	        \end{equation*}
	    \end{lemma}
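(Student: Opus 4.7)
The plan is to use the quadratic Lyapunov function $V(Z) := \norm{\Q}^2$ and convert the hypothesized $T_1$-step bound on the inner-product sum into a $T_1$-step negative drift of $V$ of the form $\le -2\gamma\sqrt{V} + \text{const}$. I would then invoke Hajek's moment-bound lemma, in the form used for the analogous result in~\cite{eryilmaz2012asymptotically}, on the sampled chain $\{Z(kT_1): k \ge 0\}$ to conclude both positive recurrence (which gives throughput optimality) and bounded moments of every order uniformly in $\epsilon$.

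The key computation starts from the recursion~\eqref{eq:q_dynamic}, which implies $Q_n(t+1)^2 \le (Q_n(t) + A_n(t) - S_n(t))^2$; summing over $n$ yields the one-step inequality
\[
\norm{\Q(t+1)}^2 - \norm{\Q(t)}^2 \le 2\inner{\Q(t)}{\A(t) - \s(t)} + \norm{\A(t) - \s(t)}^2.
\]
Telescoping from $t_0$ to $t_0 + T_1 - 1$, taking conditional expectation given $Z(t_0) = Z$, substituting the hypothesis~\eqref{eq:condition_throughput}, and bounding $\sum_t \ex{\norm{\A(t) - \s(t)}^2 \mid Z(t_0) = Z}$ by the $\epsilon$-independent constant $T_1 N (A_{\max} + S_{\max})^2$ via Assumptions~\ref{assume:arrival}--\ref{assume:service}, I obtain
\[
\ex{V(Z(t_0 + T_1)) - V(Z(t_0)) \mid Z(t_0) = Z} \le -2\gamma \norm{\Q} + K'
\]
for an $\epsilon$-independent constant $K'$. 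Since $\norm{\Q} = \sqrt{V(Z)}$, this is a negative drift of magnitude $\sqrt{V}$ whenever $V$ is sufficiently large, which is exactly the form required by Hajek's lemma.

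To close the argument I would apply Hajek's lemma to the sampled chain at times $kT_1$; the uniform bounds on $A_\Sigma(t)$ and $S_n(t)$ ensure that $\norm{\Q}$ can change by at most a deterministic constant over any window of $T_1$ consecutive slots, supplying the bounded-jump hypothesis of the lemma. Hajek's lemma then produces a geometric tail bound on $\norm{\overline{\Q}^{(\epsilon)}}$ whose constants depend only on $\gamma$, $K'$, $T_1$ and the jump bound, all of which are $\epsilon$-independent. This simultaneously delivers positive recurrence of $\{Z^{(\epsilon)}(t)\}$ for every $\epsilon > 0$ and the moment bound $\ex{\norm{\overline{\Q}^{(\epsilon)}}^r} \le M_r$ for every $r \in \mathbb{N}$.

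The main obstacle is not any single calculation but rather ensuring that the $T_1$-step drift has the correct $\sqrt{V}$-magnitude so that Hajek's lemma delivers \emph{all} moments rather than only positive recurrence; a naive Foster--Lyapunov application to the same drift would stabilize the chain but would not yield uniform-in-$\epsilon$ moment bounds. Carefully tracking the $\epsilon$-independence of each constant ($\gamma$, $K_1$, and the noise bound) through the above calculation is therefore the subtle step.
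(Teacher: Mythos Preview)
Your proposal is correct and matches the paper's proof almost exactly: the paper also derives the one-step bound $\norm{\Q(t+1)}^2-\norm{\Q(t)}^2\le 2\inner{\Q(t)}{\A(t)-\s(t)}+K$ (its Lemma~\ref{lem:drift_of_Qsquare}), telescopes over $T_1$ slots, substitutes~\eqref{eq:condition_throughput} to get a $T_1$-step drift $\le -2\gamma\norm{\Q}+\text{const}$ on $\norm{\Q}^2$, uses this with Foster--Lyapunov for positive recurrence, and then invokes a Hajek-type lemma (its Lemma~\ref{lem:basis}) for the uniform moment bounds. The one cosmetic difference is that the paper applies the Hajek lemma to $V(Z):=\norm{\Q}$ rather than to $\norm{\Q}^2$, translating the quadratic drift into $\ex{\Delta V\mid Z}\le -\gamma+\text{const}/\norm{\Q}$ via concavity of $\sqrt{\cdot}$; this cleanly avoids the issue that the increments of $\norm{\Q}^2$ are \emph{not} uniformly bounded (they scale with $\norm{\Q}$), whereas those of $\norm{\Q}$ are---which is exactly the bounded-jump fact you verified.
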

	    \begin{proof} See Section \ref{sec:proof_lemma_throughput} of the Appendix.\end{proof}

	     \begin{lemma}
	    \label{LEMMA:DELAY}
	Under the assumptions of Lemma \ref{LEMMA:THTROUGHPUT}, if there further exist $T_2 >0$,  $K_2\ge0$ and $\eta > 0$ that are independent of $\epsilon$,  such that for all $t_0=1,2,\ldots$ and all $Z \in \mathcal{Z}$
	        \begin{equation}
	        \label{eq:cond_delay}
	          \ex{\sum_{t=t_0}^{t_0+T_2-1} \inner{\Qc(t)}{\A(t) -\s(t)} \mid Z(t_0) = Z} \le -\eta \norm{\Qc} +K_2
	        \end{equation}
	        holds for all $\epsilon \in (0,\epsilon_0)$, $\epsilon_0 > 0$, where $\Qc = \Q - \inner{\Q}{\mathbf{c}}\mathbf{c}$ is the perpendicular component of $\Q$ with respect to the line ${\bf c} = \frac{1}{\sqrt{N} }{(1,1,\ldots,1)}$, then the system is heavy-traffic delay optimal, i.e., 

	        \begin{displaymath}
	        \lim_{\epsilon \downarrow 0} \epsilon\ex{\sum_n\overline{Q}_n^{(\epsilon)}} = \lim_{\epsilon \downarrow 0} \epsilon \ex{\overline{q}^{(\epsilon)}}.
	        \end{displaymath}
	    \end{lemma}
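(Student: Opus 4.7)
The plan is to adapt the Lyapunov-drift framework of Eryilmaz--Srikant to the memory-augmented chain $\{Z(t)\}$, using the perpendicular-drift hypothesis \eqref{eq:cond_delay} as the key input. The argument breaks into three steps. First, establish state-space collapse: $\mathbb{E}\bigl[\|\overline{\mathbf{Q}}_\perp^{(\epsilon)}\|^r\bigr]$ is bounded by a constant $M_r$ independent of $\epsilon$ for every $r\in\mathbb{N}$. Second, derive an exact steady-state identity expressing $2\epsilon\,\mathbb{E}[\sum_n\overline{Q}_n^{(\epsilon)}]$ in terms of moments of $\mathbf{A}-\mathbf{S}$ and $\mathbf{U}$, along with the analogous identity for the resource-pooled system. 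Third, show the error terms reduce to the same limit in both systems as $\epsilon\downarrow 0$, which yields \eqref{eq:heavy_traffic_def}.

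For the state-space collapse step, I apply a Foster--Lyapunov (Hajek-type) moment bound to the Lyapunov function $V(\mathbf{Q})=\|\mathbf{Q}_\perp\|$ over a window of length $T_2$. Telescoping the one-step identities for $\|\mathbf{Q}_\perp\|^2$ gives
\[
\|\mathbf{Q}_\perp(t+1)\|^2-\|\mathbf{Q}_\perp(t)\|^2 = 2\langle \mathbf{Q}_\perp(t),\mathbf{A}(t)-\mathbf{S}(t)\rangle + 2\langle \mathbf{Q}_\perp(t),\mathbf{U}(t)\rangle + \|(\mathbf{A}(t)-\mathbf{S}(t)+\mathbf{U}(t))_\perp\|^2.
\]
The quadratic remainder is uniformly bounded by $A_{\max}$ and $S_{\max}$, and the first inner product is controlled by \eqref{eq:cond_delay}, contributing $-2\eta\|\mathbf{Q}_\perp(t_0)\|+2K_2$ in conditional expectation once summed over the window. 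For the cross term, the complementary-slackness identity $\langle \mathbf{Q}(t+1),\mathbf{U}(t)\rangle=0$ together with $\mathbf{Q}_\perp(t)=\mathbf{Q}_\perp(t+1)-(\mathbf{A}(t)-\mathbf{S}(t)+\mathbf{U}(t))_\perp$ yields $\langle \mathbf{Q}_\perp(t),\mathbf{U}(t)\rangle = -\overline{Q}(t+1)\,U_\Sigma(t) + O(1)$; since the leading piece is non-positive, the cross term is bounded above by a constant and contributes only $O(T_2)$ over the window. Assembling the pieces, the window drift satisfies $\mathbb{E}[V^2(t_0+T_2)-V^2(t_0)\mid Z(t_0)]\le -2\eta V(t_0)+C$, from which the standard $\sqrt{W}$-drift argument extracts a negative drift of $V$ whenever $V$ is large, giving uniform moment bounds $\mathbb{E}[V(\overline{\mathbf{Q}}^{(\epsilon)})^r]\le M_r$.

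With the collapse established, I next consider $L(t)=\bigl(\sum_n Q_n(t)\bigr)^2$. Using $\sum_n Q_n(t+1)=\sum_n Q_n(t)+A_\Sigma(t)-S_\Sigma(t)+U_\Sigma(t)$, expanding, and taking expectations in steady state (so that $\mathbb{E}[L(t+1)-L(t)]=0$ and using independence of $\overline{\mathbf{Q}}(t)$ from $A_\Sigma(t)$ and $\mathbf{S}(t)$) gives
\[
2\epsilon\,\mathbb{E}\Bigl[\sum_n \overline{Q}_n^{(\epsilon)}\Bigr] = 2\,\mathbb{E}\Bigl[\Bigl(\sum_n \overline{Q}_n^{(\epsilon)}\Bigr)\overline{U}_\Sigma^{(\epsilon)}\Bigr] + \mathbb{E}\bigl[(A_\Sigma-S_\Sigma+\overline{U}_\Sigma^{(\epsilon)})^2\bigr],
\]
and the same derivation with $\overline{q}^{(\epsilon)}$ in place of $\sum_n \overline{Q}_n^{(\epsilon)}$ holds for the resource-pooled system. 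Flow balance $\mathbb{E}[\overline{U}_\Sigma^{(\epsilon)}]=\epsilon$ together with Markov's inequality gives $\mathbb{P}(\overline{U}_\Sigma^{(\epsilon)}\ge 1)\le \epsilon$, so $\mathbb{E}[(\overline{U}_\Sigma^{(\epsilon)})^2]$ and $\mathbb{E}[(A_\Sigma-S_\Sigma)\overline{U}_\Sigma^{(\epsilon)}]$ are both $O(\epsilon)$; on $\{\overline{U}_\Sigma^{(\epsilon)}>0\}$ some $\overline{Q}_m^{(\epsilon)}\le S_{\max}$, so the system average is at most $S_{\max}+\|\overline{\mathbf{Q}}_\perp^{(\epsilon)}\|$, and Cauchy--Schwarz combined with the moment bound from the first step gives $\mathbb{E}[(\sum_n \overline{Q}_n^{(\epsilon)})\overline{U}_\Sigma^{(\epsilon)}]=o(1)$. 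Sending $\epsilon\downarrow 0$ in both identities yields the common limit $\sigma_\Sigma^2+\sum_n\nu_n^2$, establishing \eqref{eq:heavy_traffic_def}.

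The main obstacle is the first step: converting the inner-product drift in \eqref{eq:cond_delay} into a moment bound on the norm $\|\mathbf{Q}_\perp\|$. The cross term $\langle \mathbf{Q}_\perp(t),\mathbf{U}(t)\rangle$ arising in the drift of $\|\mathbf{Q}_\perp\|^2$ could, in principle, scale linearly with $\|\mathbf{Q}_\perp\|$ and overwhelm the negative drift provided by \eqref{eq:cond_delay}; the key insight is that complementary slackness forces this cross term to decompose into a non-positive leading piece plus bounded noise, so that the $-2\eta\|\mathbf{Q}_\perp(t_0)\|$ drift from \eqref{eq:cond_delay} is preserved and the Hajek-type moment bound goes through.
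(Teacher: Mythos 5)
Your proposal is correct and follows essentially the same route as the paper: state-space collapse for $\Qc$ via a Hajek-type drift/moment bound driven by \eqref{eq:cond_delay}, a steady-state drift identity for $\left(\sum_n Q_n\right)^2$, and a Cauchy--Schwarz bound on the $\mathbb{E}\left[\left(\sum_n \overline{Q}_n\right)\overline{U}_\Sigma\right]$ cross term using the collapse together with $\ex{\|\overline{\UU}^{(\epsilon)}\|^2}=O(\epsilon)$. The only differences are cosmetic: you expand $\|\Qc\|^2$ directly and kill the $\inner{\Qc(t)}{\UU(t)}$ term by complementary slackness, whereas the paper routes through the sandwich $\Delta V_\perp \le \frac{1}{2\|\Qc\|}(\Delta W - \Delta W_\parallel)$, and you bound the final cross term by noting some queue is at most $S_{\max}$ on $\{\overline{U}_\Sigma>0\}$ rather than via the identity $\inner{\Q(t+1)}{\UU(t)}=0$.
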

	    \begin{proof} See Section \ref{sec:proof_lemma_delay} of the Appendix.\end{proof}

	\begin{remark}
		Note that these two sufficient conditions distilled from the Lyapunov-drift based approach not only provide a unified approach for throughput and heavy-traffic optimality analysis, but also enable us to abstract a class of heavy-traffic delay optimal policies. In particular, using Lemma \ref{LEMMA:THTROUGHPUT} and Lemma \ref{LEMMA:DELAY}, we are able to prove the main result of this paper.
	\end{remark}
	\begin{theorem}
	\label{THM:ACLASS}
		  Any load balancing policy in $\Pi$ is throughput optimal and heavy-traffic delay optimal.
	\end{theorem}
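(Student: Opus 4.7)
The plan is to verify the two sufficient conditions in Lemmas~\ref{LEMMA:THTROUGHPUT} and~\ref{LEMMA:DELAY} for every policy in $\Pi$. Both conditions reduce to the same algebraic identity for the conditional one-step inner-product drift. Using $\ex{A_n(t)\mid Z(t)}=\lambda_\Sigma p_n(t)$, $\ex{S_n(t)\mid Z(t)}=\mu_n$, and the decomposition $P_n(t)=\mu_{\sigma_t(n)}/\mu_\Sigma+\Delta_n(t)$ from \eqref{eq_delta}, I first derive
\begin{equation*}
\ex{\inner{\Q(t)}{\A(t)-\s(t)}\mid Z(t)}= -\frac{\epsilon}{\mu_\Sigma}\sum_n Q_n\mu_n+\lambda_\Sigma\sum_n Q_{\sigma_t(n)}\Delta_n(t),
\end{equation*}
and, using the direct calculation $\ex{\inner{\Qp(t)}{\A(t)-\s(t)}\mid Z(t)}=-\epsilon\bar Q$ with $\bar Q=\tfrac{1}{N}\sum_n Q_n$, its perpendicular counterpart
\begin{equation*}
\ex{\inner{\Qc(t)}{\A(t)-\s(t)}\mid Z(t)}= -\epsilon\inner{\Qc}{\boldsymbol{\alpha}}+\lambda_\Sigma\sum_n Q_{\sigma_t(n)}\Delta_n(t),
\end{equation*}
where $\boldsymbol{\alpha}=(\mu_n/\mu_\Sigma-1/N)_{n=1}^N$ is a fixed vector with $\sum_n\alpha_n=0$.

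The workhorse step is to bound $\sum_n Q_{\sigma_t(n)}\Delta_n(t)$ via Abel summation. Let $B_n=\sum_{m\le n}\Delta_m$ with $B_0=B_N=0$. Under a tilted distribution, $\Delta_m$ changes sign at some index $k$, so $B_n$ is non-decreasing on $[0,k-1]$ and non-increasing on $[k-1,N]$; in particular $B_n\ge 0$. Abel summation then yields
\begin{equation*}
\sum_n Q_{\sigma_t(n)}\Delta_n(t)=-\sum_{n=1}^{N-1}(Q_{\sigma_t(n+1)}-Q_{\sigma_t(n)})B_n\le 0.
\end{equation*}
Under a $\delta$-tilted distribution, the stronger bounds $B_1=\Delta_1\ge\delta$ and $B_{N-1}=-\Delta_N\ge\delta$ combined with the monotonicity above force $B_n\ge\delta$ on $\{1,\ldots,N-1\}$, so the sum is at most $-\delta(Q_{\sigma_t(N)}-Q_{\sigma_t(1)})\le -(\delta/\sqrt N)\norm{\Qc}$, using $\norm{\Qc}^2\le N(Q_{\max}-Q_{\min})^2$. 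The equivalence-in-inner-product clause in the definition of $\Pi$ means these bounds may be applied to $\mathbf{P}(t)$ by replacing it with an inner-product-equivalent tilted or $\delta$-tilted distribution, since $\sum_n Q_{\sigma_t(n)}\Delta_n$ is invariant within an equivalence class.

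For throughput optimality, plugging the tilted inequality into the $\Q$-identity gives $\ex{\inner{\Q(t)}{\A(t)-\s(t)}\mid Z(t)}\le -(\epsilon\mu_{\min}/\mu_\Sigma)\sum_n Q_n\le -(\epsilon\mu_{\min}/\mu_\Sigma)\norm{\Q}$ (since $\norm{\cdot}_1\ge\norm{\cdot}$), so condition~\eqref{eq:condition_throughput} holds with $T_1=1$ and Lemma~\ref{LEMMA:THTROUGHPUT} applies. For heavy-traffic delay optimality, I take $T_2=2T$: any window $[t_0,t_0+T_2-1]$ contains a full length-$T$ block from the definition of $\Pi$ and thus a $\delta$-tilted slot $t_k$. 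Substituting into the $\Qc$-identity, the drift at $t_k$ is controlled by $(\epsilon\norm{\boldsymbol{\alpha}}-\lambda_\Sigma\delta/\sqrt N)\norm{\Qc(t_k)}$ (Cauchy--Schwarz on the cross-term plus the $\delta$-tilted bound), while at the other $T_2-1$ slots only the tilted bound is available and the drift is at most $\epsilon\norm{\boldsymbol{\alpha}}\norm{\Qc(t)}$. The pathwise Lipschitz estimate $|\norm{\Qc(t)}-\norm{\Qc(t_0)}|\le\sqrt N(A_{\max}+S_{\max})T_2$ replaces $\norm{\Qc(t)}$ by $\norm{\Qc(t_0)}$ up to a constant; summing over the window and choosing $\epsilon_0$ small enough that the negative $\lambda_\Sigma\delta/\sqrt N$ coefficient outweighs the cumulative positive $T_2\epsilon\norm{\boldsymbol{\alpha}}$ yields~\eqref{eq:cond_delay} with $\eta$ and $K_2$ independent of $\epsilon$, and Lemma~\ref{LEMMA:DELAY} concludes.

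The main technical obstacle is the heterogeneous-server case. When the $\mu_n$ are equal, $\boldsymbol{\alpha}=0$ and only the negative $\delta$-tilted term enters the $\Qc$-drift, so the argument is clean. For general $\mu_n$, however, the cross-term $-\epsilon\inner{\Qc}{\boldsymbol{\alpha}}$ is sign-indefinite and scales linearly in $\norm{\Qc}$, so it could in principle compete with the $\delta$-tilted drift. Ensuring domination \emph{uniformly} in $\epsilon$, as Lemma~\ref{LEMMA:DELAY} demands an $\eta$ independent of $\epsilon$, is what forces the restriction to small $\epsilon$ and the careful pathwise comparison of $\norm{\Qc(t)}$ across the window.
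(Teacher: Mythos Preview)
Your proof is correct and follows the paper's approach closely: verify the inner-product drift conditions of Lemmas~\ref{LEMMA:THTROUGHPUT} and~\ref{LEMMA:DELAY} by decomposing $\ex{A_n-S_n\mid Z}$ via $\Delta_n$, showing $\sum_n Q_{\sigma_t(n)}\Delta_n\le 0$ (respectively $\le -\delta(Q_{\max}-Q_{\min})$) for tilted (respectively $\delta$-tilted) distributions, and then absorbing the $O(\epsilon)$ heterogeneous-server cross-term for small $\epsilon$. Your explicit Abel-summation justification of these inequalities and your choice $T_2=2T$ (which guarantees that every window $[t_0,t_0+T_2-1]$ fully contains some block $[kT,(k{+}1)T{-}1]$ and hence a $\delta$-tilted slot) are in fact slightly tidier than the corresponding steps in the paper, which takes $T_2=T$ and asserts the $\delta$-tilted bound more directly.
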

	\begin{proof}[Proof sketch of Theorem \ref{THM:ACLASS}] 
\noindent
	The insight for a policy in $\Pi$ to satisfy the sufficient condition in Eq.~\eqref{eq:condition_throughput} is that under tilted dispatching distribution the performance is no worse than random dispatching. This follows from the following property of tilted distribution
		\begin{equation}
		\label{eq:th_idea}
			\sum_{n=1}^N Q_{\sigma_t(n)}(t) \Delta_n(t) \le 0.
		\end{equation}
	The equality is obtained when all $\Delta_n(t)$ is zero, which is the case of random dispatching as shown in (b) of Fig. \ref{fig:dispatch}. Note that for all other cases of a tilted distribution, Eq. \eqref{eq:th_idea} is strictly less than zero. This is true since  $\sum_{n=1}^N \Delta_n(t)$ is always zero and the permutation is in the non-decreasing order of the queue length.

	The intuition for a policy in $\Pi$ to satisfy the sufficient condition in Eq. \eqref{eq:cond_delay} is that the performance under any $\delta$-tilted dispatching distribution is strictly better than random dispatching, under which the term in Eq. \eqref{eq:cond_delay} is $0$ for homogeneous servers and of order $\epsilon$ for heterogeneous servers. Note that under a $\delta$-tilted distribution, we have
	\begin{equation}
	\label{eq:delay_insight}
		\sum_{n=1}^N Q_{\sigma_t(n)}(t) \Delta_n(t) \le - \delta(Q_{\sigma_t(N)}(t) - Q_{\sigma_t(1)}(t).
	\end{equation}
	This inequality comes from the definition of the $\delta$-tilted distribution and fact that the permutation is in the non-decreasing order of the queue length. In order to have the term of $\norm{\Qc}$, the following inequality would be quite useful
	\begin{equation}
		\norm{\Qc(t)} \le \sqrt{N}(Q_{\sigma_t(N)}(t) - Q_{\sigma_t(1)}(t)).
	\end{equation}
	This is true since $\Qc(t) = \Q(t) - \Qp(t) = \Q(t) - \frac{\sum Q_n(t)}{N}\mathbf{1} = \Q(t) - Q_{\text{avg} }(t)\mathbf{1}$, in which $Q_{\text{avg} }(t)$ is the average queue length among the $N$ servers at time slot $t$. 

	The details of the proof are presented in Section \ref{sec:proof}.
	\end{proof}

	From Eqs. \eqref{eq:th_idea} and \eqref{eq:delay_insight}, it can be seen that the important property  of a given policy is fully characterized by the inner product of $\Q_{\sigma_t}(t)$ and $\Delta(t)$ under the system state $Z(t)$, which is actually the motivation to define equivalent distribution in inner product. That is, even though the dispatching distribution $\mathbf{P}(t)$ is not unique  when there are ties in queue lengths, the inner product is actually the same if two dispatching distributions are  equivalent in inner product, hence preserving the same property in heavy-traffic regime.

	Note that class $\Pi$ is sufficient but not necessary for heavy-traffic delay optimality. Nevertheless, in the next section, we will show that it not only contains many well-known heavy-traffic delay optimal policies but also allows us to design new heavy-traffic delay optimal policies which enjoy nice features of pull-based policies.

\subsection{Important Policies in {\large $\Pi$}}

\subsubsection{Join-shortest-queue (JSQ) policy}

	Under JSQ policy, all the incoming jobs are dispatched to the queue that has the shortest queue length, ties are broken uniformly at random, out of all the servers.

	\begin{proposition}
	 	The JSQ policy belongs to $\Pi$, and hence is throughput optimal and heavy-traffic delay optimal. 
	\end{proposition}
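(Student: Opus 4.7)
The plan is to verify the two conditions in the definition of class $\Pi$ directly for JSQ and then invoke Theorem \ref{THM:ACLASS}. First I would write down the raw JSQ dispatching distribution $\mathbf{p}(t)$: when there is a unique shortest queue, all probability goes to it; when several queues tie at the minimum, the probability is spread uniformly among them. After applying the sorting permutation $\sigma_t(\cdot)$, this induces $\mathbf{P}(t)$ supported on the first $k$ coordinates, where $k$ is the multiplicity of the minimum.

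The key observation is that, since $Q_{\sigma_t(1)}(t)=\cdots=Q_{\sigma_t(k)}(t)$ whenever JSQ spreads probability across a tie, the inner product $\sum_n Q_{\sigma_t(n)}(t) P_n(t)$ equals $Q_{\sigma_t(1)}(t)$, independent of how the tied mass is split. Therefore the JSQ distribution is equivalent in inner product to the extreme distribution $\hat{\mathbf{P}}(t)$ defined by $\hat{P}_1(t)=1$ and $\hat{P}_n(t)=0$ for $n\ge 2$. I would do the entire verification on this equivalent $\hat{\mathbf{P}}(t)$, which is legal by the definition of equivalence in inner product.

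Next, I would compute the corresponding $\hat{\Delta}_n(t) = \hat{P}_n(t) - \mu_{\sigma_t(n)}/\mu_\Sigma$. This gives $\hat{\Delta}_1(t)=1-\mu_{\sigma_t(1)}(t)/\mu_\Sigma \ge 0$ and $\hat{\Delta}_n(t)=-\mu_{\sigma_t(n)}(t)/\mu_\Sigma \le 0$ for $n\ge 2$, so $\hat{\mathbf{P}}(t)$ is tilted (with threshold $k=2$) for \emph{every} $Z(t)$, establishing condition (i). For condition (ii), since there are at least two servers with positive service rates, I would set
\begin{equation*}
\delta \;:=\; \min\!\left\{1-\frac{\mu_{\max}}{\mu_\Sigma},\ \frac{\mu_{\min}}{\mu_\Sigma}\right\} > 0,
\end{equation*}
where $\mu_{\max}=\max_n \mu_n$ and $\mu_{\min}=\min_n \mu_n$. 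Then $\hat{\Delta}_1(t)\ge \delta$ and $\hat{\Delta}_N(t)\le -\delta$ uniformly in $t$, so $\hat{\mathbf{P}}(t)$ is $\delta$-tilted at every slot; condition (ii) then holds trivially with $T=1$ and $t_k=k$.

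Once both conditions are checked, the proposition follows immediately from Theorem \ref{THM:ACLASS}. I do not expect a real obstacle here, since JSQ concentrates dispatching on the shortest queue by construction; the only subtle point is justifying that tie-breaking does not corrupt the tilted structure, which is exactly what the equivalence-in-inner-product device is designed to handle, and why I use the collapsed distribution $\hat{\mathbf{P}}(t)$ instead of arguing with the raw JSQ probabilities.
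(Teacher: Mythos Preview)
Your proposal is correct and follows essentially the same route as the paper: collapse the tie-breaking JSQ distribution to the inner-product equivalent $\hat P_1(t)=1$, $\hat P_n(t)=0$ for $n\ge 2$, read off $\hat\Delta_1(t)=1-\mu_{\sigma_t(1)}/\mu_\Sigma$ and $\hat\Delta_n(t)=-\mu_{\sigma_t(n)}/\mu_\Sigma$, and conclude $\delta$-tiltedness at every slot. Your $\delta=\min\{1-\mu_{\max}/\mu_\Sigma,\ \mu_{\min}/\mu_\Sigma\}$ in fact simplifies to the paper's $\delta=\mu_{\min}/\mu_\Sigma$, since $\mu_{\min}+\mu_{\max}\le \mu_\Sigma$ for $N\ge 2$.
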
 
	The result that JSQ is throughput and heavy traffic delay optimal has been first proven via diffusion limits for two servers in \cite{foschini1978basic} and via Lyapunov-drift argument for $N$ servers in \cite{eryilmaz2012asymptotically}. Here, we present another simple proof based on our main result.

	\begin{proof}
		Note that when there are no ties in queue lengths, the dispatching distribution $\mathbf{P}(t)$ under JSQ satisfies that for all $t$
		\begin{equation}
		\label{eq:jsq}
			P_1(t) = 1 \text{ and } P_n(t) = 0, 2\le n\le N.
		\end{equation}
		In other words, all the arrivals are dispatched to the shortest queue, which is always the queue $\sigma_t(1)$ if there are no ties in queue lengths. If there are ties in queue lengths, this $\mathbf{P}(t)$ is equivalent in inner product to other dispatching distribution under the state $Z(t)$ in which ties exist.  In particular, if there are $m \le N$ queues that all have the shortest queue length, then in this case by random routing the dispatching distribution under JSQ is given by $\hat{P}_i = \frac{1}{m}$ for all $1\le i \le m$, and $\hat{P}_i = 0$ for all $i > m$.  It can be seen that $\mathbf{P}(t)$ in Eq. \eqref{eq:jsq} is equivalent in inner product to $\mathbf{\hat{P}}(t) $ according to the definition because $Q_{\sigma_t(1)} = Q_{\sigma_t(2)} = \ldots = Q_{\sigma_t(m)}$. Thus, for all $Z(t)$, under JSQ the dispatching distribution or its equivalent distribution in inner product is in the form of Eq. \eqref{eq:jsq}. Hence, we have $\Delta_1(t) = 1-\mu_{\sigma_t(1)}/\mu_\Sigma > 0$, and $\Delta_n(t) = -{\mu_{\sigma_t(n)}}/{\mu_\Sigma} < 0$ for all $2 \le n \le N$, which implies that $P_n(t)$ is a $\delta$-tilted probability with $\delta = \mu_{min}/{\mu_\Sigma}$ for all $Z(t), t\ge0$, where $\mu_{min} = \min_{n\in \mathcal{N}} \mu_n$. Therefore, the JSQ policy is contained in the class $\Pi$ under both heterogeneous and homogeneous servers.	
	\end{proof}

\subsubsection{The power-of-{\large$d$} policy}
	Under the power-of-$d$ policy, all the incoming jobs are dispatched to the queue that has the shortest queue length, ties are broken uniformly at random,  out of $d \ge 2$ servers, which are chosen uniformly at random.

	\begin{proposition}
		The power-of-$d$ policy belongs to $\Pi$ under homogeneous servers, and hence is throughput-optimal and heavy-traffic delay optimal.
	\end{proposition}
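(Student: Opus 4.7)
The plan is to invoke Theorem \ref{THM:ACLASS} by showing that power-of-$d$ lies in the class $\Pi$. Because the servers are homogeneous, $\mu_{\sigma_t(n)}/\mu_\Sigma = 1/N$, so $\Delta_n(t) = P_n(t) - 1/N$. Rather than analyzing the dispatching distribution $\mathbf{P}(t)$ directly and enumerating the many tie configurations in $\Q(t)$, I would exhibit a single, state-independent distribution $\hat{\mathbf{P}}$ that is equivalent to $\mathbf{P}(t)$ in inner product for every $Z(t)$, and then verify tiltedness and $\delta$-tiltedness on $\hat{\mathbf{P}}$ once and for all.

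The central computation is the inner product $\sum_n Q_{\sigma_t(n)} P_n(t)$. Under power-of-$d$, the dispatcher sends the job to the shortest queue among a uniformly random $d$-subset $S\subseteq \{1,\ldots,N\}$, so this inner product equals the expected dispatched queue length $\ex{\min_{i\in S} Q_i}$, which does not depend on how ties are broken. To evaluate the expectation, I would condition on the smallest $\sigma_t$-rank present in $S$: if that rank is $n$, then $\min_{i\in S} Q_i = Q_{\sigma_t(n)}$, and the number of such $d$-subsets is $\binom{N-n}{d-1}$ (we must include $\sigma_t(n)$ and draw the remaining $d-1$ members from $\{\sigma_t(n+1),\ldots,\sigma_t(N)\}$). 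This produces
\begin{equation*}
  \sum_n Q_{\sigma_t(n)} P_n(t) \;=\; \ex{\min_{i\in S} Q_i} \;=\; \sum_{n=1}^N Q_{\sigma_t(n)}\, \frac{\binom{N-n}{d-1}}{\binom{N}{d}},
\end{equation*}
so $\mathbf{P}(t)$ is equivalent in inner product to the fixed distribution $\hat P_n := \binom{N-n}{d-1}/\binom{N}{d}$ for every $Z(t)$ (the hockey-stick identity confirms $\sum_n \hat P_n = 1$).

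It remains to verify the two conditions on $\hat{\mathbf{P}}$. The sequence $\binom{N-n}{d-1}$ is non-increasing in $n$, so $\hat P_n$ is non-increasing, which makes $\hat \Delta_n := \hat P_n - 1/N$ non-increasing with zero sum; this forces a single sign change and hence $\hat{\mathbf{P}}$ is tilted, giving condition (i). Moreover $\hat P_1 = d/N$ yields $\hat \Delta_1 = (d-1)/N \ge 1/N$ (since $d\ge 2$) and $\hat P_N = 0$ yields $\hat \Delta_N = -1/N$, so $\hat{\mathbf{P}}$ is $\delta$-tilted with $\delta = 1/N$, giving condition (ii) with $T=1$. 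Theorem \ref{THM:ACLASS} then delivers both throughput and heavy-traffic delay optimality.

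The main obstacle, and where most of the thinking goes, is identifying the closed-form $\hat{\mathbf{P}}$: a naive attempt to verify $\delta$-tiltedness directly on $\mathbf{P}(t)$ gets tangled in the various tie patterns of $\Q(t)$ (for instance when all queues are equal, $\mathbf{P}(t)=(1/N,\ldots,1/N)$ and is not $\delta$-tilted on its own). Rewriting the inner product as $\ex{\min_{i\in S}Q_i}$ makes tie-breaking invisible, and all the tie configurations collapse to the same rank-weighted distribution $\hat{\mathbf{P}}$.
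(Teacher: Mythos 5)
Your proof is correct and follows essentially the same route as the paper: both identify the closed-form distribution $\hat P_n = \binom{N-n}{d-1}/\binom{N}{d}$, compute $\hat\Delta_1=(d-1)/N$ and $\hat\Delta_N=-1/N$, and conclude $\delta$-tiltedness with $\delta=1/N$ for every $Z(t)$. The only difference is that you make explicit, via the identity $\sum_n Q_{\sigma_t(n)}P_n(t)=\ex{\min_{i\in S}Q_i}$, the inner-product equivalence in the presence of ties, a step the paper asserts without detail.
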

	The power-of-$d$ policy has been proven to be heavy-traffic delay optimal via Lyapunov drift condition in \cite{maguluri2014heavy}. Here, we will present another proof based on our main result.  
	\begin{proof}
		Note that when there are no ties in queue lengths, the dispatching distribution $\mathbf{P}(t)$ under the power-of-$d$ policy satisfies that for all $t \ge 0$ 
		\begin{equation}
		\label{eq:power-of-2}
			P_n(t) = {\binom{N-n}{d-1}} \bigg/{\binom{N}{d}}, 1\le n \le N-d+1,
		\end{equation}
		 and $P_n(t) = 0$, for all $n > N-d+1$. This comes from the fact that all arrivals are dispatched to the queue with shortest queue length among $d$ uniformly randomly sampled servers. Thus, if the queue $\sigma_t(n)$ is the one with shortest queue length among $d$ samples, the remaining $d-1$ samples must come from queues $\sigma_t(n+1)$, $\sigma_t(n+2)$, $\ldots \sigma_t(N)$ if all the queue lengths are different in $Z(t)$. If there are ties in queue lengths, it can be easily shown that this $\mathbf{P}(t)$ is equivalent in inner product to other dispatching distributions under any given $Z(t)$ in which there are ties in queue lengths. Thus, for all $Z(t)$, the dispatching distribution or its equivalent distribution in inner product under the power-of-$d$ policy can be fully determined by Eq.~\eqref{eq:power-of-2}. Since $P_n(t)$ is decreasing and $\mu_{\sigma_t(n)} = \mu$ under homogeneous servers, $\mathbf{P}(t)$ is a tilted distribution. Note that $\Delta_1(t) = \frac{d-1}{N}$ and $\Delta_N(t) = -\frac{1}{N}$. As a result, $\mathbf{P}(t)$ is a $\delta$-tilted distribution with $\delta = \frac{1}{N}$ for all $Z(t)$, which implies that power-of-$d$ policy is included in the class $\Pi$ for homogeneous servers.
	\end{proof}

\subsubsection{Join-idle-queue policy is not in {$\Pi$}}
\label{sec:importance_JIQ}

Now we will show that the JIQ policy is not contained in the class $\Pi$ because it is in fact not heavy-traffic delay optimal in homogeneous servers. For the heterogeneous case, it is well-known that JIQ is not heavy-traffic delay optimal since it is not even throughput optimal for a fixed number of servers \cite{stolyar2015pull}. However, for the homogeneous case, it is still open whether it is heavy-traffic optimal for a fixed number of servers, although it has been shown to be heavy-traffic optimal when the number of servers goes to infinity in the Halfin-Whitt regime \cite{mukherjee2016universality}. It turns out that when the number of servers is fixed, there exists a class of arrival process, under which the delay performance of JSQ is strictly better than that of JIQ in the heavy-traffic limit. More specifically, as shown in the proof of Theorem \ref{THM:JIQ}, for a class of arrival process, the delay under JIQ cannot achieve the common lower bound (i.e., the resource-pooled system), while JSQ can, which  implies that JIQ is not heavy-traffic delay optimal for homogeneous case.

In particular, we consider the two-server case with constant service process with rate $1$. We are able to find a class of arrival process such that Eq. \eqref{eq:heavy_traffic_def}  under JIQ does not hold. Let us first introduce the class of arrival process $\mathcal{A}$.

\begin{definition}
	An arrival process $A_\Sigma(t)$ is said to belong to $\mathcal{A}$ if 
	\begin{enumerate}[(i)]
		\item $\mathbb{P}(A_\Sigma^{(\epsilon)}(t) = 0) = p_0$, which is a constant independent of $\epsilon$.
		\item $(\sigma_\Sigma^{(\epsilon)})^2$ approaches a constant $\sigma_\Sigma^2$ which satisfies that $\sigma_\Sigma^2 > \frac{8}{p_0}-4$.
	\end{enumerate}
\end{definition}


More concretely, we are able to show the following result.

\begin{theorem}
\label{THM:JIQ}
	Consider a load balancing system with two homogeneous servers, JIQ is not heavy-traffic delay optimal in this case.
\end{theorem}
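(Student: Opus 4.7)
The plan is to invoke Remark~\ref{rem:not_optimal}: exhibit a specific pair of arrival and service processes under which the heavy-traffic identity~\eqref{eq:heavy_traffic_def} fails under JIQ. I take $N=2$ with deterministic service $S_n(t)\equiv 1$ (so $\mu_\Sigma=2$ and $\nu_n^2=0$) and an arrival process $A_\Sigma^{(\epsilon)}\in\mathcal{A}$. Standard single-server heavy-traffic analysis gives $\lim_{\epsilon\downarrow 0}\epsilon\ex{\overline{q}^{(\epsilon)}}=\sigma_\Sigma^2/2$ for the resource-pooled system, so the target is to prove $\liminf_{\epsilon\downarrow 0}\epsilon\ex{\sum_n\overline{Q}_n^{(\epsilon)}}>\sigma_\Sigma^2/2$ under JIQ.

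Two Lyapunov-drift identities in steady state will do the work. First, setting the drift of $V_\parallel(\Q)=\inner{\Q}{\mathbf{c}}^2$ to zero and using the fact that $\inner{\Q}{\UU}=0$ always holds (since $U_n>0$ forces $Q_n=0$), I obtain the exact identity
\[
\epsilon\ex{\sum_n\overline{Q}_n^{(\epsilon)}}=T_1^{(\epsilon)}+\tfrac{1}{2}\,\ex{\left(\overline{A}_\Sigma^{(\epsilon)}-\overline{S}_\Sigma^{(\epsilon)}+\overline{U}_\Sigma^{(\epsilon)}\right)^2},
\]
where $T_1^{(\epsilon)}:=\ex{\sum_n\overline{Q}_n^{(\epsilon)}\,\overline{U}_\Sigma^{(\epsilon)}}$. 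Since $\overline{U}_\Sigma^{(\epsilon)}\in\{0,1,2\}$ with mean $\epsilon$, the second summand converges to $\sigma_\Sigma^2/2$, so everything reduces to $\liminf_{\epsilon\downarrow 0}T_1^{(\epsilon)}>0$. A careful enumeration of the JIQ dispatching rule shows that the configurations of $(\overline{\Q},\overline{A}_\Sigma)$ that contribute to $T_1^{(\epsilon)}$ are exactly those with one empty queue, one positive queue, \emph{and} $\overline{A}_\Sigma=0$; using independence of $\overline{A}_\Sigma$ from $\overline{\Q}$ at the same slot together with JIQ's server-symmetry on homogeneous servers, this collapses to
\[
T_1^{(\epsilon)}=2p_0\,\ex{\overline{Q}_1^{(\epsilon)}\,\mathbf{1}\{\overline{Q}_2^{(\epsilon)}=0\}}.
\]

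Next I apply a perpendicular drift identity, taking $V_c(\Q)=(Q_1-Q_2)^2$ and writing $D=Q_1-Q_2$. Steady state gives $\ex{(\Delta\overline{D})^2}=-2\ex{\overline{D}\,\Delta\overline{D}}$, and a case-by-case computation of $\ex{\Delta D\mid\Q(t)}$ under JIQ shows that it vanishes when both queues are busy (symmetric random routing) and equals $\mp(p_0+\lambda_\Sigma^{(\epsilon)})$ on the ``one queue empty'' events that drive $T_1^{(\epsilon)}$, because in those states every arriving batch of size $k$ pushes $D$ toward zero by exactly $k$ while every empty-arrival slot does so by exactly $1$. Combining with the previous formula yields
\[
T_1^{(\epsilon)}=\frac{p_0\,\ex{(\Delta\overline{D})^2}}{2\bigl(p_0+\lambda_\Sigma^{(\epsilon)}\bigr)}.
\]
Finally, since each individual queue has load $1-\epsilon/2$ under JIQ's server-symmetric dispatch, $\mathbb{P}(\overline{Q}_n^{(\epsilon)}=0)\le\epsilon/2$, so the ``both queues busy'' event has probability $1-O(\epsilon)$ and by itself contributes $\ex{A_\Sigma^2}(1-O(\epsilon))\to\sigma_\Sigma^2+4$ to $\ex{(\Delta\overline{D})^2}$. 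Therefore $\liminf_{\epsilon\downarrow 0}T_1^{(\epsilon)}\ge p_0(\sigma_\Sigma^2+4)/[2(p_0+2)]$, which is strictly positive and in fact comfortably so under the hypothesis $\sigma_\Sigma^2>8/p_0-4$, completing the contradiction.

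The main technical obstacle is justifying that these formal drift equalities and the ``both busy dominates'' estimate pass through the $\epsilon\downarrow 0$ limit uniformly. For that I would invoke the moment bound of Lemma~\ref{LEMMA:THTROUGHPUT}—JIQ is throughput optimal on a fixed number of homogeneous servers—to control the tails of $\overline{\Q}^{(\epsilon)}$ uniformly in $\epsilon$, verify the integrability of $D\,\Delta D$ and $(\Delta D)^2$ in steady state so that the zero-drift identities are legitimate, and then apply dominated convergence to pass to the limit. The combinatorial case analysis tying $T_1^{(\epsilon)}$ to $\ex{\overline{Q}_1\,\mathbf{1}\{\overline{Q}_2=0\}}$ is conceptually clean but requires careful bookkeeping of how the memory state $m(t)$ interacts with arrivals, services, and unused services.
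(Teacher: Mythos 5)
Your proposal is correct in substance but takes a genuinely different route from the paper's. The paper argues by contradiction: assuming heavy-traffic optimality forces the cross term $\overline{B}^{(\epsilon)}=\ex{\norm{\overline{\Q}^{(\epsilon)}(t+1)}_1 \norm{\overline{\UU}^{(\epsilon)}(t)}_1}$ to vanish, and the zero drift of $(Q_1-Q_2)^2$ combined with the one-step coupling inequality $\mathbb{P}(E_{3,k-1})\ge p_0\,\mathbb{P}(E_{1,k})$ then yields $\liminf_{\epsilon\downarrow0}\overline{B}^{(\epsilon)}\ge\frac{1}{2}(\sigma_\Sigma^2+4-\frac{8}{p_0})>0$, which is exactly where the variance condition in $\mathcal{A}$ is consumed. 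You instead solve for the cross term \emph{exactly}: the same two Lyapunov functions, $\norm{\Q}_1^2$ and $(Q_1-Q_2)^2$, give two steady-state identities which, together with the observation that $U_n>0$ forces $Q_n=0$ and $A_\Sigma=0$ (so $T_1^{(\epsilon)}=2p_0\ex{\overline{Q}_1^{(\epsilon)}\mathbbm{1}\{\overline{Q}_2^{(\epsilon)}=0\}}$ by server symmetry and the independence of $A_\Sigma(t)$ from $\Q(t)$), pin down $T_1^{(\epsilon)}=p_0\ex{(\Delta\overline{D})^2}/\bigl(2(p_0+\lambda_\Sigma^{(\epsilon)})\bigr)$ with no contradiction hypothesis. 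I verified the case analysis of $\ex{\Delta D\mid \Q(t)}$ and the resulting algebra; they are consistent with the paper's computation of $\mathcal{T}_1$, and your route is quantitatively sharper: it gives $\liminf_{\epsilon\downarrow 0}T_1^{(\epsilon)}\ge p_0(\sigma_\Sigma^2+4)/(2(p_0+2))>0$ for \emph{any} constant $p_0>0$, so the condition $\sigma_\Sigma^2>\frac{8}{p_0}-4$ is not actually needed. (The paper's estimate is lossy because it bounds $\sum_k\mathbb{P}(E_{3,k-1})$ by $1$ rather than by $\ex{\overline{U}_2^{(\epsilon)}}=\epsilon/2$.) Both arguments rest on the same structural facts about JIQ — symmetric random routing when both queues are busy, all arrivals routed to the idle queue when exactly one is idle — and on the second-moment bounds supplied by Lemmas~\ref{LEMMA:THTROUGHPUT} and~\ref{lem:pre_JIQ}.

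One step needs repair: the bound $\mathbb{P}(\overline{Q}_n^{(\epsilon)}=0)\le\epsilon/2$ is false as stated. In this discrete-time model the pre-arrival event $\{Q_n(t)=0\}$ is not the event that the server is unused during slot $t$ (a deterministic queue at load one has $Q_n(t)\equiv 0$ yet is never idle), so ``per-server load $1-\epsilon/2$'' only gives $\ex{\overline{U}_n^{(\epsilon)}}=\epsilon/2$. The fix is one line: $\epsilon/2=\mathbb{P}(\overline{Q}_n^{(\epsilon)}=0,\overline{A}_n^{(\epsilon)}=0)\ge p_0\,\mathbb{P}(\overline{Q}_n^{(\epsilon)}=0)$, hence $\mathbb{P}(\overline{Q}_n^{(\epsilon)}=0)\le\epsilon/(2p_0)$, which still yields $\mathbb{P}(\text{both queues busy})=1-O(\epsilon)$ and leaves your lower bound on $\ex{(\Delta\overline{D})^2}$, and therefore your conclusion, intact.
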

\begin{proof}
	See Section \ref{sec:proof_JIQ} of the Appendix.
\end{proof}


\subsection{Designing  New Policies in {\large$\Pi$}}
	It has been shown in the last section that the state-of-art push-based policies, e.g., JSQ and power-of-$d$, are all included in $\Pi$. Recall that, both of them need to sample the queue length information upon each new arrival, which directly results in the following two problems.

	\begin{enumerate}[(a)]
		\item The message exchange rate between dispatcher and servers is high, especially for join-shortest-queue.
		\item Each arrival has to wait for completion of the message exchange before being dispatched, which increases the actual response time for each job.
	\end{enumerate}
	To resolve the problem, the pull-based policies, join-idle-queue (JIQ) in \cite{lu2011join} and an equivalent algorithm called PULL in \cite{stolyar2015pull} are proposed, which have been shown to enjoy low message rate (at most one message per job) and have a better performance than the power-of-$d$ policy from light to moderate loads. However, as shown via numerical results in  \cite{lu2011join} and the proof of Theorem \ref{THM:JIQ} in this paper, when the load becomes high, the performance of JIQ is much worse than the power-of-$d$ policy, which motivates us 
	to {design policies that enjoy low message rates, while still guaranteeing throughput and heavy-traffic delay optimality}. 

	\begin{table*}

	\centering

	\caption{Summary of load balancing policies}\label{tab:1}


	\begin{tabular}{|c|c|c|c|c|c|c|}
	\hline
	\multirow{2}{*}{Policy}& \multirow{2}{*}{Message}&\multicolumn{2}{c|}{Throughput-Optimal} & \multicolumn{2}{c|}{Heavy-traffic Delay-Optimal} \\
	\cline{3-6}
	 & &  Homogeneous& Heterogeneous & Homogeneous& Heterogeneous\\
	\hline
	 Random& 0 & $\surd$& $\times$ &$\times$ &$\times$\\
	\hline
	 JSQ \cite{eryilmaz2012asymptotically}& 2$N$& $\surd$&$\surd$ &$\surd$ &$\surd$\\
	\hline
	 Power-of-$d$ \cite{maguluri2014heavy},~\cite{mitzenmacher2001power}& 2$d$ &$\surd$ & $\times$&$\surd$ &$\times$\\
	 \hline
	JIQ\cite{stolyar2015pull},~\cite{lu2011join} & $\le$ 1 & $\surd$& $\times$& $\times$&$\times$\\
	 \hline
	 JBT-$d$ &$\le \frac{N+2d}{T}+1$  & $\surd$&$\times$ &$\surd$&$\times$\\
	 \hline
	 JBTG-$d$ & $\le \frac{N+2d}{T}+1$ & $\surd$ &$\surd$  & $\surd$ &$\surd$ \\
	 \hline
	 \multicolumn{6}{l}{\textsuperscript{*}\footnotesize{\small The message rate for JBT-$d$ and JBTG-$d$ in this table is just a crude upper bound. When the new threshold is larger }}\\
	 \multicolumn{6}{l}{\footnotesize{ \small  than the old one, there is no need for the servers that are already recorded in memory to resend pull-messages.}}\\
	\end{tabular}
	\end{table*}

	\begin{definition}
	 Join-below-threshold-$d$ (JBT-$d$) policy is composed of three components: 

	\begin{enumerate}
		\item A threshold is updated every $T$ units of time by uniformly at random sampling $d$ servers, and taking the shortest queue length among the $d$ servers as the new threshold.
		\item Each server sends its ID to the dispatcher when its queue length is not larger than the threshold for the first time.
		\item Upon a new arrival, the dispatcher checks the available IDs in the memory. If they exist, it removes one uniformly at random, and sends all the new arrivals to the corresponding server. Otherwise, all the new arrivals will be dispatched uniformly at random to one of the servers in the system.
	\end{enumerate}	
	\end{definition}

	To be more specific, we explain the connections of the three components as follows. At the beginning of each time slot, the dispatcher immediately routes the new arrivals to a server only based on its memory state, i.e., no sampling. If there are available IDs  in memory, it removes one uniformly at random and sends the newly arrived jobs to the corresponding server. Otherwise, it sends the new jobs to a server selected uniformly at random among all the servers. At the end of each time slot, if there is no update of threshold, each server will immediately report its ID if its queue length is not larger than the threshold for the first time, i.e., only reporting once for each server before dispatched. Otherwise, the dispatcher updates the threshold by uniformly at random sampling $d$ servers, and the new threshold is set as the shortest queue length among $d$ samples. Then, each server decides to whether or not to report based on its queue length and the new threshold, using the same way as before.

 	\begin{definition}
		The JBT-$d$ policy can be easily generalized for heterogeneous servers, denote by JBTG-$d$, as follows. The only difference is that the dispatching probability distribution for the case of non-empty and empty memory is given by 
		\begin{equation*}
			\boldsymbol{\psi}_i(t) := \frac{\mu_i}{\sum_{j \in m(t)}\mu_j}{\mathbbm{1} }_{\{ i \in m(t)\}} \text{ and } \boldsymbol{\phi}_i(t) := \frac{\mu_i}{\mu_\Sigma} \text{ for all } i
		\end{equation*}
		That is, the probability to be selected for a server that has its ID in memory is weighted by its service rate. This can be easily done by requiring the server to report its service rate $\mu_n$ as well as its ID. 

	\end{definition}

	In the following, we will show that JBT-$d$ and JBTG-$d$ belong to $\Pi$, and hence throughput and heavy-traffic delay optimal. More specifically, we have the following result.

	\begin{proposition}
	\label{prop:jbt}
		For any finite $T$ and $d\ge 1$, the following two assertions are true:
		\begin{enumerate}
			\item JBT-$d$ is in $\Pi$ for homogeneous servers, and hence throughput and heavy-traffic delay optimal.
			\item JBTG-$d$ is in $\Pi$ for both homogeneous and heterogeneous servers, and hence throughput and heavy-traffic delay optimal.
		\end{enumerate}
	\end{proposition}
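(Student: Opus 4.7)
The plan is to verify that JBT-$d$ and JBTG-$d$ satisfy both defining conditions of class $\Pi$; Theorem~\ref{THM:ACLASS} then delivers throughput and heavy-traffic delay optimality. I align the period-$T$ blocks in the definition of $\Pi$ with the threshold-update epochs of JBT-$d$, and take $t_k = kT$ (the slot immediately following the $k$th update) as the distinguished slot in block $k$.

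The central structural step is a memory-state invariant: at the beginning of every slot $t$,
\[
m(t) = \{\,n \in \mathcal{N} : Q_n(t) \le Q_{\mathrm{thr}}^{(t)}\,\},
\]
where $Q_{\mathrm{thr}}^{(t)}$ is the threshold active in slot $t$. This is proved by induction on $t$ using the protocol rules: within a block, non-dispatched servers receive no arrivals, so their queue lengths are non-increasing and servers in memory stay in memory until dispatched; conversely, the ``report-on-first-drop'' rule forces any server whose queue has just fallen to $\le Q_{\mathrm{thr}}^{(t)}$ (either through service or through a threshold change at the end of the previous slot) to re-report immediately. Given this invariant, in the sorted order $\sigma_t$ the memory $m(t)$ occupies exactly positions $1,\ldots,|m(t)|$, with no boundary tie, since the threshold cut strictly separates $m(t)$ from its complement. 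The JBT-$d$ dispatching distribution then satisfies $P_n(t) = 1/|m(t)| \ge 1/N$ for $n \le |m(t)|$ and $P_n(t) = 0$ otherwise; combined with $\mu_n/\mu_\Sigma = 1/N$ this yields $\Delta_n(t) \ge 0$ on the prefix and $\Delta_n(t) \le 0$ on the suffix, i.e.\ tilted. The empty-memory case is uniform with $\Delta \equiv 0$, trivially tilted. This disposes of condition~(i).

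For condition~(ii) at $t_k$, the threshold was just set to the minimum of $d$ sampled queue lengths, so the sampler attaining this minimum lies in $m(t_k)$ and hence $|m(t_k)|\ge 1$. If $1 \le |m(t_k)| \le N-1$ a direct computation gives $\Delta_1(t_k) \ge 1/(N-1) - 1/N = 1/(N(N-1))$ and $\Delta_N(t_k) = -1/N$, so $\mathbf{P}(t_k)$ is $\delta$-tilted with $\delta = 1/(N(N-1))$, independent of $\epsilon$. The subtle case is $|m(t_k)| = N$: every queue satisfies $Q_n(t_k) \le Q_{\mathrm{thr}}^{(t_k)}$, while $Q_{\mathrm{thr}}^{(t_k)}$ itself equals some queue length; chaining these inequalities forces $\max_n Q_n(t_k) = \min_n Q_n(t_k)$, so all queue lengths coincide. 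The sorted queue length vector is then a multiple of $\mathbf{1}$, and every probability distribution on $\{1,\ldots,N\}$ is equivalent in inner product to $\mathbf{P}(t_k)$. I can therefore substitute the JSQ-like distribution $\hat P_1 = 1$, $\hat P_n = 0$ for $n \ge 2$, which is $\delta$-tilted with $\delta = 1/N$. Taking the worst case gives a uniform $\delta = 1/(N(N-1))$.

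For JBTG-$d$ the argument is structurally identical, with the service-weighted distribution $p_n(t) = \mu_n / \sum_{j \in m(t)}\mu_j$ on memory; $\Delta_n(t) = \mu_{\sigma_t(n)}(\mu_\Sigma - \sum_{j\in m(t)}\mu_j)/(\mu_\Sigma \sum_{j\in m(t)}\mu_j) \ge 0$ on the prefix and $-\mu_{\sigma_t(n)}/\mu_\Sigma \le 0$ on the suffix, and the $\delta$-tilted bound becomes $\delta = \mu_{\min}^2/\mu_\Sigma^2 > 0$, still independent of $\epsilon$. The main obstacle is the clean verification of the memory invariant, which rests on carefully tracking the report/dispatch timing across slot boundaries and across threshold updates; the secondary subtlety is recognizing that the degenerate $|m(t_k)| = N$ case forces all queue lengths to coincide, which is exactly the structural fact that licenses the equivalence-in-inner-product substitution used to close out condition~(ii).
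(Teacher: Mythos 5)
There is a genuine gap in your verification of condition (ii), located in the case $|m(t_k)|=N$. You argue that if every server is in memory then ``chaining'' $Q_n(t_k)\le Q_{\mathrm{thr}}^{(t_k)}$ with the fact that the threshold equals some queue length forces all queue lengths to coincide. This only shows that the threshold equals the \emph{maximum} queue length; it says nothing about the minimum. The threshold is the minimum over the $d$ \emph{sampled} servers, and for $d<N$ the sample can miss every short queue: with $N=3$, $d=2$ and queue lengths $(1,5,5)$, sampling servers $2$ and $3$ gives threshold $5$, all three servers report, $|m(t_k)|=N$, yet the queues are not equal. In that realization the dispatching distribution is exactly uniform, so $\Delta_n(t_k)=0$ for all $n$; it is tilted but not $\delta$-tilted for any $\delta>0$, and since the sorted queue vector is non-constant its inner product strictly exceeds that of any $\delta$-tilted distribution, so no equivalence-in-inner-product substitution can rescue it. Your per-realization case split therefore cannot close condition (ii). The proposition survives only because one must average over the randomness of the $d$-sample given the queue state: conditioned on queues that are not all equal, the sample contains a non-maximal server with probability at least $d/N$, in which event the longest queue is excluded from memory, and the resulting \emph{mixture} dispatching distribution satisfies $P_N(t_k)\le(1-d/N)/N<1/N$ and $P_1(t_k)\ge 1/N + d/(N^2(N-1))$, hence is $\delta$-tilted with $\delta$ independent of $\epsilon$. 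This averaging over the sampling (the paper's $\tilde p_i(t)=\Pr(|m(t)|=i)$ computation, leading to $P_n(t)=\sum_{i=n}^N\tilde p_i(t)/i$) is the missing idea; the same repair is needed for your JBTG-$d$ argument.

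A secondary, less fatal concern: your two-sided memory invariant $m(t)=\{n:Q_n(t)\le Q_{\mathrm{thr}}^{(t)}\}$ is stronger than what the protocol description guarantees. The $\supseteq$ direction requires a server that was just dispatched to (and hence removed from memory) to re-report immediately whenever its post-arrival queue length is still below the threshold, and requires the memory to be purged of over-threshold servers when the threshold decreases; the paper's ``report when not larger than the threshold for the first time'' rule is ambiguous on both points. The paper sidesteps this by using only the one-sided structural fact that memory is prefix-closed in the sorted order (if $\sigma_t(n+1)\in m(t)$ then $\sigma_t(n)\in m(t)$), which is all that condition (i) needs. You should either weaken your invariant to that statement or explicitly pin down the reporting semantics you are assuming.
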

	\begin{proof}[Proof sketch of Proposition \ref{prop:jbt}] 
		Let us look at JBT-$d$ for some key insights behind this proof. In order to show it is in $\Pi$, we only need to show that it satisfies the two conditions (i) and (ii). For the condition (i), we will show that at any time slot $t$, the dispatching is no worse than the random routing. For the condition (ii), we will show that at time slots $rT+1$, $r \in \{0,1,2,\ldots\}$, the dispatching decision is strictly better than the random routing.

	Note that under the JBT-$d$ policy, if the ID of the server $\sigma_t(n+1)$ is in $m(t)$, we must have that the ID of the server $\sigma_t(n)$ is also in $m(t)$ as the permutation is in the non-decreasing order of the queue length. Denote by $\tilde{p}_k(t)$ the probability that there are $k$ IDs in the memory $m(t)$ for time $t$, i.e., $\tilde{p}_k(t) = \Pr(|m(t)| = k)$. Then, the probability for the server $\sigma_t(n)$ to be selected at time $t$, i.e., $P_n(t)$ is given by
		\begin{equation}
		\label{eq:important1}
			P_n(t) = \sum_{i=n}^N \tilde{p}_i(t)\frac{1}{i}.
		\end{equation}
		This is true since for the server $\sigma_t(n)$ to be selected, there should be at least $n$ IDs in memory, i.e., $|m(t)| \ge n$ and in each case the probability for the server $\sigma_t(n)$ to be chosen is $\frac{1}{|m(t)|}$. Therefore, we can see that the probability of $P_n(t)$ satisfies 
		\begin{equation}
		\label{eq:Pn}
			P_1(t) \ge P_2(t) \ge \ldots \ge P_N(t),
		\end{equation}
		which directly implies that for all $t\ge0$ there exists a $k$ between $2$ and $N$ such that $\Delta_n(t) = P_n(t) - \frac{1}{N} \ge 0$ for all $n<k$ and  $\Delta_n(t) \le 0$ for all $n\ge k$. Therefore, condition (i) of $\Pi$ is satisfied.

		For condition (ii), we will show that there exists a lower bound for $\delta$ such that $\mathbf{P}(rT+1)$ (or an inner product equivalent distribution when there are ties in queue lengths), $r \in \{0,1,2,\ldots\}$ is at least a $\delta$-tilted distribution. In this case, we need only to show that $P_N(rT+1)$ is strictly less than $\frac{1}{N}$ for all the system state $Z(rT+1)$. 

		The full proof is presented in Section \ref{sec:52}.
	\end{proof}
	\subsection{Features of JBT-d}
This section summarizes the  main features of JBT-d policy and compares it with existing policies in Table \ref{tab:1}. In particular, we compare the number of messages for each new arrival under different policies. For push-based polices, e.g., JSQ and power-of-$d$, there are $d$ query and $d$ response messages for each new arrival ($d=N$ for JSQ policy). For JIQ policy, for each new arrival, it requires at most one pull-message since when there are no pull-messages in memory, the arrival is dispatched randomly without costing any pull-message. Similarly, our JBT-$d$ policy requires $2d$ push-messages every $T$ time slots to update the threshold. Due to the threshold update, the old pull-messages may be discarded, which is upper bounded by $N$. Hence, the pull-message for each new arrival under JBT-$d$ is at most $1+\frac{2d+ N}{T}$.

In sum, the JBT-$d$ policy has the following nice features: a) It is throughput and heavy-traffic delay optimal. b) It is able to guarantee heavy-traffic delay optimal with very low complexity when $T$ is relatively large. c) The computation overhead is small. It only needs to keep a list of the available IDs and choose randomly. d) The arrival is immediately dispatched, i.e., there is no dispatching delay as compared to push-based policies such as JSQ and Power-of-$d$.



It is worth pointing that by just changing the way of updating the threshold in JBT-d, we can design other new policies which also enjoy the nice features above. For example, it can be easily shown via similar arguments that if the threshold is updated by sampling all the servers and taking the average value of the queue length as the new threshold, this corresponding new policy is still in the class $\Pi$.
\section{Numerical Results}
\label{sec:simulation}
	In this section, we use simulations to compare our proposed policies JBT-$d$ and JBTG-$d$ with join-shortest-queue (JSQ), join-idle-queue (JIQ), power-of-$d$ (SQ($d$)) and power-of-$d$ with memory (SQ($d$,$m$)). The power-of-$d$ with memory policy (SQ($d$,$m$)) improves power-of-$d$ by using extra memory to store the $m$ shortest queues sampled at the previous time slot \cite{mitzenmacher2002load}. 
	
	We compare the throughput performance, delay performance, heavy-traffic delay performance and message overhead performance under various arrival and service processes as well as different system sizes. 
	Moreover, the $95\%$ confidence intervals for all the simulation results can be found in Section \ref{sec:CI} of the Appendix, which justify the accuracy of the simulation results. Some of the confidence intervals are also included in figures and similar accuracy goes for other points as well.
	The exogenous arrival $A_{\Sigma}(t)$ and potential service $S_n(t)$ are drawn from a Poisson distribution with rate $\lambda_{\Sigma}$ and $\mu_n$ for each time slot unless otherwise specified.
	The traffic load is equal to $\rho = \lambda_{\Sigma}/\mu_{\Sigma}$. The parameter $T$ is the threshold update interval for JBT-$d$ and JBTG-$d$. 


	Before we dive into each case, let us first summarize some key observations in these simulation results.
	\begin{enumerate}[(i)]
		\item \textbf{Throughput performance:}
		\begin{enumerate}[(a)]
			\item Our proposed policy JBT-$d$ continues to stabilize all the considered loads in heterogeneous systems under all the different settings.
			\item JIQ and SQ($d$) cannot stabilize the system when the load is high in all the cases.
			\item JIQ appears to have a larger capacity region as the number of servers increases. This agrees with the theoretical result in \cite{stolyar2015pull}.
		\end{enumerate}
		\item \textbf{Delay performance:}
		\begin{enumerate}[(a)]
			\item Our proposed policy JBT-$d$ continues to have good performance from light to heavy traffic among all the cases.
			\item As the system size increases, JBT-$d$ achieves the same performance as JSQ for a larger range of loads. Meanwhile, the gains of JBT-$d$ over SQ($d$) and SQ($d$,$m$) become larger as the system size increases.
			\item The gain of JBT-$d$ over JIQ decreases as the system size increase. This is also intuitive since as $N$ goes to infinity, it is more likely to find an idle server, which results in the fact that JIQ is heavy-traffic delay optimal in the Halfin-Whitt regime \cite{mukherjee2016universality}.
			\item The gain of JBT-$d$ over JIQ increases when burstness is introduced in arrival or service process. This agrees with the insight in the proof of Theorem \ref{THM:JIQ} that larger variance of arrival or service process will degrade the performance of JIQ.
		\end{enumerate}

		\item \textbf{Message overhead performance:}
		\begin{enumerate}[(a)]
			\item Our proposed policy JBT-$d$ continues to have a low message overhead among all the cases.
			\item Push-based policies such as SQ($d$) and SQ($d$,$m$) have to increase their message overhead linearly with respect to $d$ to achieve good delay performance as the system size increases. In contrast, our proposed JBT-$d$ is able to achieve good performance with a message rate that is less than $1$ for all the cases when $T$ is large.
		\end{enumerate}

		\item \textbf{Confidence interval:}
		\begin{enumerate}[(a)]
			\item Our proposed policy JBT-$d$ continues to have good $95\%$ confidence intervals in all the various settings.
		\end{enumerate}
	\end{enumerate}


\subsection{Throughput Performance}

	We investigate the throughput region of different load balancing policies in the case of heterogeneous servers. In particular, we consider the case that the system consisting of two server pools each with five servers and the rates are $1$ and $10$, respectively. A turning point in the curve indicates that the load approaches the throughput region boundary of the corresponding policy. 

	\begin{figure}[t]
	\graphicspath{{./Figures/}}
		\centering
		\includegraphics[width=3.2in]{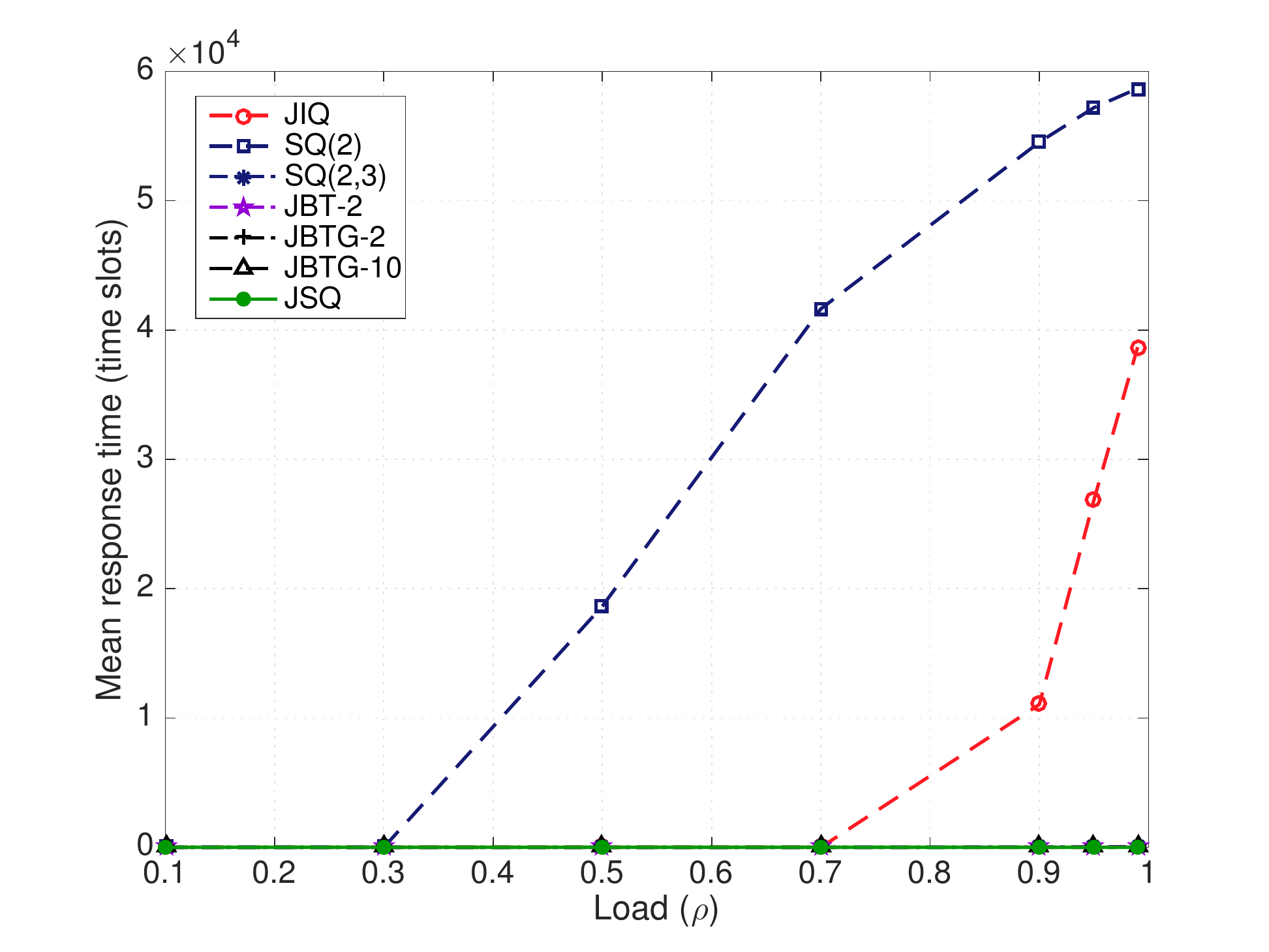}
		\caption{Throughput performance in $10$ heterogeneous servers.}\label{fig:throughput_N10}
	\end{figure}

	Figure \ref{fig:throughput_N10} shows that the system becomes unstable when $\rho > 0.5$ under the policy power-of-2 (SQ($2$)), and it becomes unstable under JIQ when $\rho > 0.9$. In contrast, our proposed JBTG-$d$ policy remains stable for all the considered loads which agrees with the theoretical results. It can be seen that JBT-$2$ is also able to stabilize the system for all the considered loads in this case. Note that the system remains stable under the power-of-$2$ with memory  policy SQ($2$,$3$), which demonstrates the benefit of using memory to obtain maximum throughput as first discussed in \cite{shah2002use}.

	We further provide additional simulation results on throughput performance under different arrival and service process as well as different system sizes in  Section \ref{sec:addition} of the Appendix. 

\subsection{Delay Performance}
	We investigate the mean response time under different load balancing policies in homogeneous servers with different system sizes and various arrival and service processes.  The time interval for threshold update of  JBT-$d$ is set $T = 1000$.

	Let us first look at the regime when $\rho$ is from $0.3$ to $0.99$, which ranges from light traffic to heavy traffic. Figure \ref{fig:delay_N10} shows that our proposed policy JBT-$d$ outperforms both power-of-$2$ and power-of-$2$ with memory (SQ($2$,$3$), which uses the same amount of memory as in JBT-$d$) for nearly the whole regime. Moreover, JBT-$d$ policy achieves nearly the same response time of JIQ when the load is not too high. However, as the load becomes heavier, the performance of JIQ gets worse and worse, and its mean response time is as large as two times of the response time under JBT-$d$ policy when the load is $0.99$.

\begin{figure}[t]
\graphicspath{{./Figures/}}
\begin{minipage}{3.2in}
\begin{center}
\includegraphics[width=3.2in]{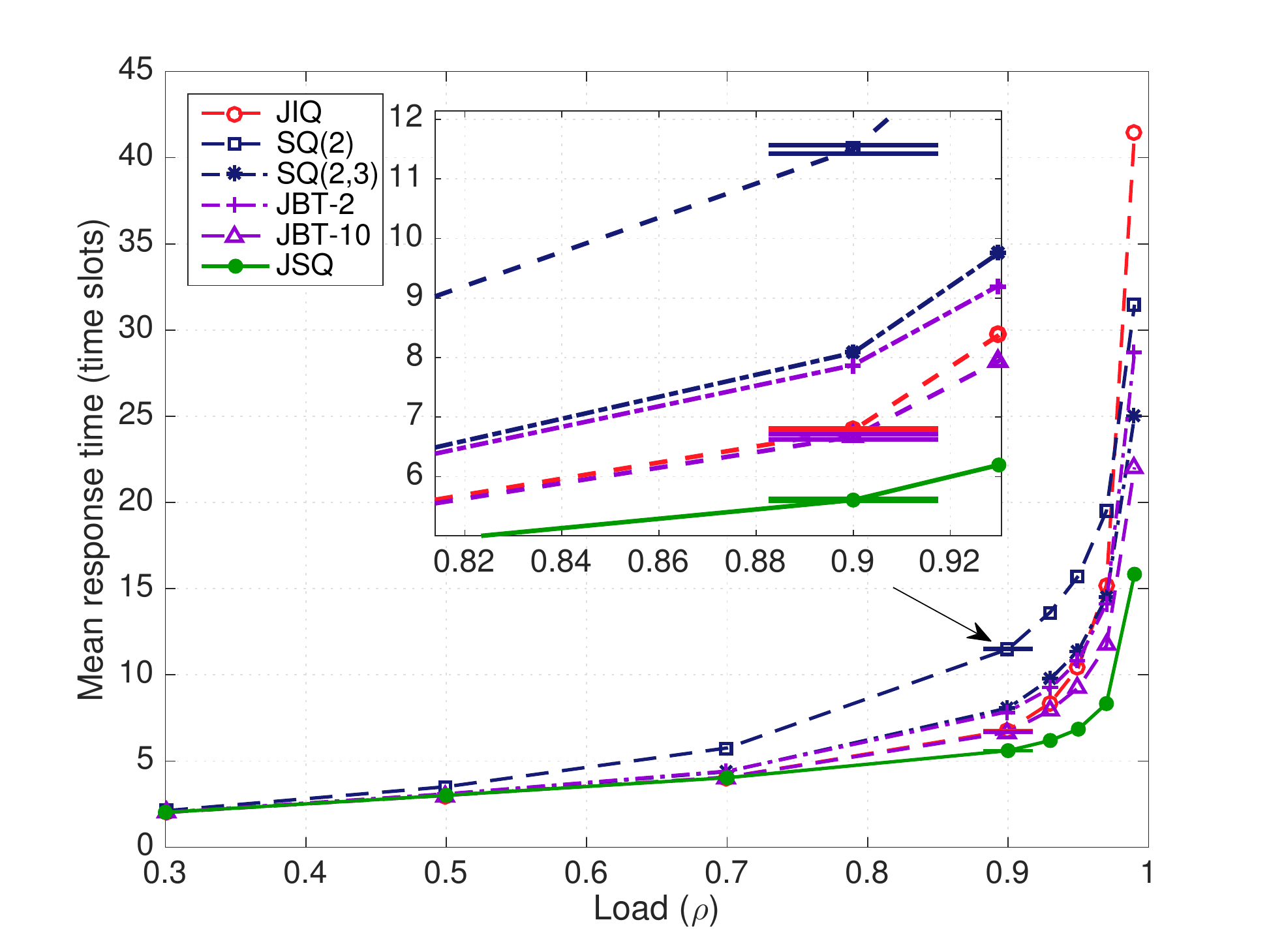}
\caption{Delay performance under $10$ homogeneous servers.}
\label{fig:delay_N10}
 \end{center}
\end{minipage}
\hfill
\begin{minipage}{3.2in}
\begin{center}
\includegraphics[width=3.2in]{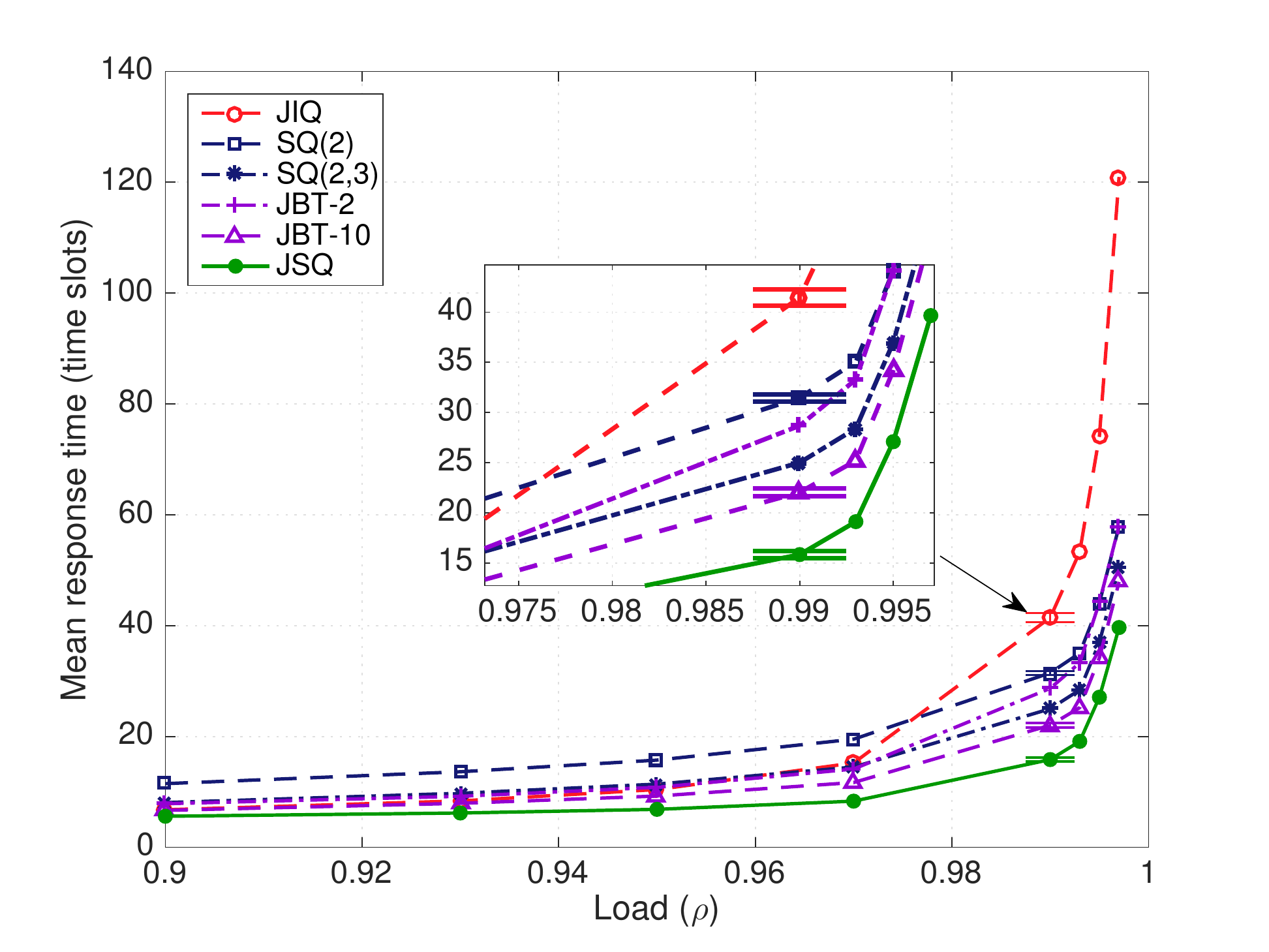}
\caption{Heavy-traffic delay performance under $10$ homogeneous servers.}
\label{fig:heavydelay_N10}
 \end{center}
\end{minipage}
\end{figure}

\begin{figure}[t]
\graphicspath{{./Figures/}}
\begin{minipage}{3.2in}
\begin{center}
\includegraphics[width=3.2in]{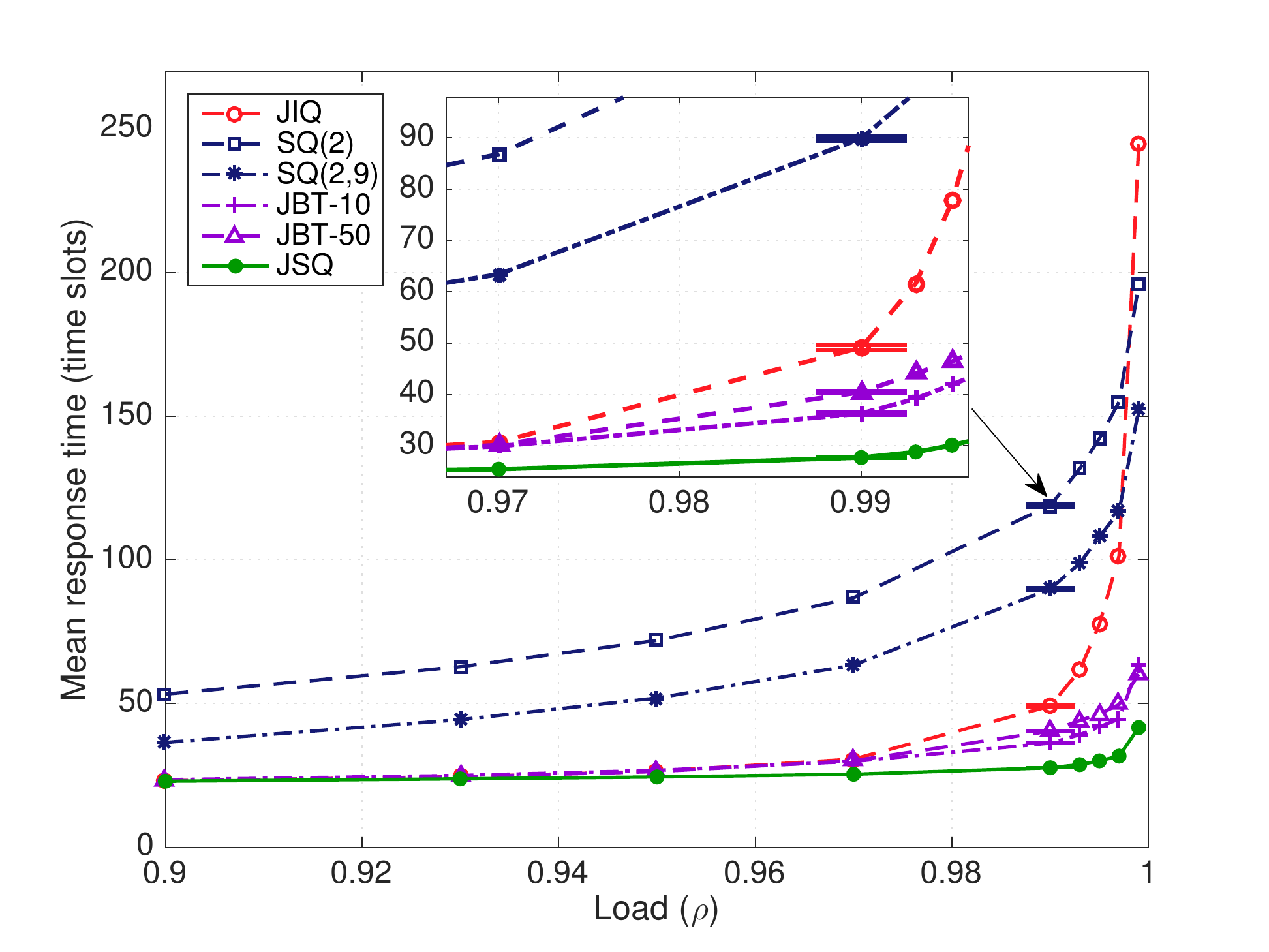}
\caption{Heavy-traffic delay performance under $50$ homogeneous servers.}
\label{fig:heavydelay_N50_sim}
 \end{center}
\end{minipage}
\hfill
\begin{minipage}{3.2in}
\begin{center}
\includegraphics[width=3.2in]{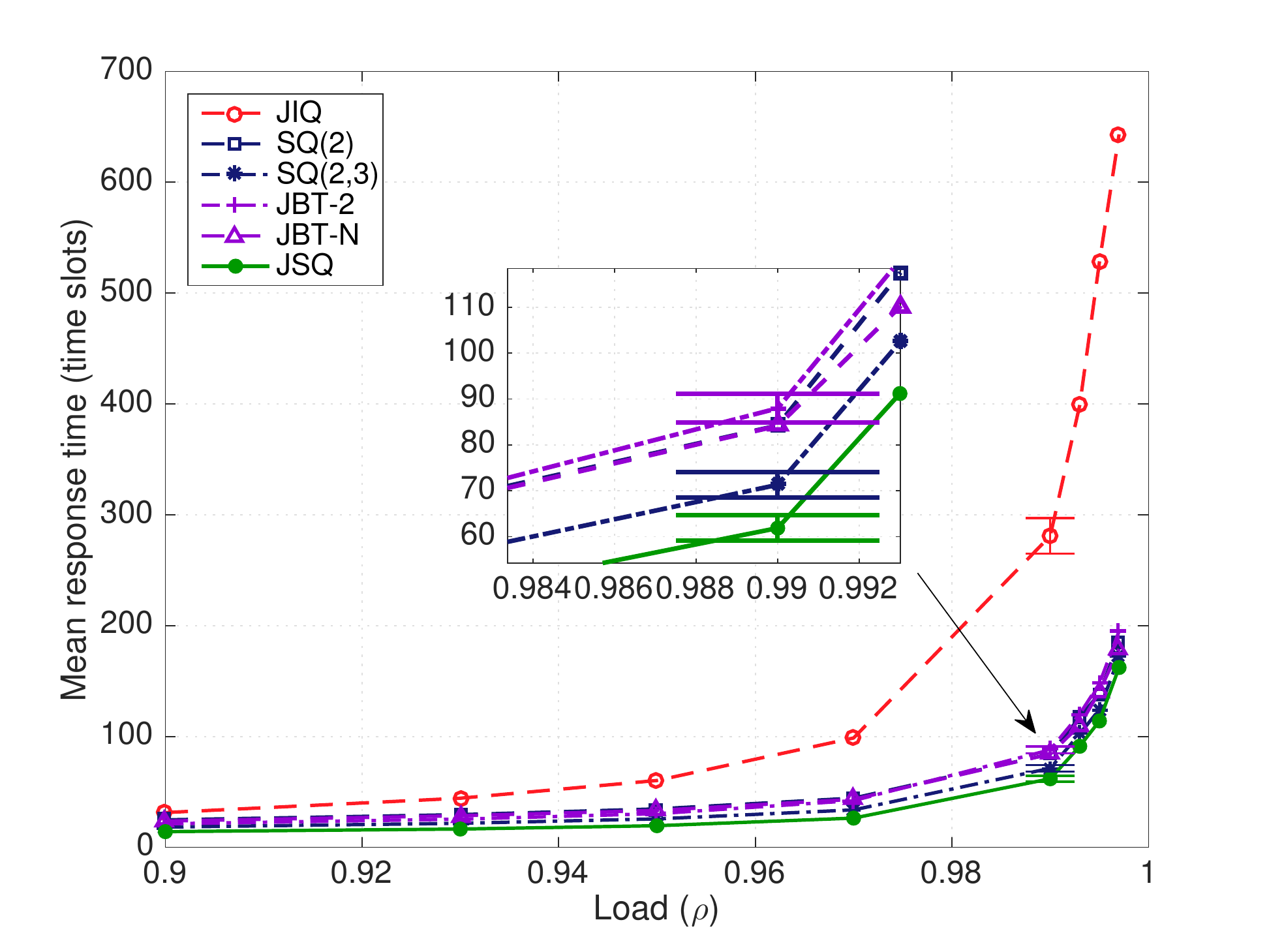}
\caption{Heavy-traffic delay performance under $10$ homogeneous servers with Poisson arrival and bursty service.}
\label{fig:heavydelay_poiss_burst_sim}
 \end{center}
\end{minipage}
\end{figure}





	Now, let us get a closer look at the  delay performance in heavy-traffic regime, i.e., $\rho > 0.9$, as shown in Figure \ref{fig:heavydelay_N10}. It can be seen that JBT-$10$ outperforms JIQ when $\rho > 0.9$ and JBT-$2$ outperforms JIQ when $\rho > 0.95$ in this case. More importantly, the gap between them keeps increasing as the load gets higher. Note that power-of-$2$ with memory (SQ($2$,$3$)) also has good performance in this case, which, however, uses a much higher message rate compared to our JBT-$d$ policy, as discussed in the next section.

	Last, we further provide some results on heavy-traffic delay performance for a larger system size and a bursty service process, respectively. Due to space limitation, the comprehensive results can be found in Section \ref{sec:addition} of Appendix. 
	Figure \ref{fig:heavydelay_N50_sim} illustrates the heavy-traffic performance under Poisson arrival and Poisson service when $N=50$. In this case, first thing to note is that even though the power-of-$d$ with memory policy (SQ($2$,$9$)) uses the same amount of memory as in JBT-$d$, it has a much poorer performance with a much higher message overhead since the message overhead of JBT-$d$ is strictly less than $1$ when $T=1000$ in this case. This means that to improve delay performance in large system size, power-of-$d$ with memory has to increase its message overhead linearly with respect $d$, while our JBT-$d$ policy is able to achieve good performance with message rate less than $1$ even for $d=N$. Moreover, as $\rho$ approaches to $1$, the performance of JIQ degrades substantially while our proposed JBT-$d$ remains quite close to JSQ. 
	In Figure \ref{fig:heavydelay_poiss_burst_sim}, the potential number of jobs served in each time slot is either $0$ or $10$. In this bursty service case, JIQ degrades much faster than that in the Poisson service process. Moreover, in this setting we can easily observe the difference between non-heavy-traffic policy (JIQ) and heavy-traffic optimal policies (all the others). 
	Note that the message overhead of SQ($2$,$3$) is nearly $8$ times as large as that of JBT-$d$, as shown in next section, though its delay is slightly better than JBT-$d$.

\begin{figure}[t]
\graphicspath{{./Figures/}}
\begin{minipage}{3.2in}
\begin{center}
\includegraphics[width=3.2in]{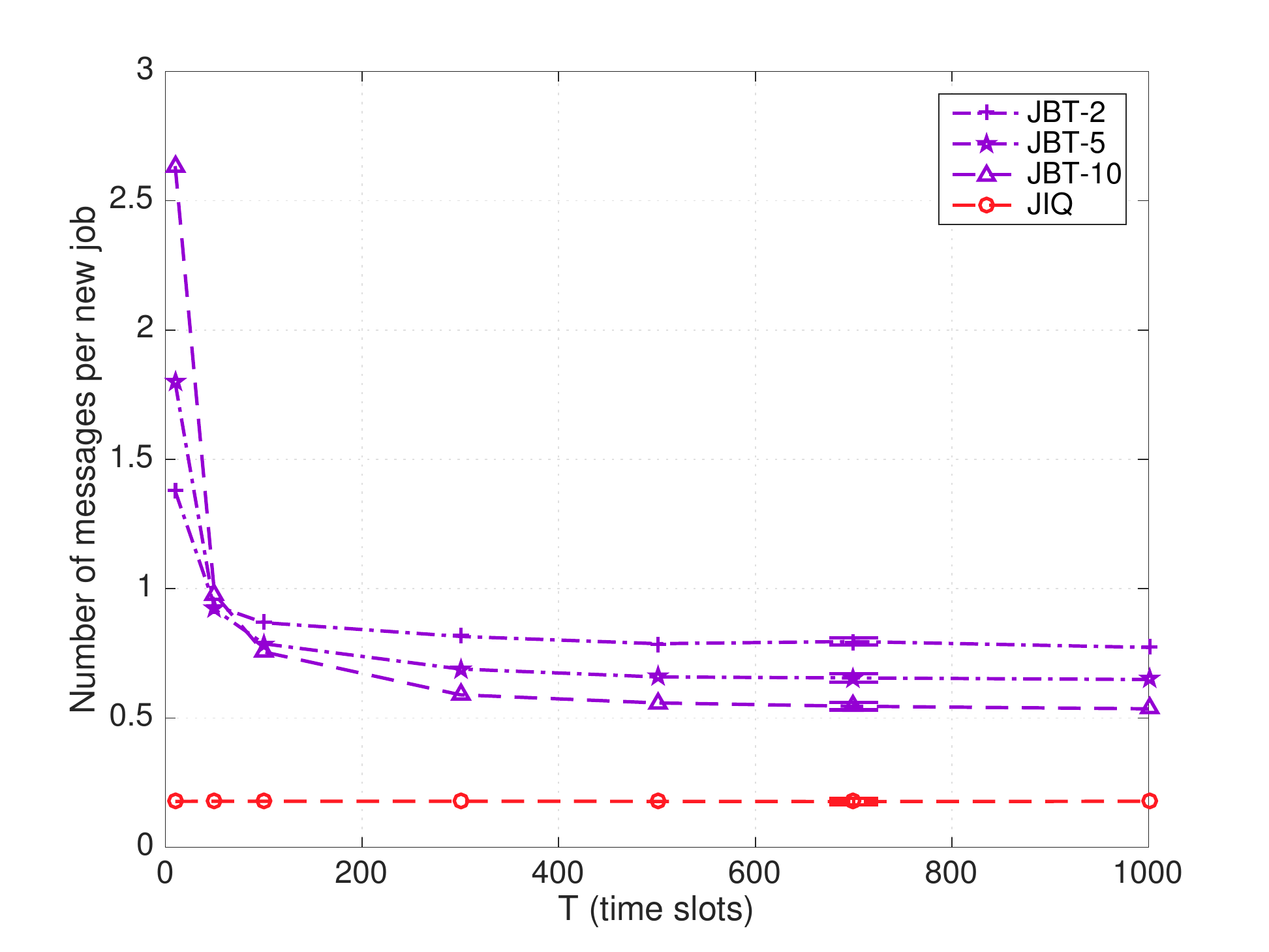}
\caption{Message per new job arrival under $10$ homogeneous servers with respect to $T$.}
\label{fig:mess_N10_Td}
 \end{center}
\end{minipage}
\hfill
\begin{minipage}{3.2in}
\begin{center}
\includegraphics[width=3.2in]{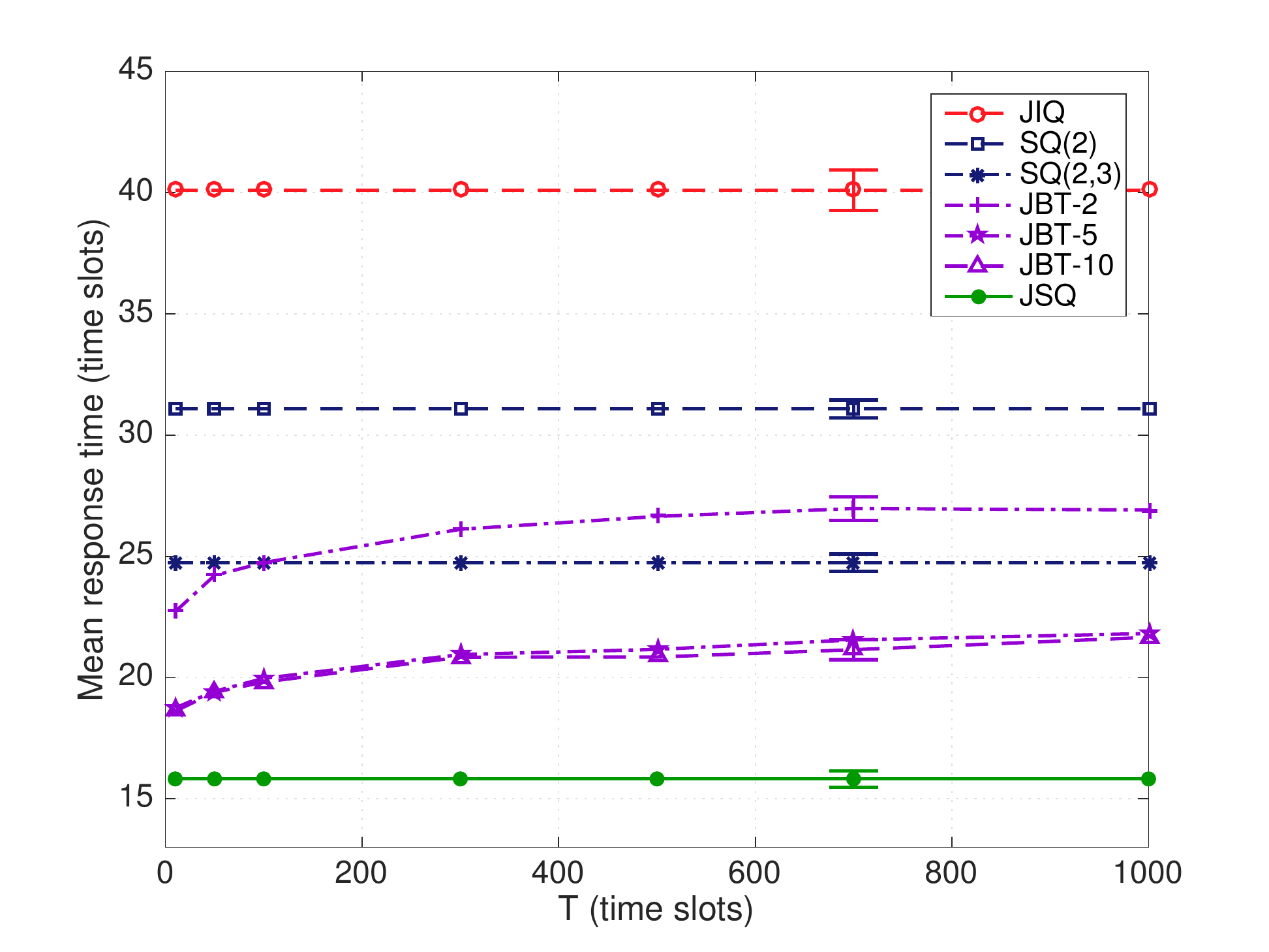}
\caption{Delay performance under $10$ homogeneous servers with respect to $T$.}
\label{fig:delay_N10_Td}
 \end{center}
\end{minipage}
\end{figure}



\subsection{Message Overhead}
	We use simulations to further show the low message rate of our proposed JBT-$d$ policy, though a crude upper bound has been established. Here, we consider the $10$ homogeneous servers  with Poisson arrival and Poisson service, and more results for different settings can be found in Section \ref{sec:addition} of Appendix. More specifically, we investigate the impact of different values of $T$, i.e., the time interval for updating the threshold, on the message rate and its corresponding delay performance at a fixed load $\rho = 0.99$. In particular, we calculate the average number of messages per new job arrival under each policy. For push-based policies, e.g., JSQ, power-of-$2$ (SQ(2)) and power-of-$2$ with memory (SQ($2$,$3$)), the message only includes the  push-message and is easily calculated as $20$, $4$, and $4$, which is independent of $T$. For JIQ, we know that the rate is at most one for each new job arrival, which is also independent with $T$ and serves as the benchmark. 

	Figure \ref{fig:mess_N10_Td} shows the message rate of JBT-$d$ with respect to $T$ for different values of $d$, and the corresponding delay performance is shown in Figure \ref{fig:delay_N10_Td}. The first thing to note is that the message rate of JIQ is much smaller than one since the traffic is heavy and hence there are few idle servers in this case, which directly results in the poor performance in the heavy-traffic regime. Second, the message rate of JBT-$d$ is smaller than all push-based policies and becomes less than one when $T>100$ in this case, which means that it is able to achieve throughput and heavy-traffic delay optimality by requiring a slightly more message than JIQ. Moreover, it can be seen that as $T$ increases, there is no significant change of the delay performance, which indicates that we are allowed to adopt a sufficiently large $T$ while not incurring the loss of performance very much in this case. Last, it is worth noting that a larger $d$ does not necessarily mean a larger message overhead when $T$ is large. This is because when $T$ is large, the push-message in JBT-$d$ will be dominated by the pull-message. For a small $d$, the number of pull-message may be larger since the threshold may be higher than that under a larger $d$. As shown in the additional results in Section \ref{sec:addition} of Appendix, the observations above hold almost for all the considered cases.
	The exact impact and relationship of $T$ and $d$ would be one of our future research focus.

\section{Proof of Main Results}
\label{sec:proofMain}
The high-level insight for class $\Pi$ to be heavy-traffic delay optimal is that it always has a preference to shorter queues in the way that is specified by the $\delta$-tilted distribution. The key step behind the proof that JBT-$d$ is heavy-traffic delay optimal is to show that the dispatching distribution for the time slot that is immediately after the threshold update is always a $\delta$-titled distribution. 
\subsection{Proof of Theorem \ref{THM:ACLASS}}
\label{sec:proof}
	Before we adopt the sufficient conditions in Lemma \ref{LEMMA:THTROUGHPUT} and Lemma \ref{LEMMA:DELAY} to prove Theorem \ref{THM:ACLASS}, we first present the following lemmas on the tilted distribution and $\delta$-tilted distribution, respectively. 
	\begin{lemma}
	\label{lem:tilted}
		For a system with mean arrival rate $\lambda_\Sigma = \mu_\Sigma -\epsilon$ and a tilted distribution $\mathbf{P}(t)$ under $Z(t)$, we have 
		\begin{equation}
		\label{eq:tilted_Q}
			\ex{\inner{\Q(t)}{\A(t)-\s(t)}\mid Z(t)} \le -\epsilon\frac{\mu_{min}}{\mu_\Sigma} \norm{\Q(t)}
		\end{equation}
		and 
		\begin{equation}
		\label{eq:tilted_Qc}
			\ex{\inner{\Qc(t)}{\A(t)-\s(t)}\mid Z(t)} \le \epsilon \sqrt{N}\norm{\Qc(t)}
		\end{equation}
	\end{lemma}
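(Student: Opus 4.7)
The plan is to compute the conditional one-step drift $\ex{\inner{\Q(t)}{\A(t) - \s(t)} \mid Z(t)}$ in closed form, then isolate the tilted-distribution structure. By Assumptions~\ref{assume:arrival}--\ref{assume:service}, arrivals and services at time $t$ are independent of $Z(t)$, and the dispatching probability vector $\mathbf{p}(t)$ is a function of $Z(t)$ that is independent of $A_\Sigma(t)$. Hence $\ex{A_n(t) \mid Z(t)} = \lambda_\Sigma p_n(t)$, giving
\[
\ex{\inner{\Q(t)}{\A(t) - \s(t)} \mid Z(t)} = \lambda_\Sigma \sum_n Q_n p_n(t) - \sum_n Q_n \mu_n.
\]
Rewriting with the permutation $\sigma_t$, substituting $\lambda_\Sigma = \mu_\Sigma - \epsilon$, and then adding and subtracting $\lambda_\Sigma \sum_n Q_{\sigma_t(n)} \mu_{\sigma_t(n)}/\mu_\Sigma$, the right-hand side splits as
\[
\lambda_\Sigma \sum_n Q_{\sigma_t(n)} \Delta_n(t) - \tfrac{\epsilon}{\mu_\Sigma} \sum_n Q_{\sigma_t(n)} \mu_{\sigma_t(n)}.
\]

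For Eq.~\eqref{eq:tilted_Q}, I would prove that the first term is nonpositive by an Abel-type argument already sketched in Eq.~\eqref{eq:th_idea}: since $\mathbf{P}(t)$ is tilted, $\Delta_n \ge 0$ for $n < k$ and $\Delta_n \le 0$ for $n \ge k$, while $Q_{\sigma_t(n)} \le Q_{\sigma_t(k)}$ for $n<k$ and $Q_{\sigma_t(n)} \ge Q_{\sigma_t(k)}$ for $n \ge k$; bounding each factor against $Q_{\sigma_t(k)}$ and using $\sum_n \Delta_n = 0$ gives $\sum_n Q_{\sigma_t(n)} \Delta_n \le 0$. The remaining term is bounded by $-\tfrac{\epsilon \mu_{\min}}{\mu_\Sigma}\sum_n Q_n = -\tfrac{\epsilon \mu_{\min}}{\mu_\Sigma}\norm{\Q}_1$ via $\mu_{\sigma_t(n)} \ge \mu_{\min}$. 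Finally I invoke the elementary inequality $\norm{\Q}_1 \ge \norm{\Q}_2$ valid for coordinatewise nonnegative $\Q$ (expand $(\sum_n Q_n)^2 \ge \sum_n Q_n^2$), which yields the claim.

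For Eq.~\eqref{eq:tilted_Qc}, I would repeat the same expansion but with $\Qc$ in place of $\Q$, setting $(\Qc)_n = Q_n - Q_{\mathrm{avg}}$. The identity $\sum_n (\Qc)_n = 0$ implies $\sum_n (\Qc)_{\sigma_t(n)} \Delta_n = \sum_n Q_{\sigma_t(n)} \Delta_n$ (since $\sum_n \Delta_n=0$), so the tilted property still forces the first term to be $\le 0$. The residual term is $-\epsilon \sum_n (\Qc)_{\sigma_t(n)} P_n$, which I bound by Cauchy--Schwarz: $|\inner{\Qc}{\mathbf{p}}| \le \norm{\Qc}_2 \norm{\mathbf{p}}_2 \le \norm{\Qc}$ because $\norm{\mathbf{p}}_2 \le \norm{\mathbf{p}}_1 = 1$. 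Upper-bounding $\norm{\Qc}$ by $\sqrt{N} \norm{\Qc}$ recovers the stated inequality.

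There is no serious obstacle here; the argument is essentially bookkeeping built around the decomposition $\Delta_n = P_n - \mu_{\sigma_t(n)}/\mu_\Sigma$. The only subtle point is the nonpositivity of $\sum_n Q_{\sigma_t(n)} \Delta_n$ under merely tilted (not $\delta$-tilted) distributions, which requires the monotonicity-with-sign Abel argument above. Note that in the perpendicular case the sign of $\sum_n (\Qc)_{\sigma_t(n)} P_n$ is not controlled (some $(\Qc)_n$ are negative), so we lose the one-sided sign and must settle for the magnitude bound --- this is the reason the right-hand side of Eq.~\eqref{eq:tilted_Qc} is merely $\epsilon\sqrt{N}\norm{\Qc}$ rather than something negative, and why strict improvement over random routing will later need the $\delta$-tilted refinement.
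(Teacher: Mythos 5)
Your proof is correct and follows essentially the same route as the paper: the identical split into $\lambda_\Sigma\sum_n Q_{\sigma_t(n)}(t)\Delta_n(t)$ (nonpositive by the tilted property, for which you supply the Abel-type argument that the paper only asserts) plus an $O(\epsilon)$ residual, followed by the $l_1$--$l_2$ norm comparisons. The only cosmetic difference is in the perpendicular residual, where you carry $-\epsilon\sum_n(\Qc)_{\sigma_t(n)}P_n(t)$ and bound it by Cauchy--Schwarz, whereas the paper carries $-\tfrac{\epsilon}{\mu_\Sigma}\sum_n(\Qc)_{\sigma_t(n)}\mu_{\sigma_t(n)}$ and uses $\norm{\cdot}_1\le\sqrt{N}\norm{\cdot}$; both residuals are probability-weighted sums of the entries of $\Qc$ and both bounds yield the stated $\epsilon\sqrt{N}\norm{\Qc(t)}$.
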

	\begin{proof}
		Consider the left-hand-side (LHS) of Eq. \eqref{eq:tilted_Q} 
		\begin{equation*}
		\begin{split}
			&\ex{\inner{\Q(t)}{\A(t)-\s(t)}\mid Z(t)}\\
			= &\sum_{n=1}^N  Q_{\sigma_t(n)}(t)\left[(\Delta_n(t)+\frac{\mu_{\sigma_t(n)}}{\mu_\Sigma})\lambda_\Sigma - \mu_{\sigma_t(n)} \right]\\
			 \ep{a} &\sum_{n=1}^N Q_{\sigma_t(n)}(t) \Delta_n(t)\lambda_\Sigma + \sum_n^N Q_{\sigma_t(n)}(t)(-\epsilon\frac{\mu_{\sigma_t(n)}}{\mu_\Sigma})\\
			\lep{b} &\sum_{n=1}^N Q_{\sigma_t(n)}(t)(-\epsilon\frac{\mu_{\sigma_t(n)}}{\mu_\Sigma})\\
			\lep{c}& -\epsilon\frac{\mu_{min}}{\mu_\Sigma} \norm{\Q(t)},
		\end{split}	
		\end{equation*}
		in which equation (a) holds since $\lambda_\Sigma = \mu_\Sigma - \epsilon$; inequality (b) comes from the fact that $\sum_{n=1}^N Q_{\sigma_t(n)}(t) \Delta_n(t) \le 0$ under a tilted distribution. This fact is true since $Q_{\sigma_t(1)}(t)\le Q_{\sigma_t(2)}(t)\le \ldots \le Q_{\sigma_t(N)}(t)$ and $\sum_{n=1}^N \Delta_n(t) = 0$; inequality (c) follows from the fact that $\norm{\mathbf{x}}_1 \ge \norm{\mathbf{x}}$ for any $\mathbf{x} \in \mathbb{R}^N$.

		Note that $\Qc(t) = \Q(t) - \Qp(t) = \Q(t) - \frac{\sum Q_n(t)}{N}\mathbf{1} = \Q(t) - Q_{\text{avg} }(t)\mathbf{1}$, in which $Q_{\text{avg} }(t)$ is the average queue length among the $N$ servers at time slot $t$. Then, consider the left-hand-side (LHS) of Eq. \eqref{eq:tilted_Qc}
		\begin{equation}
		\label{eq:tilted_proof}
		\begin{split}
			&\ex{\inner{\Qc(t)}{\A(t)-\s(t)}\mid Z(t)}\\
			 = &\sum_{n=1}^N  (Q_{\sigma_t(n)}(t) - Q_{avg}(t))\left[(\Delta_n(t)+\frac{\mu_{\sigma_t(n)}}{\mu_\Sigma})\lambda_\Sigma - \mu_{\sigma_t(n)} \right]\\
			\ep{a} & \sum_{n=1}^N Q_{\sigma_t(n)}(t) \Delta_n(t)\lambda_\Sigma +\sum_n^N (Q_{\sigma_t(n)}(t) - Q_{avg}(t))(-\epsilon\frac{\mu_{\sigma_t(n)}}{\mu_\Sigma})\\
			\lep{b} &\sum_{n=1}^N (Q_{\sigma_t(n)}(t) - Q_{avg}(t))(-\epsilon\frac{\mu_{\sigma_t(n)}}{\mu_\Sigma})\\
			\lep{c} &\epsilon \sum_{n=1}^N |(Q_{\sigma_t(n)}(t) - Q_{avg}(t))|\\
			\lep{d} &\epsilon \sqrt{N} \norm{\Qc(t)},
		\end{split}
		\end{equation}
		where equation (a) comes from the facts that $\sum_{n=1}^N \Delta_n(t) = 0$ and $\lambda_\Sigma = \mu_\Sigma -\epsilon$; inequality (b) holds since $\sum_{n=1}^N Q_{\sigma_t(n)}(t) \Delta_n(t) \le 0$ under a tilted distribution; inequality (c) is true since $x \le |x|$ for any $x \in \mathbb{R}$ and $|\epsilon\frac{\mu_{\sigma_t(n)}}{\mu_\Sigma}| \le \epsilon$ for all $n \in \mathcal{N}$; inequality (d) is true since $\norm{\mathbf{x} }_1 \le \sqrt{N} \norm{\mathbf{x} }$ for any $\mathbf{x} \in \mathbb{R}^N$.
	\end{proof}

	\begin{lemma}
	\label{lem:propertilted}
		For a system with mean arrival rate $\lambda_\Sigma = \mu_\Sigma -\epsilon$ and a $\delta$-tilted distribution $\mathbf{P}(t)$ under $Z(t)$, we have 
		\begin{equation}
		\label{eq:proper_tiltied}
			\ex{\inner{\Qc(t)}{\A(t)-\s(t)}\mid Z(t)} \le \sqrt{N} \norm{\Qc(t)}(\epsilon - \frac{\delta \lambda_\Sigma}{N})
		\end{equation}
	\end{lemma}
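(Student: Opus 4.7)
The plan is to mirror the computation in Lemma~\ref{lem:tilted} (specifically Eq.~\eqref{eq:tilted_proof}), but strengthen the step that throws away the $\sum_n Q_{\sigma_t(n)}(t)\Delta_n(t)\lambda_\Sigma$ term by $0$. Under a $\delta$-tilted distribution we have a strictly negative bound on this quantity (the key inequality sketched in Eq.~\eqref{eq:delay_insight}), and this extra negative term is exactly what will produce the $-\delta\lambda_\Sigma/N$ contribution in the final bound.

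First I would repeat the algebraic decomposition used in the proof of Lemma~\ref{lem:tilted}: write $\Qc(t)=\Q(t)-Q_{\text{avg}}(t)\mathbf{1}$, expand $\ex{\A(t)-\s(t)\mid Z(t)}$ coordinate-wise using $P_n(t)=\Delta_n(t)+\mu_{\sigma_t(n)}/\mu_\Sigma$ and $\lambda_\Sigma=\mu_\Sigma-\epsilon$, and use $\sum_n \Delta_n(t)=0$ to obtain
\begin{equation*}
\ex{\inner{\Qc(t)}{\A(t)-\s(t)}\mid Z(t)} = \sum_{n=1}^N Q_{\sigma_t(n)}(t)\Delta_n(t)\lambda_\Sigma + \sum_{n=1}^N (Q_{\sigma_t(n)}(t)-Q_{\text{avg}}(t))\Bigl(-\epsilon\frac{\mu_{\sigma_t(n)}}{\mu_\Sigma}\Bigr).
\end{equation*}
The second sum is exactly the one bounded in the proof of Lemma~\ref{lem:tilted}, so by the same $l_1$--$l_2$ argument it is at most $\epsilon\sqrt{N}\norm{\Qc(t)}$.

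The new ingredient is a sharper bound on $\sum_n Q_{\sigma_t(n)}(t)\Delta_n(t)$ using the $\delta$-tilted property. Since $\sum_n \Delta_n(t)=0$ with $\Delta_n(t)\ge 0$ for $n<k$ and $\Delta_n(t)\le 0$ for $n\ge k$, a standard rearrangement (pairing positive mass on short queues with negative mass on long queues) gives $\sum_n Q_{\sigma_t(n)}(t)\Delta_n(t) \le -\delta(Q_{\sigma_t(N)}(t)-Q_{\sigma_t(1)}(t))$ as indicated in Eq.~\eqref{eq:delay_insight}, because $\Delta_1(t)\ge\delta$ and $\Delta_N(t)\le-\delta$ and all intermediate terms only help. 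Combining this with the inequality $Q_{\sigma_t(N)}(t)-Q_{\sigma_t(1)}(t) \ge \norm{\Qc(t)}/\sqrt{N}$ (which follows from $\norm{\Qc(t)}\le\sqrt{N}(Q_{\sigma_t(N)}(t)-Q_{\sigma_t(1)}(t))$, already stated in the sketch of Theorem~\ref{THM:ACLASS}) yields
\begin{equation*}
\sum_{n=1}^N Q_{\sigma_t(n)}(t)\Delta_n(t)\lambda_\Sigma \le -\frac{\delta\lambda_\Sigma}{\sqrt{N}}\norm{\Qc(t)}.
\end{equation*}
Adding this to the bound on the second sum and factoring out $\sqrt{N}\norm{\Qc(t)}$ gives exactly the claimed inequality~\eqref{eq:proper_tiltied}.

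The only non-routine step is the inequality $\sum_n Q_{\sigma_t(n)}(t)\Delta_n(t) \le -\delta(Q_{\sigma_t(N)}(t)-Q_{\sigma_t(1)}(t))$; I would justify it carefully by using $\sum_n \Delta_n(t)=0$ to write $\sum_n Q_{\sigma_t(n)}(t)\Delta_n(t)=\sum_n (Q_{\sigma_t(n)}(t)-Q_{\sigma_t(1)}(t))\Delta_n(t)$, then split the sum at the sign-change index $k$ so that the monotonicity $Q_{\sigma_t(n)}(t)-Q_{\sigma_t(1)}(t)\ge 0$ together with $\Delta_n(t)\le 0$ for $n\ge k$ (and in particular $\Delta_N(t)\le -\delta$) dominates the contribution from $n<k$. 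Everything else is either algebra or a direct reuse of bounds from Lemma~\ref{lem:tilted}, so I expect the proof to be short.
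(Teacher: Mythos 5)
Your proposal is correct and follows essentially the same route as the paper's proof: the same decomposition into $\lambda_\Sigma\sum_n Q_{\sigma_t(n)}(t)\Delta_n(t)$ plus the $O(\epsilon)$ term, the same reuse of the $l_1$--$l_2$ bound from the tilted-distribution lemma, the key inequality $\sum_n Q_{\sigma_t(n)}(t)\Delta_n(t)\le -\delta\bigl(Q_{\sigma_t(N)}(t)-Q_{\sigma_t(1)}(t)\bigr)$, and the bound $\norm{\Qc(t)}\le\sqrt{N}\bigl(Q_{\sigma_t(N)}(t)-Q_{\sigma_t(1)}(t)\bigr)$. One small remark on your justification of the key inequality: after centering at $Q_{\sigma_t(1)}(t)$ the terms with $1<n<k$ are nonnegative, so showing they are ``dominated'' requires one further comparison --- bound each term with $2\le n\le N-1$ by $\bigl(Q_{\sigma_t(k)}(t)-Q_{\sigma_t(1)}(t)\bigr)\Delta_n(t)$ and use $\sum_n\Delta_n(t)=0$ together with $\Delta_1(t)\ge\delta$ and $\Delta_N(t)\le-\delta$; alternatively, centering at $Q_{\sigma_t(k)}(t)$ makes every term other than $n=1$ and $n=N$ manifestly nonpositive and closes the argument immediately.
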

	\begin{proof}
		Consider the left-hand-side (LHS) of Eq. \eqref{eq:proper_tiltied}, we have 
		\begin{equation*}
		\begin{split}
			&\ex{\inner{\Qc(t)}{\A(t)-\s(t)}\mid Z(t)}\\
			= &\sum_{n=1}^N  (Q_{\sigma_t(n)}(t) - Q_{avg}(t))\left[(\Delta_n(t)+\frac{\mu_{\sigma_t(n)}}{\mu_\Sigma})\lambda_\Sigma - \mu_{\sigma_t(n)} \right]\\
			\ep{a} &\sum_{n=1}^N Q_{\sigma_t(n)}(t) \Delta_n(t)\lambda_\Sigma +\sum_{n=1}^N (Q_{\sigma_t(n)}(t) - Q_{avg}(t))(-\epsilon\frac{\mu_{\sigma_t(n)}}{\mu_\Sigma})\\
			\lep{b} &\sum_{n=1}^N{} Q_{\sigma_t(n)}(t) \Delta_n(t)\lambda_\Sigma + \epsilon \sqrt{N} \norm{\Qc(t)}\\
			\lep{c}& -\lambda_\Sigma \delta(Q_{\sigma_t(N)}(t) - Q_{\sigma_t(1)}(t)) + \epsilon \sqrt{N} \norm{\Qc(t)}\\
			\lep{d}& -\lambda_\Sigma \frac{\delta}{\sqrt{N}} \norm{\Qc(t)} + \epsilon \sqrt{N} \norm{\Qc(t)}\\
			= &\sqrt{N} \norm{\Qc(t)}(\epsilon - \frac{\delta \lambda_\Sigma}{N})
		\end{split}
		\end{equation*}
		where equation (a) holds since $\sum_{n=1}^N \Delta_n(t) = 0$ and $\lambda_\Sigma = \mu_\Sigma -\epsilon$; inequality (b) follows from steps (c) and (d) in Eq. \eqref{eq:tilted_proof}; inequality (c) follows from the definition of $\delta$-tilted probability and the fact that $Q_{\sigma_t(1)}(t)\le Q_{\sigma_t(2)}(t)\le \ldots \le Q_{\sigma_t(N)}(t)$; inequality (d) is follows from the fact that $\norm{\Qc(t)} \le \sqrt{N}(Q_{\sigma_t(N)}(t) - Q_{\sigma_t(1)}(t))$.
	\end{proof}

	Now we are ready to present the proof of Theorem \ref{THM:ACLASS}
	
	\textbf{Proof of Theorem \ref{THM:ACLASS}:}
	The proof is a direct application of the sufficient conditions for throughput and heavy-traffic delay optimality, i.e., we need only to show Eq. \eqref{eq:condition_throughput} and Eq. \eqref{eq:cond_delay} hold. 

	Fix a load balancing policy $p$ in $\Pi$.
	Let us first consider the left-hand-side (LHS) of Eq. \eqref{eq:condition_throughput} with $T_1 = T$,
	\begin{equation*}
	\begin{split}
		LHS &\ep{a} {\sum_{t=t_0}^{t_0+T-1} \ex{\inner{\Q(t)}{\A(t) -\s(t)} \mid Z(t_0)  = Z}}\\
		& \ep{b} {\sum_{t=t_0}^{t_0+T-1} \ex{\ex{\inner{\Q(t)}{\A(t) -\s(t)} \mid Z(t) }| Z(t_0) = Z}}\\
		& \lep{c} {\sum_{t=t_0}^{t_0+T-1} \ex{-\epsilon\frac{\mu_{min}}{\mu_\Sigma} \norm{\Q(t)}\mid Z(t_0) = Z}}\\
		& \le -\epsilon\frac{\mu_{min}}{{}\mu_\Sigma} \norm{\Q(t_0)}
	\end{split}
	\end{equation*}
	where equation (a) comes from the linearity of condition expectation; equation (b) follows from the tower property of conditional expectation and the fact that $\Q(t)$, $\A(t)$ and $\s(t)$ are conditionally independent of $Z(t_0)$ when given $Z(t)$. inequality (c) follows from Lemma \ref{lem:tilted} since the policy $p$ adopts a tilted distribution within every time slot for all $Z(t)$. Hence, the condition of Lemma \ref{LEMMA:THTROUGHPUT} is satisfied and hence policy $p$ is throughput optimal.

	Let us now turn to consider the left-hand-side (LHS) of Eq. \eqref{eq:cond_delay} with $T_2 = T$ and $\epsilon < \epsilon_0 = \frac{\delta\mu_\Sigma}{2TN+2\delta}$.
	\begin{equation*}
	\begin{split}
		LHS \ep{a} &{\sum_{t=t_0}^{t_0+T-1} \ex{\inner{\Qc(t)}{\A(t) -\s(t)} \mid Z(t_0)  = Z}}\\
		\ep{b} & {\sum_{t=t_0}^{t_0+T-1} \ex{\ex{\inner{\Qc(t)}{\A(t) -\s(t)} \mid Z(t) }\mid Z(t_0) = Z}}\\
		\lep{c} &\sum_{t \neq t^*} \ex{ \epsilon\sqrt{N}\norm{\Qc(t)} \mid Z(t_0) = Z} \\
		& +\ex{\sqrt{N} \norm{\Qc(t^*)}(\epsilon - \frac{\delta \lambda_\Sigma}{N}) \mid Z(t_0) = Z}\\
		\lep{d} & (T-1)\epsilon\sqrt{N}(\norm{\Qc(t_0)} +M) \\
		& +  (\epsilon - \frac{\delta \lambda_\Sigma}{N})\sqrt{N}(\norm{\Qc(t_0)} -M)\\
		= & (T\epsilon - \frac{\delta\lambda_\Sigma}{N})\sqrt{N}\norm{\Qc(t_0)} + \sqrt{N}M(\frac{\delta\lambda_\Sigma}{N}+(T-2)\epsilon)\\
		\lep{e} &(T\epsilon - \frac{\delta\lambda_\Sigma}{N})\sqrt{N}\norm{\Qc(t_0)} + K_2\\
		\lep{f} & -\frac{\delta\mu_\Sigma}{2N}\sqrt{N} \norm{\Qc(t_0)} + K_2
	\end{split}
	\end{equation*}
	where equation (a) comes from the linearity of condition expectation; equation (b) follows from the tower property of conditional expectation and the fact that $\Qc(t)$, $\A(t)$ and $\s(t)$ are conditionally independent of $Z(t_0)$ when given $Z(t)$; inequality (c) follows from Lemmas \ref{lem:tilted} and \ref{lem:propertilted} since under policy $p$ there exists at least one time slot $t^*$ within which at least a $\delta$-tilted distribution (or one of its equivalent distribution in inner-product is $\delta$-tilted) is adopted and all the other time slots a tilted distribution is used; inequality (d) follow from the fact $|\norm{\Qc(t_0+T)} - \norm{\Qc(t_0)}| \le M = 2T\sqrt{N}\max\{A_{max},S_{max}\}$ which is shown in Eq. \eqref{eq:boundedQc} and the fact $\epsilon - \frac{\delta \lambda_\Sigma}{N} < 0$ for all $\epsilon < \epsilon_0$; inequality (e) comes from the fact that $\sqrt{N}M(\frac{\delta\lambda_\Sigma}{N^2}+(T-2)\epsilon) \le K_2 = \sqrt{N}M(\frac{\delta\mu_\Sigma}{N}+T\mu_\Sigma)$, which is independent of $\epsilon$; inequality (f) holds since $\epsilon < \epsilon_0$ and $\lambda_\Sigma = \mu_\Sigma - \epsilon$. Therefore, since both $ -\frac{\delta\mu_\Sigma}{2N}\sqrt{N}$ and $K_2$ are independent of $\epsilon$, the condition of Lemma \ref{LEMMA:DELAY} is satisfied, hence policy $p$ is heavy-traffic delay optimal.\qed

\subsection{Proof of Proposition \ref{prop:jbt}}
\label{sec:52}
Let us first look at assertion 1, i.e., JBT-$d$ is in $\Pi$ under homogeneous servers. Based on Eq. \eqref{eq:Pn}, 
 we can conclude that for any $t \ge 0$, the dispatching distribution is a tilted distribution for all $Z(t)$. We are left to show that at time slot $rT+1$, $r \in \{0,1,2,\ldots\}$, the dispatching distribution is at least a $\delta$-distribution for some positive $\delta$. This is equivalent to finding the maximum value for $P_N(rT+1)$ and the minimum value of $P_1(rT+1)$ for all queue length state. In fact, they are achieved at the same time when $\tilde{p}_N$ is in its largest value based on Eq. \eqref{eq:important1}, which is repeated as follows. 
\begin{equation*}
P_n(t) = \sum_{i=n}^N \tilde{p}_i(t)\frac{1}{i}
\end{equation*}
Then, there are two cases to consider.

(a) Note that  $\tilde{p}_N(rT+1)=1$ if and only if all the servers have the same queue length at the end of time slots $rT$ (sampling slots), which are also the queue length state at the beginning of $rT+1$, i.e., $\Q(rT+1)$. In this case, it can be easily seen that $P_n(rT+1) = \frac{1}{N}$ for all $n$, which is not a $\delta$-tilted distribution. However, it is the an equivalent distribution in inner-product to $\hat{P}_1(rT+1) = 1$ and $\hat{P}_n = 0$ for $2\le n\le N$ as all the queue lengths are equal, which is indeed a $\delta$-distribution.

(b) If the queue lengths are not all equal at the end of time slots $rT$, then the maximum value for $\tilde{p}_N(rT+1)$ is strictly less than $1$ and it is obtained when the queue length in the state that there are $N-1$ servers that have the same queue length, which is strictly larger than the remaining one. In this case, by sampling $d$ servers uniformly at random at the end of times slots $rT$, the probability for the event that there are $N$ IDs in memory, i.e., $\tilde{p}_N(rT+1)$ is given by 
\begin{equation*}
	\tilde{p}_N(rT+1) = 1 - \tilde{p}_1(rT+1) = 1- \frac{d}{N}
\end{equation*}
Therefore, we have $P_1(rT+1) = \frac{d}{N} + \frac{N-d}{N^2}$ and $P_n(rT+1) = \frac{N-d}{N^2}$, which is inner product equivalent to $\hat{P}_1(rT+1) = \frac{d}{N} + \frac{N-d}{N^2}$, $\hat{P}_2(rT+1) = (N-1)\frac{N-d}{N^2}$ and $\hat{P}_n(rT+1) = 0$ for all $3 \le n\le N$ as the $N-1$ queues have the same queue length. As a result, we have for this state $Z(rT+1)$
\begin{equation*}
	\hat{\Delta}_1(rT+1) = \frac{d}{N}(1-\frac{1}{N}) \text{ and } \hat{\Delta}_N(rT+1) = -\frac{1}{N}
\end{equation*}
Thus, it is a $\delta$-distribution with $\delta = \min\{\frac{d}{N}(1-\frac{1}{N}), \frac{1}{N}\}$, which is the lower bound for $\delta$. That is, for any state $Z(rT+1)$, the dispatching distribution is at least a $\delta$-distribution. Therefore, every $T+1$ time slots, there exists one time slot in which the dispatching distribution (or an inner product equivalent distribution) is at least a $\delta$-tilted distribution with $\delta = \min\{\frac{d}{N}(1-\frac{1}{N}), \frac{1}{N}\}$.

The proof for heterogeneous servers follows exact the same idea with additional care on the service rate.  The probability for the server $\sigma_t(n)$ to be selected at time $t$, i.e., $P_n(t)$ is given by
		\begin{equation*}
		\label{eq:important}
			P_n(t) = \mu_{\sigma_t(n)} \sum_{i=n}^N \frac{\tilde{p}_i(t)}{\sum_{ \{j \in m(t), |m(t)| = i \} } \mu_j}
		\end{equation*}
From it we can easily see that if $\Delta_n(t) = P_n(t) - \frac{\mu_{\sigma_t(n)}} {\mu_\Sigma} $ is positive, then we must have that $\Delta_{n-1}(t)$ is also positive as it has one more term in the equation above. Therefore, we can find a $k$ between $2$ and $N$ such that $\Delta_n(t) = P_n(t) - \frac{\mu_{\sigma_t(n)}} {\mu_\Sigma} \ge 0$ for all $n<k$ and  $\Delta_n(t) \le 0$ for all $n\ge k$. Therefore, condition (i) of $\Pi$ is satisfied.

For the condition (ii), we need to find the maximum value of $\tilde{p}_N(rT+1)$ to bound $\delta$. There are also two cases as before.

(a) If $\tilde{p}_N(rT+1) = 1$, the we must have that the queue lengths are all equal at the end of time slots $rT$, which is the same as the beginning of time slot $rT+1$. In this case, $P_n(rT+1) = \frac{\mu_{\sigma_t(n)}}{\mu_\Sigma}$ for all $n$. Note that this dispatching distribution an equivalent distribution in inner-product to $\hat{P}_1(rT+1) = 1$ and $\hat{P}_n = 0$ for $2\le n\le N$ as all the queue lengths are equal, which is a $\delta$-distribution.

(b) If $\tilde{p}_N(rT+1) \neq 1$, the maximum value of $\tilde{p}_N(rT+1)$ is obtained when there are $N-1$ servers that have the same queue length, which is strictly larger than the remaining one. In this case, we have $\tilde{p}_N(rT+1) = 1 - \tilde{p}_1(rT+1) = 1- \frac{d}{N}$ as before. Thus, we can obtain 
\begin{equation*}
	P_1(rT+1) = \frac{d}{N} + (1-\frac{d}{N})\frac{\mu_{\sigma_t(1)} }{\mu_\Sigma} 
\end{equation*}
and $P_n(rT+1) = (1-\frac{d}{N})\frac{\mu_{\sigma_t(n)} }{\mu_\Sigma}$, which is inner product equivalent to $\hat{P}_1(rT+1) = P_1(rT+1)$, $\hat{P}_2(rT+1) = \sum_{n=2}^N P_n(rT+1)$ and $\hat{P}_n(rT+1) = 0$ for all $3 \le n \le N$ since the last $N-1$ servers have the same queue lengths. As a result, we have for this $Z(rT+1)$
\begin{equation*}
	\hat{\Delta}_1(rT+1) = \frac{d}{N}(1-\frac{\mu_{\sigma_t(1)}}{\mu_\Sigma}) \text{ and } \hat{\Delta}_N(rT+1) = -\frac{\mu_{\sigma_t(N)}}{\mu_\Sigma}
\end{equation*}
Thus, it is  a $\delta$-distribution with $\delta = \min\{\frac{d}{N}(1-\frac{\mu_{max}}{\mu_\Sigma}), \frac{\mu_{min}}{\mu_\Sigma}\}$, in which  $\mu_{max} = \max_{n \in \mathcal{N}} \mu_n$ and $\mu_{min} = \min_{n \in \mathcal{N}} \mu_n$, which is the lower bound of $\delta$. Hence, for any $Z(rT+1)$, the dispatching probability distribution (or its inner product equivalent one) is at least a $\delta$-distribution. \qed

\section{Conclusion}
  We introduce a class $\Pi$ of flexible load balancing policies, which are shown to be throughput and heavy-traffic delay optimal.
  This class includes as special cases
  JSQ, power-of-d, and also allows flexibility in designing other new policies.  
  The JIQ policy, albeit exhibiting a good performance when the traffic load is not heavy,
  is not in $\Pi$ since it is not heavy-traffic delay optimal even for homogeneous servers.  A new policy called JBT-$d$ is proposed in the class $\Pi$,  which enjoys the simplicity of JIQ while guaranteeing heavy-traffic delay 
  optimal. A unified analytic framework is established to characterize this class of policies by exploring their common characteristics and provide sufficient conditions that guarantee the heavy-traffic delay optimality.  Extensive simulations  are used to demonstrate the good performance and low complexity of the proposed policy compared to other existing ones.

\bibliographystyle{plain}
\bibliography{ref}
\appendix
%



\section{Proof of Lemma \ref{LEMMA:THTROUGHPUT}}
\label{sec:proof_lemma_throughput}

	Before we present the proof of Lemma \ref{LEMMA:THTROUGHPUT}, we first introduce two lemmas which will be the key ingredients in the proof. The first lemma enables us to bound the moments of a stationary distribution based on drift condition, which can be simplified by the second lemma. 

	The following lemma is introduced in \cite{wang2016maptask}, which is an extension of Lemma 1 in \cite{eryilmaz2012asymptotically} and can be proved from the results in \cite{hajek1982hitting}.
	    \begin{lemma}
	      \label{lem:basis}
	        For an irreducible aperiodic and positive recurrent Markov chain $\{X(t), t \ge 0\}$ over a countable state space $\mathcal{X}$, which converges in distribution to $\overline{X}$,  and suppose $V: \mathcal{X} \rightarrow \mathbb{R}_{+}$ is a Lyapunov function. We define the $T$ time slot drift of $V$ at $X$ as 
	        \[\Delta V(X):= [V(X(t_0+T)) - V(X(t_0))] \mathcal{I}(X(t_0) = X),\]
	        where $\mathcal{I}(.)$ is the indicator function. Suppose for some positive finite integer $T$, the $T$ time slot drift of $V$ satisfies the following conditions:

	        \begin{itemize}
	          \item (C1) There exists an $\gamma> 0$ and a $\kappa <  \infty$ such that for any $t_0 = 1,2,\ldots$ and for all $X \in \mathcal{X}$ with $V(X)\ge \kappa$, 
	          \[\mathbb{E}\left[\Delta V(X) \mid X(t_0) = X\right]\le -\gamma.\]
	          \item (C2) There exists a constant $D < \infty$ such that for all $X\in \mathcal{X}$,
	          \[\mathbb{P}(|\Delta V(X)| \le D) = 1.\]
	        \end{itemize}

	        Then $\{V(X(t)), t\ge0\}$ converges in distribution to a random variable $\overline{V}$, and there exists constants $\theta^* > 0$ and $C^* < \infty$ such that $\mathbb{E}\left[e^{\theta^* \overline{V}}\right] \le C^* $, which directly implies that all moments of random variable $\overline{V}$  exist and are finite. More specifically, there exist finite constants $\{ M_r, r\in \mathbb{N} \}$ such that for each positive $r$, $\ex{V(\overline{X})^r} \le M_r$, where $M_r$ are fully determined by $\kappa$, $\gamma$ and $D$.
	    \end{lemma}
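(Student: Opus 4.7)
The plan is to follow the classical exponential Lyapunov-drift argument of Hajek. The central object is the conditional moment-generating function of the $T$-slot drift,
\begin{equation*}
M(\theta, X) := \ex{e^{\theta \, \Delta V(X)} \mid X(t_0) = X},
\end{equation*}
from which I will derive a one-step geometric contraction for $\ex{e^{\theta V(X(kT))}}$ and, from that, both exponential and polynomial moment bounds for the stationary distribution.

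First, I would treat the two regimes $V(X) \ge \kappa$ and $V(X) < \kappa$ separately. In the former, using the elementary bound $e^z \le 1 + z + z^2$ valid for $|z| \le 1$ together with condition (C2), for any $\theta \in (0, 1/D]$ one gets
\begin{equation*}
M(\theta, X) \le 1 + \theta\, \ex{\Delta V(X) \mid X(t_0) = X} + \theta^2 D^2,
\end{equation*}
which by (C1) is at most $1 - \theta \gamma + \theta^2 D^2$. Choosing $\theta^* := \min\{1/D,\; \gamma/(2D^2)\}$ forces this quantity below $1 - \theta^* \gamma/2 \le e^{-\theta^* \gamma/2} =: \rho < 1$. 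In the small-$V$ regime, (C2) gives $V(X(t_0+T)) \le \kappa + D$ almost surely, hence $\ex{e^{\theta^* V(X(t_0+T))} \mid X(t_0) = X} \le e^{\theta^* (\kappa + D)} =: B$.

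Merging the two cases (multiplying the first by $e^{\theta^* V(X)}$) yields the one-step contraction
\begin{equation*}
\ex{e^{\theta^* V(X(t_0+T))} \mid X(t_0) = X} \le \rho \, e^{\theta^* V(X)} + B.
\end{equation*}
Iterating this inequality along the subsampled chain $\{X(kT)\}_{k \ge 0}$ and using $\rho < 1$ gives a uniform bound $\ex{e^{\theta^* V(X(kT))}} \le B/(1-\rho) =: C^*$ independent of $k$, provided the initial state has a finite exponential moment; the case of an arbitrary initial law is reduced to a fixed starting state, after which positive recurrence and aperiodicity give $X(t) \Rightarrow \overline{X}$, and since every function on the countable space $\mathcal{X}$ is continuous, $V(X(t)) \Rightarrow V(\overline{X}) =: \overline{V}$. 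Fatou's lemma then transfers the uniform exponential bound to the stationary law: $\ex{e^{\theta^* \overline{V}}} \le C^*$. Polynomial moments of every order follow from the elementary inequality $x^r \le r!\,(\theta^*)^{-r} e^{\theta^* x}$ valid for $x \ge 0$, which yields $\ex{V(\overline{X})^r} \le r!\,(\theta^*)^{-r} C^* =: M_r$.

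The main obstacle is the joint tuning of $\theta^*$: it must be small enough that the quadratic correction $\theta^2 D^2$ does not swamp the linear drift $-\theta\gamma$, while remaining below $1/D$ so that the Taylor bound is valid. The explicit choice above makes $\rho$, $B$, and therefore $M_r$ depend only on the triple $(\kappa, \gamma, D)$, matching the statement of the lemma. A secondary technicality is verifying that the contraction propagates to an arbitrary initial distribution and that the $T$-subsampled chain inherits (C1) and (C2) uniformly in $t_0$; both follow from the strong Markov property together with the hypothesis that the drift and boundedness conditions hold for every $t_0$.
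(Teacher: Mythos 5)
Your proof is correct and is precisely the exponential Lyapunov-drift argument of Hajek that the paper itself invokes: the paper does not prove Lemma~\ref{lem:basis} but cites it as following from \cite{hajek1982hitting} (via \cite{eryilmaz2012asymptotically} and \cite{wang2016maptask}), and your moment-generating-function contraction with $\theta^*=\min\{1/D,\gamma/(2D^2)\}$, the two-regime split at $V(X)=\kappa$, iteration along the $T$-subsampled chain, and the Fatou/weak-convergence transfer to $\overline{V}$ is exactly that standard argument. The constants $\rho$, $B$, $C^*$, and $M_r$ you obtain depend only on $(\kappa,\gamma,D)$, matching the statement.
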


	\begin{lemma}
	    \label{lem:drift_of_Qsquare}
	      For any $t\ge0$, we have 
	      \begin{equation}
	      \label{eq:lemma2}
	        \norm{\Q(t+1)}^2 - \norm{\Q(t)}^2 \le 2\inner{\Q(t)}{\A(t)-\s(t)} + K
	      \end{equation}
	      where $K$ is a finite constant.
	\end{lemma}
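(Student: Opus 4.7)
The plan is to apply the queue dynamics \eqref{eq:q_dynamic} coordinate-wise and then sum over $n$. Recall that $Q_n(t+1) = [Q_n(t) + A_n(t) - S_n(t)]^+$. Since for any real $x$ we have $([x]^+)^2 \le x^2$, it follows that
\begin{equation*}
Q_n(t+1)^2 \le \bigl(Q_n(t) + A_n(t) - S_n(t)\bigr)^2.
\end{equation*}
Expanding the right-hand side and subtracting $Q_n(t)^2$ from both sides yields
\begin{equation*}
Q_n(t+1)^2 - Q_n(t)^2 \le 2 Q_n(t)\bigl(A_n(t) - S_n(t)\bigr) + \bigl(A_n(t)-S_n(t)\bigr)^2.
\end{equation*}

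Next I would sum this inequality over $n=1,\ldots,N$. By definition of the inner product, the first term on the right becomes exactly $2\inner{\Q(t)}{\A(t)-\s(t)}$. For the second term, I would invoke the boundedness assumptions in Assumptions~\ref{assume:arrival} and \ref{assume:service}: since $A_n(t) \le A_\Sigma(t) \le A_{\max}$ (as all arrivals are routed to a single queue) and $S_n(t) \le S_{\max}$, each summand $(A_n(t)-S_n(t))^2$ is bounded by $(A_{\max}+S_{\max})^2$. Hence the total contribution is at most $K := N(A_{\max}+S_{\max})^2$, which is finite and deterministic.

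Combining these two steps gives
\begin{equation*}
\norm{\Q(t+1)}^2 - \norm{\Q(t)}^2 \le 2\inner{\Q(t)}{\A(t)-\s(t)} + K,
\end{equation*}
as required. There is no real obstacle here; the only delicate observation is the elementary inequality $([x]^+)^2 \le x^2$, which is what makes the unused-service term $U_n(t)$ disappear from the bound and keeps the calculation clean. The constant $K$ is explicit and depends only on $N$, $A_{\max}$, and $S_{\max}$, all of which are independent of $t$ and of the load-balancing policy.
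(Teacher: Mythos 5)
Your proof is correct and follows essentially the same route as the paper: apply the queue dynamics, use $([x]^+)^2\le x^2$ to drop the unused-service term, expand the square, and bound $\norm{\A(t)-\s(t)}^2$ by a deterministic constant via $A_{\max}$ and $S_{\max}$. The only (immaterial) difference is your constant $K=N(A_{\max}+S_{\max})^2$ versus the paper's slightly tighter $K=N\max(A_{\max},S_{\max})^2$; both are valid since any finite constant suffices.
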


	    \begin{proof}
	      Consider the left-hand-side (LHS) of Eq. \eqref{eq:lemma2}.
	      \begin{equation*}
	        \begin{split}
	        LHS &= \norm{\Q(t) + \A(t)-\s(t) +\UU(t)}^2- \norm{\Q(t)}^2\\
	        & \lep{a} \norm{\Q(t) + \A(t)-\s(t)}^2- \norm{\Q(t)}^2\\
	        & = 2\inner{\Q(t)}{\A(t)-\s(t)} + \norm{\A(t)-\s(t)}^2\\
	        & \lep{b} 2\inner{\Q(t)}{\A(t)-\s(t)}+ K
	        \end{split}
	      \end{equation*}
	      where inequality (a) holds as $[\max(a,0)]^2 \le a^2 $ for any $a\in \mathbb{R}$; in inequality (b), $K \triangleq N\max(A_{\max},S_{\max})^2$ holds due to the assumptions that  $A_\Sigma(t) \le A_{\max}$ and $S_n(t) \le S_{\max}$ for all $t \ge 0$ and all $n \in \mathcal{N}$, and independent of the queue length.
	    \end{proof}  
	    We are now ready to prove Lemma \ref{LEMMA:THTROUGHPUT}.

    \textbf{Proof of Lemma \ref{LEMMA:THTROUGHPUT}:} 
    The proof follows from the application of Lemma \ref{lem:basis} to the Markov chain $\{Z^{(\epsilon)}(t),t\ge0\}$ with Lyapunov function $V(Z^{(\epsilon)}) := \norm{\Q^{(\epsilon)}}$ and $T = T_1$ since $m^{(\epsilon)}(t)$ is always finite. In particular, this proof is completed in two steps, where the superscript $^{(\epsilon)}$ will be omitted for ease of notations. 

    (i) First, in order to apply Lemma \ref{lem:basis}, we need to show that the Markov chain $\{Z(t), t\ge0\}$ is irreducible, aperiodic and positive recurrent under the hypothesis of Lemma \ref{LEMMA:THTROUGHPUT}. It can be easily seen that $\{Z(t),t\ge0\}$ is irreducible and aperiodic. Thus, we are left with the task to prove that the Markov chain is positive recurrent. By the extension of Foster-Lyapunov theorem, it suffices to find a Lyapunov function and a positive constant $T$ such that the expected $T$ time slot Lyapunov drift is bounded within a finite subset of the state space and negative outside this subset. 

    Consider the Lyapunov function $W(Z):= \norm{\Q}^2$, and the corresponding expected $T_1$ time slot mean conditional Lyapunov drift under the hypothesis of Lemma \ref{LEMMA:THTROUGHPUT}.
    \begin{equation}
    \label{eq:upperonD}
        \begin{split}
          &\ex{W(Z(t_0+T_1)) - W(Z(t_0)) \mid Z(t_0)} \\
          = & \ex{\norm{\Q(t_0+T_1)}^2 - \norm{\Q(t_0)}^2 \mid Z(t_0)}\\
          = & \ex{\sum_{t = t_0}^{t_0+T_1-1}(\norm{\Q(t+1)}^2 - \norm{\Q(t)}^2) \mid Z(t_0)}\\
          \lep{a} &2 \ex{\sum_{t = t_0}^{t_0+T_1-1}\inner{\Q(t)}{\A(t)-\s(t)} + K \mid Z(t_0)}\\
          \lep{b} & -2\gamma \norm{\Q(t_0)} + 2K_1+ 2KT_1
        \end{split}
    \end{equation}
    where inequality (a) follows from Lemma \ref{lem:drift_of_Qsquare}, and inequality (b) results directly from the hypothesis in Eq. \eqref{eq:condition_throughput}. Pick any $\beta > 0$ and let $\mathcal{B} = \{Z\in \mathcal{S}: \norm{\Q} \le \frac{K_1+KT_1+\beta}{\gamma}\}$. Then $\mathcal{B}$ is a finite subset of $\mathcal{S}$ as $m(t)$ is finite. Moreover, for any $Z\in \mathcal{B}$, the conditional mean drift is less or equal to $2K_1+ 2KT_1$, and for any $Z \in \mathcal{B}^c$, it is less than or equal to  $-\beta$. This finishes the proof of positive recurrence for any $\epsilon > 0$, and hence throughput optimal.

    (ii) Second, in order to show that the hypothesis in Lemma \ref{LEMMA:THTROUGHPUT} also ensures the bounded moments for the stationary distribution, we will resort to Lemma \ref{lem:basis}. Thus, we need to check Conditions (C1) and (C2), respectively.

    For Condition (C1), we have 
    \begin{equation*}
        \begin{split}
          &\ex{\Delta V({Z})\mid Z(t_0)=Z}\\
          = &\ex{\norm{\Q(t_0+T_1)} - \norm{\Q(t_0)} \mid Z(t_0) = Z}\\
          = &\ex{\sqrt{\norm{\Q(t_0+T_1)}^2} - \sqrt{\norm{\Q(t_0)}^2}  \mid Z(t_0) = Z}\\
          \lep{a} &\frac{1}{2\norm{\Q(t_0)}}\ex{\norm{\Q(t_0+T_1)}^2 - \norm{\Q(t_0)}^2 \mid Z(t_0) = Z}\\
          \lep{b} & -\gamma + \frac{K_1 + KT_1}{\norm{\Q(t_0)}}
        \end{split}
    \end{equation*}
    where inequality (a) follows from the fact that $f(x) = \sqrt{x}$ is concave; (b) comes from the upper bound in Eq. \eqref{eq:upperonD}. Hence, (C1) in Lemma \ref{lem:basis} is verified.

    For Condition (C2), we have 
    \begin{equation*}
        \begin{split}
          |\Delta V(Z)| &= | \norm{\Q(t_0+T_1)} - \norm{\Q(t_0)} | \mathcal{I}(Z(t_0) = Z)\\
          & \lep{a} \norm{\Q(t_0+T_1) - \Q(t_0)}\mathcal{I}(Z(t_0) = Z)\\
          & \lep{b} T_1 \sqrt{N}\max(A_{\max},S_{\max})
        \end{split} 
    \end{equation*}
    where inequality (a) follows from the fact that  $|\norm{{\bf x}} - \norm{{\bf y}}| \le \norm{{\bf x} - {\bf y}}$ holds for any ${\bf x}$, ${\bf y} \in \mathbb{R}^N$; inequality (b) holds due to the assumptions that the $A_\Sigma(t) \le A_{\max}$ and $S_n(t) \le S_{\max}$ for all $t \ge 0$ and all $n \in \mathcal{N}$, and independent of the queue length. This verifies Condition (C2) and hence complete the proof of Lemma \ref{LEMMA:THTROUGHPUT}.\qed

\section{Proof of Lemma \ref{LEMMA:DELAY}}
\label{sec:proof_lemma_delay}

    We now proceed to prove Lemma \ref{LEMMA:DELAY}. Before we present the proof, the following lemmas which serve as useful preliminary steps are first introduced. 
Denote by $\Qp$ and $\Qc$ the parallel and perpendicular components of the queue length vector $\Q$ with respect to the line $\mathbf{c} = \frac{1}{\sqrt{N}}\mathbf{1}$, i.e.,
    \begin{equation}
    \label{eq:def_Qc}
    	\Qp:=\inner{\mathbf{c}}{\Q}\mathbf{c}  \quad \quad \Qc:=\Q - \Qp
    \end{equation}
The following lemma is a natural extension of Lemma 7 in \cite{eryilmaz2012asymptotically} to $T$ time slots.
    \begin{lemma}
    \label{lem:bound}
      Define the following Lyapunov functions
      \begin{equation*}
        V_\perp(Z) := \norm{\Qc}, W(Z) :=\norm{\Q}^2 \text{ and } W_\parallel(Z):= \norm{\Qp}^2
      \end{equation*}
      with the corresponding $T$ time-slot drift given by
      \begin{equation*}
        \begin{split}
            &\Delta V_\perp(Z):= [V_\perp(Z(t_0+T)) - V_\perp(Z(t_0))] \mathcal{I}(Z(t_0) = Z)\\
            &\Delta W(Z):= [W(Z(t_0+T)) - W(Z(t_0))] \mathcal{I}(Z(t_0) = Z)\\
            &\Delta W_\parallel(Z):= [W_\parallel(Z(t_0+T)) - W_\parallel(Z(t_0))] \mathcal{I}(Z(t_0) = Z)\\
        \end{split}   
      \end{equation*}
      Then, the drift of $V_\perp(.)$ can be bounded in terms of $W(.)$ and $W_\parallel(.)$ as follows.
      \begin{equation*}
        \Delta V_\perp(Z) \le \frac{1}{2\norm{\Qc}}(\Delta W(Z)-\Delta W_\parallel(Z)) 
      \end{equation*}
      for all $Z \in \mathcal{S}$.
    \end{lemma}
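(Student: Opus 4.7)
The plan is to combine the orthogonal decomposition of $\Q$ with the concavity of the square-root function. Since $\Qp = \inner{\mathbf{c}}{\Q}\mathbf{c}$ and $\Qc = \Q - \Qp$ are orthogonal components of $\Q$, the Pythagorean identity gives $\norm{\Q}^2 = \norm{\Qp}^2 + \norm{\Qc}^2$ at every state, so
\begin{equation*}
V_\perp(Z)^2 \;=\; \norm{\Qc}^2 \;=\; W(Z) - W_\parallel(Z).
\end{equation*}
In particular, $V_\perp(Z) = \sqrt{W(Z) - W_\parallel(Z)}$, which rewrites the quantity whose drift we want to control as the square root of the difference of the two quantities on the right-hand side.

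The main step is then the tangent-line bound for the concave function $f(x) = \sqrt{x}$: for every $x > 0$ and $y \ge 0$, $\sqrt{y} - \sqrt{x} \le (y - x)/(2\sqrt{x})$. I would instantiate this with $x = W(Z(t_0)) - W_\parallel(Z(t_0)) = \norm{\Qc(t_0)}^2$ and $y = W(Z(t_0+T)) - W_\parallel(Z(t_0+T)) = \norm{\Qc(t_0+T)}^2$. Subtracting, multiplying through by the indicator $\mathcal{I}(Z(t_0) = Z)$, and using the definitions of $\Delta V_\perp$, $\Delta W$, and $\Delta W_\parallel$ from the lemma yields
\begin{equation*}
\Delta V_\perp(Z) \;\le\; \frac{\Delta W(Z) - \Delta W_\parallel(Z)}{2\,\norm{\Qc(t_0)}},
\end{equation*}
which is the claimed bound.

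The only delicate point is the degenerate case $\norm{\Qc(t_0)} = 0$, where the tangent-line inequality breaks down because $\sqrt{\cdot}$ is not differentiable at $0$. This is a minor issue and can be handled in two equivalent ways: restrict the bound to states $Z$ with $\norm{\Qc} > 0$ (which is all that is needed in the downstream use in Lemma \ref{LEMMA:DELAY}, since the perpendicular drift is only meaningful when $\norm{\Qc}$ is positive), or adopt the convention that the right-hand side is $+\infty$ when $\norm{\Qc(t_0)} = 0$, in which case the inequality is trivial. Either way, no real obstacle arises: the proof is essentially a one-line consequence of Pythagoras plus concavity of $\sqrt{\cdot}$.
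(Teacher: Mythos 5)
Your proof is correct and is essentially the argument the paper relies on: the paper states this lemma without proof as "a natural extension of Lemma 7 in \cite{eryilmaz2012asymptotically}," and that lemma's proof is exactly your combination of the Pythagorean identity $\norm{\Q}^2=\norm{\Qp}^2+\norm{\Qc}^2$ with the concavity (tangent-line) bound for $\sqrt{x}$. Your remark on the degenerate case $\norm{\Qc}=0$ is a reasonable point of care that the paper glosses over, and it is harmless for the downstream use in Lemma \ref{LEMMA:DELAY}.
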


    \begin{lemma}
    \label{lem:drift_of_Qparallel}
        For any $t\ge 0$, we have 
      \begin{equation*}
        \norm{\Qp(t+1)}^2 - \norm{\Qp(t)}^2 \ge 2\inner{\Qp(t)}{\A(t)-\s(t)}.
       \end{equation*}
    \end{lemma}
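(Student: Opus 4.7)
The plan is to compute the one-step change $\|\Qp(t+1)\|^2 - \|\Qp(t)\|^2$ directly from the queueing dynamics~\eqref{eq:q_dynamic}, decompose everything along the direction $\mathbf{c} = \frac{1}{\sqrt{N}}\mathbf{1}$, and then discard two non-negative terms.

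First I would write the queueing update in vector form as $\Q(t+1) = \Q(t) + \A(t) - \s(t) + \mathbf{U}(t)$, project both sides onto $\mathbf{c}$, and note that by the definition in~\eqref{eq:def_Qc} the projection is linear, so
$$\Qp(t+1) = \Qp(t) + \bigl(\A(t)-\s(t)+\mathbf{U}(t)\bigr)_\parallel.$$
Expanding the square then gives
$$\|\Qp(t+1)\|^2 - \|\Qp(t)\|^2 = 2\bigl\langle \Qp(t),\,\bigl(\A(t)-\s(t)+\mathbf{U}(t)\bigr)_\parallel\bigr\rangle + \bigl\|\bigl(\A(t)-\s(t)+\mathbf{U}(t)\bigr)_\parallel\bigr\|^2.$$
The last term is non-negative and can be dropped to produce a lower bound.

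Next I would eliminate the ``parallel'' subscript from the inner product. Since $\Qp(t)$ is itself parallel to $\mathbf{c}$, for any vector $\mathbf{v}\in\mathbb{R}^N$ one has $\inner{\Qp(t)}{\mathbf{v}_\parallel} = \inner{\Qp(t)}{\mathbf{v}}$, because the perpendicular component of $\mathbf{v}$ is orthogonal to $\Qp(t)$. Applying this with $\mathbf{v} = \A(t)-\s(t)+\mathbf{U}(t)$ yields
$$\|\Qp(t+1)\|^2 - \|\Qp(t)\|^2 \ge 2\inner{\Qp(t)}{\A(t)-\s(t)} + 2\inner{\Qp(t)}{\mathbf{U}(t)}.$$

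The last step is to show $\inner{\Qp(t)}{\mathbf{U}(t)}\ge 0$. This follows because $\Qp(t) = \bigl(\tfrac{1}{N}\sum_n Q_n(t)\bigr)\mathbf{1}$ has all non-negative entries (queue lengths are non-negative) and $\mathbf{U}(t)$ has all non-negative entries by its definition $U_n(t) = \max(S_n(t)-Q_n(t)-A_n(t),0)$ in~\eqref{eq:q_dynamic}. Dropping this non-negative inner product gives the claimed inequality.

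There is no real obstacle here; the argument is a routine but careful expansion. The only thing one must be attentive to is the direction of the inequality: we are dropping two non-negative quantities (the squared-norm of the projected increment and the inner product with $\mathbf{U}(t)$) from the right-hand side to obtain a lower bound, which is precisely what is needed — in contrast with Lemma~\ref{lem:drift_of_Qsquare}, where the $\mathbf{U}(t)$ term had to be handled as an upper bound via $[\max(a,0)]^2\le a^2$.
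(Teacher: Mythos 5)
Your proof is correct and follows essentially the same route as the paper's: expand the square of the one-step change in $\Qp$, drop the non-negative squared-norm term, use orthogonality of the perpendicular component to $\Qp(t)$ to pass from the projected increment to the full increment, and discard $\inner{\Qp(t)}{\UU(t)}\ge 0$ because both vectors have non-negative entries. The only cosmetic difference is that you invoke linearity of the projection up front, whereas the paper writes $\Qp(t+1)-\Qp(t) = (\Q(t+1)-\Q(t)) - (\Qc(t+1)-\Qc(t))$ and kills the second term by orthogonality; these are the same argument.
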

    \begin{proof}
      \begin{equation*}
        \begin{split}
        &\norm{\Qp(t+1)}^2 - \norm{\Qp(t)}^2 \\ 
        = &2\inner{\Qp(t)}{\Qp(t+1) - \Qp(t)} + \norm{\Qp(t+1) - \Qp(t)}^2\\
        \ge &2\inner{\Qp(t)}{\Qp(t+1) - \Qp(t)} \\
        = &2 \inner{\Qp(t)}{\Q(t+1) -\Q(t)} - 2\inner{\Qp(t)}{\Qc(t+1) - \Qc(t)}\\
        \gep{a} &2 \inner{\Qp(t)}{\Q(t+1) -\Q(t)}\\
        \gep{b} &2 \inner{\Qp(t)}{\A(t)-\s(t)}
      \end{split}
    \end{equation*}
    where the inequality (a) is true because $\inner{\Qp(t)}{\Qc(t)} = 0$ and $\inner{\Qc(t+1)}{\Qp(t)} = 0$; (b) follows from the fact that all the components of $\Qp(t)$ and $\UU(t)$ are nonnegative.
    \end{proof}
    We are now ready to prove the following result, which is  often called \textit{state space collapse} and is the key ingredient for establishing heavy traffic delay optimality. It shows that under the hypothesis of Lemma \ref{LEMMA:DELAY}, the multi-dimension space for the queue length vector will reduce to one dimension in the sense that the deviation from the line $\mathbf{c}$ is bounded by a constant, which is independent with the heavy-traffic parameter $\epsilon$.
    \begin{lemma}
    \label{lem:collapse}
    	If the hypothesis of Lemma \ref{LEMMA:DELAY} holds, then we have that $\Qc$ is bounded in the sense that in steady state there exists finite constants $\{L_r, r \in \mathbb{N}\}$ such that 
    	\begin{equation*}
    		\ex{\norm{\overline{\Q}_\perp^{(\epsilon)} } ^r} \le L_r
    	\end{equation*}
    	for all $\epsilon \in (0,\epsilon_0)$ and $r \in \mathbb{N}$.
    \end{lemma}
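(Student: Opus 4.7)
The plan is to apply Lemma \ref{lem:basis} to the Markov chain $\{Z^{(\epsilon)}(t)\}$ with Lyapunov function $V_\perp(Z)=\norm{\Qc}$ and drift horizon $T=T_2$, and then translate the resulting exponential tail bound into the claimed moment bounds $\ex{\norm{\overline{\Q}_\perp^{(\epsilon)}}^r}\le L_r$ uniform in $\epsilon\in(0,\epsilon_0)$. Throughout the proof I drop the superscript $(\epsilon)$ and keep track only of the fact that the constants I produce do not depend on $\epsilon$.

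The first main step is to bound the $T_2$-slot drift $\ex{\Delta V_\perp(Z)\mid Z(t_0)=Z}$. By Lemma \ref{lem:bound}, it suffices to bound
\[
\ex{\Delta W(Z)-\Delta W_\parallel(Z)\mid Z(t_0)=Z}
\]
from above. I will bound $\Delta W$ from above by telescoping Lemma \ref{lem:drift_of_Qsquare} over the $T_2$ slots, which gives
\[
\Delta W(Z)\le 2\sum_{t=t_0}^{t_0+T_2-1}\inner{\Q(t)}{\A(t)-\s(t)}+T_2 K,
\]
and I will bound $\Delta W_\parallel$ from below by telescoping Lemma \ref{lem:drift_of_Qparallel}, giving
\[
\Delta W_\parallel(Z)\ge 2\sum_{t=t_0}^{t_0+T_2-1}\inner{\Qp(t)}{\A(t)-\s(t)}.
\]
Subtracting and using $\inner{\Q(t)}{\cdot}-\inner{\Qp(t)}{\cdot}=\inner{\Qc(t)}{\cdot}$, the cross term collapses to a sum of $\inner{\Qc(t)}{\A(t)-\s(t)}$, which is exactly the quantity controlled by the hypothesis \eqref{eq:cond_delay}. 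Taking conditional expectation and applying \eqref{eq:cond_delay} yields, for all $\epsilon<\epsilon_0$,
\[
\ex{\Delta W(Z)-\Delta W_\parallel(Z)\mid Z(t_0)=Z}\le -2\eta\norm{\Qc}+2K_2+T_2 K.
\]

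The second step is to translate this into a drift condition on $V_\perp$ itself. By Lemma \ref{lem:bound},
\[
\ex{\Delta V_\perp(Z)\mid Z(t_0)=Z}\le \frac{1}{2\norm{\Qc}}\bigl(-2\eta\norm{\Qc}+2K_2+T_2 K\bigr)=-\eta+\frac{K_2+T_2 K/2}{\norm{\Qc}}.
\]
Setting $\kappa:=2(K_2+T_2 K/2)/\eta$, the right-hand side is bounded above by $-\eta/2$ whenever $\norm{\Qc}\ge\kappa$, verifying condition (C1) of Lemma \ref{lem:basis} with constants $\eta/2$ and $\kappa$ that do \emph{not} depend on $\epsilon$. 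Condition (C2) is immediate: using $|\norm{\mathbf{x}}-\norm{\mathbf{y}}|\le\norm{\mathbf{x}-\mathbf{y}}$, the projection inequality $\norm{\Qc(t+1)-\Qc(t)}\le\norm{\Q(t+1)-\Q(t)}$, and the boundedness of $A_\Sigma$ and $S_n$, one gets $|\Delta V_\perp(Z)|\le T_2\sqrt{N}\max(A_{\max},S_{\max})$ almost surely, again independently of $\epsilon$.

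The third step is to invoke Lemma \ref{lem:basis}, which now yields the existence of $\theta^*>0$ and $C^*<\infty$, independent of $\epsilon\in(0,\epsilon_0)$, such that $\ex{e^{\theta^*\norm{\overline{\Q}_\perp^{(\epsilon)}}}}\le C^*$. In particular, for every $r\in\mathbb{N}$, $x^r\le (r/\theta^*)^r e^{\theta^* x}/e^r$ for $x\ge 0$, so one can define $L_r:=(r/(e\theta^*))^r C^*$ and conclude $\ex{\norm{\overline{\Q}_\perp^{(\epsilon)}}^r}\le L_r$ uniformly in $\epsilon$. The only delicate point in the whole argument is ensuring that $\eta$, $K_2$, $T_2$ in \eqref{eq:cond_delay} are genuinely $\epsilon$-free; since that is part of the hypothesis of Lemma \ref{LEMMA:DELAY} (and hence of this lemma), all constants $\kappa$, the drift slack $\eta/2$, and the bound in (C2) inherit that $\epsilon$-independence, which is exactly what the proof needs.
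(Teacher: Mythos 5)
Your proof is correct and follows essentially the same route as the paper's: it verifies conditions (C1) and (C2) of Lemma \ref{lem:basis} for $V_\perp$ by combining Lemma \ref{lem:bound} with the telescoped drifts from Lemmas \ref{lem:drift_of_Qsquare} and \ref{lem:drift_of_Qparallel} and the hypothesis \eqref{eq:cond_delay}. The only (immaterial) differences are that you make the choice of $\kappa$ and the moment-from-exponential-tail step explicit, and your (C2) bound uses the non-expansiveness of the projection directly rather than the triangle inequality, saving a factor of $2$.
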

    \begin{proof}
    	It suffices to show that $V_\perp(Z)$ satisfies the Conditions (C1) and (C2) in Lemma \ref{lem:basis}. Fix $\epsilon \in (0,\epsilon_0)$, and the superscript will be omitted for simplicity in the following arguments.

      (i) For the Condition (C1), let $\Lambda(t) := \norm{\Q(t+1)}^2 - \norm{\Q(t)}^2$ and $\Lambda_\parallel(t) := \norm{\Qp(t+1)}^2 - \norm{\Qp(t)}^2$. Then, we have 
    \begin{equation*}
          \begin{split}
            &\ex{\Delta V_\perp(Z) \mid Z(t_0) = Z} \\
            \lep{a}&\frac{1}{2\norm{\Qc}} \ex{\Delta W(Z) - \Delta W_\parallel(Z)\mid Z(t_0) = Z}\\
            = &\frac{1}{2\norm{\Qc}}\ex{\sum_{t=t_0}^{t_0+T_2-1}\Lambda(t)  - \Lambda_\parallel(t)\mid Z(t_0) = Z }\\
            \lep{b}& \frac{1}{2\norm{\Qc(t_0)}}\ex{\sum_{t=t_0}^{t_0+T_2-1} 2\inner{\Qc(t)}{\A(t)-\s(t)} + K \mid Z(t_0) = Z}\\
            \lep{c} &-\eta + \frac{2K_2 + KT_2}{2\norm{\Qc(t_0)} }
          \end{split}
    \end{equation*}
    where the inequality (a) follows from Lemma \ref{lem:bound}; the inequality (b) holds as a result of Lemmas \ref{lem:drift_of_Qsquare} and \ref{lem:drift_of_Qparallel}; the inequality (c) follows directly from the hypothesis  Eq. \eqref{eq:cond_delay}. Hence, the Condition (C1) is verified.

    (ii) For the Condition (C2), we have 
    \begin{align}
    \label{eq:boundedQc}
            & |\Delta V_\perp(Z)| \nonumber\\
            = & | \norm{\Qc(t_0+T_2)} - \norm{\Qc(t_0)} | \mathcal{I}(Z(t_0) = Z)\nonumber \\
            \lep{a} & \norm{\Qc(t_0+T_2) - \Qc(t_0)}\mathcal{I}(Z(t_0) = Z)\nonumber\\
            = &\norm{\Q(t_0+T_2) - \Qp(t_0+T_2) - \Q(t_0) + \Qp(t_0)} \mathcal{I}(Z(t_0) = Z)\nonumber\\
            \lep{b} &\norm{\Q(t_0+T_2) - \Q(t_0)}  + \norm{\Qp(t_0+T_2) - \Qp(t_0)} \mathcal{I}(Z(t_0) = Z)\nonumber\\
            \lep{c} & 2\norm{\Q(t_0+T_2) - \Q(t_0)} \mathcal{I}(Z(t_0) = Z)\nonumber\\
            \lep{d} & 2T_2\sqrt{N} \max(A_{\max},S_{\max})  
    \end{align}
    where the inequality (a) follows from the fact that  $|\norm{{\bf x}} - \norm{{\bf y}}| \le \norm{{\bf x} - {\bf y}}$ holds for any ${\bf x}$, ${\bf y} \in \mathbb{R}^N$; inequality (b) follows from triangle inequality; (c) holds due to the non-expansive property of projection to a convex set. (d) holds due to the assumptions that the $A_\Sigma(t) \le A_{\max}$ and $S_n(t) \le S_{\max}$ for all $t \ge 0$ and all $n \in \mathcal{N}$, and independent of the queue length. This verifies Condition (C2) and hence complete the proof of Lemma \ref{lem:collapse}.
    \end{proof}

    The following result on the unused service is another key ingredient for establishing heavy-traffic delay optimal.
    \begin{lemma}
    \label{lem:unused_service}
    For any $\epsilon >0$ and $t\ge 0$, we have 
    \begin{equation*}
    \label{eq:QtimesU}
    Q_n^{(\epsilon)}(t+1)U_n^{(\epsilon)}(t) = 0 \text{ and } q^{(\epsilon)}(t+1)u^{(\epsilon)}(t) = 0.
    \end{equation*}
    If the system has a finite first moment, then we have for some constants $c_1$ and $c_2$
 	\begin{equation*}
 	\label{eq:ule}
 		\ex{\norm{\overline{\UU}^{(\epsilon)}}^2} \le c_1 \epsilon \text{ and } \ex{({\overline{u}^{(\epsilon)}})^2} \le c_2\epsilon
 	\end{equation*}
    \end{lemma}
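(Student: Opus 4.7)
The plan is to handle the two parts in order, using only the pathwise dynamics \eqref{eq:q_dynamic} for the identities and the stationarity together with a pointwise domination for the second-moment bounds.

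For the identity $Q_n^{(\epsilon)}(t+1) U_n^{(\epsilon)}(t) = 0$, I would argue pathwise. By definition $U_n^{(\epsilon)}(t) = \max(S_n^{(\epsilon)}(t) - Q_n^{(\epsilon)}(t) - A_n^{(\epsilon)}(t), 0)$, so if $U_n^{(\epsilon)}(t) > 0$ then $U_n^{(\epsilon)}(t) = S_n^{(\epsilon)}(t) - Q_n^{(\epsilon)}(t) - A_n^{(\epsilon)}(t)$, and substituting into $Q_n^{(\epsilon)}(t+1) = Q_n^{(\epsilon)}(t) + A_n^{(\epsilon)}(t) - S_n^{(\epsilon)}(t) + U_n^{(\epsilon)}(t)$ immediately yields $Q_n^{(\epsilon)}(t+1) = 0$. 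Hence the product vanishes in every sample path. The identical one-line argument applies verbatim to the resource-pooled system, giving $q^{(\epsilon)}(t+1)u^{(\epsilon)}(t) = 0$.

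The second-moment bounds reduce to a first-moment balance via a simple pointwise domination. Since $U_n^{(\epsilon)}(t) \le S_n^{(\epsilon)}(t) \le S_{\max}$ pathwise, we have $(U_n^{(\epsilon)}(t))^2 \le S_{\max}\, U_n^{(\epsilon)}(t)$, so it suffices to bound $\ex{\overline{U}_n^{(\epsilon)}}$ by a multiple of $\epsilon$. Taking expectations in \eqref{eq:q_dynamic} in the stationary regime and using $\ex{\overline{Q}_n^{(\epsilon)}(t+1)} = \ex{\overline{Q}_n^{(\epsilon)}(t)}$ (this is where the finite-first-moment hypothesis is needed to ensure the cancellation is legitimate) gives $\ex{\overline{U}_n^{(\epsilon)}} = \mu_n - \ex{\overline{A}_n^{(\epsilon)}}$. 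Summing over $n$ and using $\sum_n \ex{\overline{A}_n^{(\epsilon)}} = \lambda_\Sigma^{(\epsilon)} = \mu_\Sigma - \epsilon$ produces $\sum_n \ex{\overline{U}_n^{(\epsilon)}} = \epsilon$. Combining with the pathwise inequality yields
\begin{equation*}
\ex{\norm{\overline{\UU}^{(\epsilon)}}^2} = \sum_n \ex{(\overline{U}_n^{(\epsilon)})^2} \le S_{\max}\sum_n \ex{\overline{U}_n^{(\epsilon)}} = S_{\max}\,\epsilon,
\end{equation*}
so $c_1 = S_{\max}$ works. For the resource-pooled system, the analogous bound $\overline{u}^{(\epsilon)} \le \overline{s}^{(\epsilon)} = \sum_n \overline{S}_n^{(\epsilon)} \le N S_{\max}$ together with the steady-state identity $\ex{\overline{u}^{(\epsilon)}} = \mu_\Sigma - \lambda_\Sigma^{(\epsilon)} = \epsilon$ gives $\ex{(\overline{u}^{(\epsilon)})^2} \le N S_{\max}\,\epsilon$, so $c_2 = N S_{\max}$ suffices.

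I do not anticipate a real obstacle: the structure of the proof is entirely mechanical once the two key ingredients (pathwise boundedness of the unused service by $S_{\max}$ and the stationary balance $\sum_n \ex{\overline{U}_n^{(\epsilon)}} = \epsilon$) are in place. The only subtle point is ensuring the interchange $\ex{\overline{Q}_n^{(\epsilon)}(t+1) - \overline{Q}_n^{(\epsilon)}(t)} = 0$, which is precisely what the ``finite first moment'' hypothesis in the lemma statement is there to supply.
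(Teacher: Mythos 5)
Your proof is correct and follows essentially the same route as the paper: the pathwise observation that $U_n^{(\epsilon)}(t)>0$ forces $Q_n^{(\epsilon)}(t+1)=0$, the zero-drift balance in steady state yielding $\ex{\norm{\overline{\UU}^{(\epsilon)}}_1}=\epsilon$ (the paper phrases this via the Lyapunov function $\norm{\Q}_1$, you via the per-queue balance summed over $n$, which is the same computation), and the domination $U_n \le S_{\max}$ to pass from the first moment to the second, giving the same constants $c_1=S_{\max}$ and $c_2=NS_{\max}$.
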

    \begin{proof}
    	According to the queue dynamic in Eq. \eqref{eq:q_dynamic}, we can see when $U_n(t)$ is positive, $Q_n(t+1)$ must be zero, which gives the results $Q_n^{(\epsilon)}(t+1)U_n^{(\epsilon)}(t) = 0$ for all $n\in \mathcal{N}$ and all $t\ge0$, and the corresponding result for the resource-pooled system $q^{(\epsilon)}(t+1)u^{(\epsilon)}(t) = 0$. 

	Then, let us consider the Lyapunov function $W_1(Z(t)) = \norm{\Q(t)}_1$. In the steady state with a finite first moment, the mean drift of $W_1(Z(t))$ is zero. Then, we have
	\begin{equation*}
	\begin{split}
	  0 = \ex{\norm{\A^{(\epsilon)}}_1 - \norm{\s}_1 + \norm{\overline{\UU}^{(\epsilon)} }_1}
	\end{split}  
	\end{equation*}
	which directly implies 
  \begin{equation}
  \label{eq:unused_one_norm}
    \ex{\norm{\overline{\UU}^{(\epsilon)}}_1} = \epsilon
  \end{equation}
   Moreover, due to the fact that $U_n(t) \le S_{\max}$ for all $n\in\mathcal{N}$ and $t\ge0$, we have $\norm{\overline{\UU}^{(\epsilon)}}^2 \le S_{\max}\norm{\overline{\UU}^{(\epsilon)}}_1$. Therefore, we can conclude that $\ex{\norm{\overline{\UU}^{(\epsilon)}}^2} \le S_{\max} \epsilon$ and $\ex{({\overline{u}^{(\epsilon)})}^2} \le NS_{\max} \epsilon$.
    \end{proof}
    Now, we are well prepared to prove Lemma \ref{LEMMA:DELAY}

    \textbf{Proof of Lemma \ref{LEMMA:DELAY}:}
    First, let us consider the Lyapunov function $V_1(Z) := \norm{\Q}_1^2$ and the corresponding conditional mean drift, defined as $D_1(Z(t)) := \ex{V_1(Z(t+1)) - V_1(Z(t)) \mid Z(t) = Z}.$

  Then, we have the following equation, in which the time reference $(t)$ will be omitted after the second step for brevity and $\Q^+ := \Q(t+1)$.
    \begin{equation}
    \label{eq:heavy}
      \begin{split}
        &D_1(Z(t))\\
        = & \ex{ \norm{\Q(t+1)}_1^2 - \norm{\Q(t)}_1^2 \mid Z(t) = Z    }\\
        = & \ex{\left(\norm{\Q(t)}_1+\norm{\A(t)}_1-\norm{\s(t)}_1 + \norm{\UU(t)}_1 \right)^2  \mid Z(t) = Z  }\\
        & - \ex{\norm{\Q(t)}_1^2 \mid Z(t) = Z  }\\
        = & \mathbb{E}\left[ 2\norm{\Q{}}_1\left( \norm{\A}_1 - \norm{\s}_1\right) + \left( \norm{\A}_1-\norm{\s}_1\right)^2\right.\\
         & +\left.{} 2\left(\norm{\Q}_1 +\norm{\A}_1 - \norm{\s}_1 \right) \norm{\UU}_1 + \norm{\UU}_1^2 \vphantom{\left( \norm{\A}_1-\norm{\s}_1\right)^2}      \mid Z \right]\\
        = & \mathbb{E}\left[ 2\norm{\Q}_1\left( \norm{\A}_1 - \norm{\s}_1\right) + \left( \norm{\A}_1-\norm{\s}_1\right)^2\right.\\
         & +\left.{} 2\norm{\Q^+}_1 \norm{\UU}_1 - \norm{\UU}_1^2 \vphantom{\left( \norm{\A}_1-\norm{\s}_1\right)^2}      \mid Z \right]\\
        \le &  \mathbb{E}\left[ 2\norm{\Q}_1\left( \norm{\A}_1 - \norm{\s}_1\right) + \left( \norm{\A}_1-\norm{\s}_1\right)^2\right.\\
         & +\left.{} 2\norm{\Q^+}_1 \norm{\UU}_1 \vphantom{\left( \norm{\A}_1-\norm{\s}_1\right)^2}      \mid Z \right]
      \end{split}
    \end{equation}


    Under the hypothesis of Lemma \ref{LEMMA:DELAY}, there exists a steady-state distribution with finite moments for any $\epsilon > 0$. Therefore, the mean drift in steady-state is zero, i.e., $\ex{{D_1}(\overline{Z}^{(\epsilon)})} = 0$. Therefore, taking the expectation of both sides of Eq. \eqref{eq:heavy} with respect to the steady-state distribution $\overline{Z}^{(\epsilon)}$, yields 
    \begin{equation*}
        \begin{split}
          \epsilon \ex{\sum_{n=1}^{N}\overline{Q}_n^{(\epsilon)}} \le \frac{\zeta^{(\epsilon)}}{2} + \ex{\norm{\overline{\Q}^{(\epsilon)}(t+1)}_1 \norm{\overline{\UU}^{(\epsilon)} (t)}_1}
        \end{split}
    \end{equation*}
    where $\zeta^{(\epsilon)} = (\sigma_\Sigma^{(\epsilon)})^2 + \nu_\Sigma^2 + \epsilon^2$. 
	For the resource-pooled system, by letting $N=1$ in Eq. \eqref{eq:heavy} and taking the expectation  with respect to $\overline{q}^{(\epsilon)}$, we have
    \begin{equation*}
        \begin{split}
        \epsilon \ex{\overline{q}^{(\epsilon)}} & = \frac{\zeta^{(\epsilon)}}{2} + \ex{\overline{q}^{(\epsilon)}(t+1)u^{(\epsilon)}(t)} - \frac{1}{2}\ex{(u^{(\epsilon)})^2}.\\
        \end{split}
    \end{equation*}

    Then, based on the property on the unused service in Lemma \ref{lem:unused_service}, we have 
    \begin{equation}
    \label{eq:cross}
          \frac{\zeta^{(\epsilon)}}{2}  - \frac{c_2}{2}\epsilon \le  \epsilon\ex{\overline{q}^{(\epsilon)}} \le \epsilon\ex{\sum_{n=1}^{N}\overline{Q}_n^{(\epsilon)}} \le \frac{\zeta^{(\epsilon)}}{2} + \overline{B}^{(\epsilon)}
    \end{equation}
    where $\overline{B}^{(\epsilon)} :=\ex{\norm{\overline{\Q}^{(\epsilon)}(t+1)}_1 \norm{\overline{\UU}^{(\epsilon)} (t)}_1}$. 

    Therefore, in order to show heavy-traffic delay optimality, all we need to show is that $\lim_{\epsilon \downarrow 0} \overline{B}^{(\epsilon)} = 0$. Note that $ \overline{B}^{(\epsilon)}$ can be bounded as follows.
    \begin{equation*}
    \begin{split}
      \overline{B}^{(\epsilon)} & \ep{a} N\ex{\inner{\overline{\UU}^{(\epsilon)}(t)} {-\overline{\Q}^{(\epsilon)}_{\perp}(t+1)}}\\
      & \lep{b} N \sqrt{\ex{\norm{\overline{\UU}^{(\epsilon)}_{\perp}(t)}^2} \ex{\norm{\overline{\Q}^{(\epsilon)}_{\perp}(t+1)}^2}}\\
      & \ep{c} N \sqrt{\ex{\norm{\overline{\UU}^{(\epsilon)}_{\perp}(t)}^2} \ex{\norm{\overline{\Q}^{(\epsilon)}_{\perp}(t)}^2}},\\
    \end{split} 
    \end{equation*}
    where the equality (a) comes from the property $Q_n^{(\epsilon)}(t+1)U_n^{(\epsilon)}(t) = 0$ for all $n\in \mathcal{N}$ and all $t\ge0$ in Lemma \ref{lem:unused_service} and the definition of $\Qc$; the inequality (b) holds due to Cauchy-Schwartz inequality; the last equality (c) is true since the distributions of $\overline{\Q}^{(\epsilon)}_{\perp}(t+1)$ and $\overline{\Q}^{(\epsilon)}_{\perp}(t)$ are the same in steady state.

    As shown in Lemma \ref{lem:collapse}, $\ex{\norm{ \overline{\Q}_{\perp}^{(\epsilon)} }^2} \le L_2$ holds for all $\epsilon \in (0,\epsilon_0)$ and some constant $L_2$ which is independent of $\epsilon$. Meanwhile, note that $\ex{\norm{\overline{\UU}^{(\epsilon)}}^2} \le c_1 \epsilon$ for some $c_1$ independent of $\epsilon$ based on Lemma \ref{lem:unused_service}. Then, we have for all $\epsilon \in (0,\epsilon_0)$
    \begin{equation}
    \label{eq:final}
            \overline{B}^{(\epsilon)} \le  N\sqrt{c_1 \epsilon L_2}
    \end{equation}
    Therefore, it can be  seen from Eq. \eqref{eq:final} that $\lim_{\epsilon \downarrow 0} \overline{B}^{(\epsilon)} = 0$, which directly implies $\lim_{\epsilon \downarrow 0} \epsilon\ex{\sum_n\overline{Q}_n^{(\epsilon)}} = \lim_{\epsilon \downarrow 0} \epsilon \ex{\overline{q}^{(\epsilon)}}$, and thus the proof of Lemma \ref{LEMMA:DELAY} is completed.\qed

\externaldocument{proofappendix.tex}
\section{Proof of theorem \ref{THM:JIQ}}
\label{sec:proof_JIQ}
Before we present the proof of Theorem \ref{THM:JIQ}, we first show that both the following Lyapunov functions have finite expectations in steady state under JIQ, i.e., throughput optimal in a strong sense.
\begin{equation*}
	W(Z):=\norm{\Q}^2, V_1(Z):= \norm{\Q}_1^2
\end{equation*}

\begin{lemma}
\label{lem:pre_JIQ}
	Consider a load balancing system with homogeneous servers under JIQ policy, the steady state means $\ex{W(\overline{Z}^{(\epsilon)})}$ and $\ex{V_1(\overline{Z}^{(\epsilon)})}$ are both finite for any $\epsilon > 0$.
\end{lemma}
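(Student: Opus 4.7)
The plan is to apply Lemma~\ref{LEMMA:THTROUGHPUT} by verifying a one-step negative drift condition for JIQ under homogeneous servers. Once the hypothesis of that lemma is satisfied for each fixed $\epsilon > 0$, we obtain $\ex{\norm{\overline{\Q}^{(\epsilon)}}^r} \le M_r < \infty$ for every $r \in \mathbb{N}$, from which $\ex{W(\overline{Z}^{(\epsilon)})} = \ex{\norm{\overline{\Q}^{(\epsilon)}}^2} \le M_2$, and $\ex{V_1(\overline{Z}^{(\epsilon)})} = \ex{\norm{\overline{\Q}^{(\epsilon)}}_1^2} \le N \ex{\norm{\overline{\Q}^{(\epsilon)}}^2} \le N M_2$ by the standard norm equivalence $\norm{\Q}_1 \le \sqrt{N}\norm{\Q}$. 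So the whole task reduces to establishing the drift condition for JIQ.

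To do that, I will first show that JIQ's dispatching distribution $\mathbf{P}(t)$, or a distribution equivalent to it in inner product, is always tilted. Two cases arise. If the memory is empty, JIQ routes uniformly at random over all $N$ servers, which gives $\Delta_n(t) = 0$ for every $n$ and is trivially tilted. If $m(t) \neq \emptyset$, JIQ routes all arrivals uniformly at random to a server whose ID is stored in $m(t)$; by the JIQ protocol, every such server has $Q_n(t) = 0$ at the start of slot $t$, so the probability mass of $\mathbf{P}(t)$ is concentrated on queues tied for the shortest length. Breaking ties so that the servers in $m(t)$ occupy the first $|m(t)|$ positions of the sorted permutation $\sigma_t$, the resulting distribution is tilted with break index $k = |m(t)|+1$, and this reordering does not change the inner product with $\Q_{\sigma_t}(t)$. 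With tiltedness in hand, Lemma~\ref{lem:tilted} (using $\mu_{min}/\mu_\Sigma = 1/N$ in the homogeneous case) immediately yields
\[
\ex{\inner{\Q(t)}{\A(t) - \s(t)} \mid Z(t)} \le -\frac{\epsilon}{N}\norm{\Q(t)}.
\]
By the tower property of conditional expectation, the hypothesis of Lemma~\ref{LEMMA:THTROUGHPUT} then holds with $T_1 = 1$, $\gamma = \epsilon/N > 0$, and $K_1 = 0$, which closes the argument.

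The only non-routine point, and thus the main obstacle, is the claim that every ID currently in $m(t)$ corresponds to an empty queue. I expect to handle this by a short induction on $t$ using the JIQ update rules: a server ID is added to memory only at the instant its queue becomes empty, and is removed the moment the dispatcher assigns new arrivals to it; while the server remains in memory no arrivals are routed to it, so its queue length can only stay at zero (any potential service merely inflates $U_n(t)$ without affecting $Q_n(t+1)$). This property is intuitive from the model description but needs the slot-wise ordering of arrivals, dispatching decisions, services, and memory updates to be pinned down carefully; once that is done, everything else follows from the earlier lemmas.
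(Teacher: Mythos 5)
Your proposal is correct and follows essentially the same route as the paper: both apply Lemma~\ref{LEMMA:THTROUGHPUT} with $T_1=1$ via the one-step drift bound $\ex{\inner{\Q(t)}{\A(t)-\s(t)}\mid Z(t)}\le -\frac{\epsilon}{N}\norm{\Q(t)}$ (which rests on the same ``servers in memory are idle'' property of JIQ), and then pass from $\norm{\overline{\Q}^{(\epsilon)}}^2$ to $\norm{\overline{\Q}^{(\epsilon)}}_1^2$ by the factor $N$. The only cosmetic difference is that you obtain the drift bound by verifying tiltedness and invoking Lemma~\ref{lem:tilted}, whereas the paper computes the two inner-product terms directly.
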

\begin{proof}
	This proof is a direct application of Lemma \ref{LEMMA:THTROUGHPUT}. Let us consider $T_1 = 1$ in Lemma \ref{LEMMA:THTROUGHPUT}, we have 
	\begin{equation*}
	\begin{split}
		&\ex{\inner{\Q(t)}{\A(t)-\s(t)}\mid Z(t) = Z}\\
		=& \ex{\inner{\Q(t)}{\A(t)}\mid Z(t) = Z} - \ex{\inner{\Q(t)}{\s(t)}\mid Z(t) = Z}\\
		\ep{a}& \ex{\inner{\Q(t)}{\A(t)}\mid Z(t) = Z} - \mu \norm{\Q}_1\\
		\lep{b}& \left(\mu - \frac{\epsilon}{N}\right) \norm{\Q}_1 - \mu \norm{\Q}_1\\
		\lep{c} & -\frac{\epsilon}{N}\norm{\Q}
	\end{split}	
	\end{equation*}
	where step (a) follows from the fact that the servers are homogeneous and each service process is independent with the system state; inequality (b) holds due to the property of JIQ; inequality (c) holds since $l_1$ norm $\norm{\mathbf{x}}_1$ of any vector $\mathbf{x} \in \mathbb{R}^N$ is no smaller than its $l_2$ norm $\norm{\mathbf{x}}$. Thus, according to Lemma 1, we have $\ex{W(\overline{Z}^{(\epsilon)})}$ is finite for any $\epsilon > 0$.

	Note that $\norm{\Q}_1^2 \le N \norm{\Q}^2$. Therefore, it follows that $\ex{V_1(\overline{Z}^{(\epsilon)})}$ is also finite.
\end{proof}
 Now, we are well prepared to prove Theorem \ref{THM:JIQ}

 \textbf{Proof of Theorem \ref{THM:JIQ}:}
Recall Remark \ref{rem:not_optimal}, to prove JIQ is not heavy-traffic delay optimal, it suffices to find a certain combination of arrival and service processes under which the heavy-traffic limit under JIQ cannot achieve the heavy-traffic limit of resource-pooled system, while JSQ can. Mathematically, it is sufficient to show that there exists a subsequence of $\epsilon$ such that Eq.~\eqref{eq:heavy_traffic_def} (definition of heavy-traffic delay optimality) does not hold. In particular, in this proof we consider the case of two homogeneous servers with arrival process in $\mathcal{A}$ and constant service with rate one.

We will prove the result by contradiction. Suppose the result in Theorem \ref{THM:JIQ} does not hold, i.e., JIQ is heavy-traffic delay optimal. Then it means that  $\lim_{\epsilon \downarrow 0} \overline{B}^{(\epsilon)} = 0$, i.e., $\limsup_{\epsilon \downarrow 0} \overline{B}^{(\epsilon)} = 0$ must hold in our considered case. This is true since 
\begin{equation*}
	\frac{\zeta^{(\epsilon)}}{2} + \overline{B}^{(\epsilon)} - c_1 \epsilon \le \epsilon \ex{\sum_{n=1}^{2}\overline{Q}_n^{(\epsilon)}} \le \frac{\zeta^{(\epsilon)}}{2} + \overline{B}^{(\epsilon)},
\end{equation*}
which comes from the fact that $\ex{V_1(\overline{Z}^{(\epsilon)})}$ is finite under JIQ by Lemma \ref{lem:pre_JIQ} and the property of unused service in Lemma \ref{lem:unused_service}.

Note that $\Q^+ := \Q(t+1)$, we have 
\begin{equation*}
\label{eq:B}
\begin{split}
	\overline{B}^{(\epsilon)} &\ep{a} \ex{(\overline{Q}_1^{+})^{(\epsilon)}\overline{U}_2^{(\epsilon)} + (\overline{Q}_2^{+})^{(\epsilon)}\overline{U}_1^{(\epsilon)}}\\
	&\ep{b} 2\ex{(\overline{Q}_1^{+})^{(\epsilon)}\overline{U}_2^{(\epsilon)}} \\
	&\ep{c} 2 \sum_{k=1}^\infty k \mathbb{P}\left( (\overline{Q}_1^{+})^{(\epsilon)} = k,  \overline{U}_2^{(\epsilon)}>0   \right), \\
\end{split}	
\end{equation*}
where inequality (a) follows from the property shown in Lemma \ref{lem:unused_service} that $Q_n(t+1)U_n(t) = 0$ for all $t\ge0$; equality (b) holds due to the symmetry property of JIQ policy; inequality (c) comes from the fact that the unused service is one when it is positive since the service rate is constant $1$. Since $\limsup_{\epsilon \downarrow 0} \overline{B}^{(\epsilon)} = 0$, we have 
\begin{equation}
	\label{eq:case2}
		\lim_{\epsilon \downarrow 0} \sum_{k=1}^\infty k \mathbb{P}\left( (\overline{Q}_1^{+})^{(\epsilon)} = k,  \overline{U}_2^{(\epsilon)}>0   \right) = 0.
\end{equation}
Now, let us consider another Lyapunov function $V_2(Z(t)) = (Q_1(t)-Q_2(t))^2$. By the fact that the service process is constant, we have 
	\begin{equation*}
	\begin{split}
		&\ex{V_2(Z(t+1)) - V_2(Z(t)) \mid Z(t) = Z}\\
		=&2\ex{(Q_1-Q_2)(A_1(t)-A_2(t))\mid Z} + \ex{(A_1(t) - A_2(t))^2 \mid Z}\\
		&-2\ex{Q_1^+U_2(t) + Q_2^+U_1(t) \mid Z} - \ex{(U_1(t) - U_2(t))^2 \mid Z}\\
		= & 2\ex{(Q_1-Q_2)(A_1(t)-A_2(t))\mid Z} + \ex{(A_1(t) - A_2(t))^2 \mid Z}\\
		& - 4\ex{Q_1^+U_2(t)\mid Z} - \ex{(U_1(t) - U_2(t))^2 \mid Z}.
	\end{split}	
	\end{equation*}
Since $\ex{V_2(Z(t))} \le \ex{W(Z(t))} < \infty$ by Lemma \ref{lem:pre_JIQ}, the mean drift of $V_2(.)$ in steady state is zero, which implies the following equation in steady state for any $\epsilon > 0$.
\begin{equation}
	\label{eq:JIQ_important}
	\begin{split}
		&4 \ex{(\overline{Q}_1^{+})^{(\epsilon)}\overline{U}_2^{(\epsilon)}}\\
		=& 2\ex{\left(\overline{Q}_1^{(\epsilon)} - \overline{Q}_2^{(\epsilon)}\right) \left( \overline{A}_1^{(\epsilon)} - \overline{A}_2^{(\epsilon)}\right)} + \ex{\left( \overline{A}_1^{(\epsilon)} - \overline{A}_2^{(\epsilon)}\right)^2}\\
		& -\ex{\left( \overline{U}_1^{(\epsilon)}  -  \overline{U}_2^{(\epsilon)}  \right)^2}.
	\end{split}
\end{equation}

In the following, we will analyze each term on the right-hand side of Eq. \eqref{eq:JIQ_important}.

To start with, let us look at the first term on the right-hand side of Eq. \eqref{eq:JIQ_important}, denoted as $\mathcal{T}_1$. For ease of exposition, we shall omit the superscript $^{(\epsilon)}$ after the first step, and reintroduce it when necessary. In the following term $\mathcal{T}_1$, the equality (a) follows from the symmetry property of JIQ policy; (b) holds since when both queues are non-idle random routing with equal probability is adopted in JIQ policy; (c) comes from the fact that when one queue is idle, the arrival is always routed to the idle queue under JIQ; equality (d) holds since $\overline{\Q}(t+1)$ has the same distribution as $\overline{\Q}(t)$ in steady state.
\begin{equation*}
 \begin{split}
 	\mathcal{T}_1 &=2\ex{\left(\overline{Q}_1^{(\epsilon)} - \overline{Q}_2^{(\epsilon)}\right) \left( \overline{A}_1^{(\epsilon)} - \overline{A}_2^{(\epsilon)}\right)}\\
 	&\ep{a}2\ex{\left(\overline{Q}_1 - \overline{Q}_2 \right)\left( \overline{A}_1- \overline{A}_2\right) \mathbbm{1}\left(\overline{Q}_1 >0, \overline{Q}_2 > 0 \right)}\\
 	& \quad + 4\ex{\left(\overline{Q}_1 - \overline{Q}_2 \right)\left( \overline{A}_1- \overline{A}_2\right) \mathbbm{1}\left(\overline{Q}_1 \ge 0, \overline{Q}_2 = 0 \right)}\\
 	&\ep{b}  4\ex{\left(\overline{Q}_1 - \overline{Q}_2 \right)\left( \overline{A}_1- \overline{A}_2\right) \mathbbm{1}\left(\overline{Q}_1 \ge 0, \overline{Q}_2 = 0 \right)}\\
 	&\ep{c} -4\lambda_{\Sigma} \sum_{k=1}^\infty k\mathbb{P}\left(\overline{Q}_1 = k, \overline{Q}_2 = 0\right)\\
 	& \ep{d} -4\lambda_{\Sigma} \sum_{k=1}^\infty k\mathbb{P}\left(\overline{Q}_1^{+} = k, \overline{Q}_2^{+} = 0, \overline{U}_2 = 0\right)\\
 	& \quad -4\lambda_{\Sigma} \sum_{k=1}^\infty k\mathbb{P}\left(\overline{Q}_1^{+} = k, \overline{Q}_2^{+} = 0, \overline{U}_2 > 0\right).\\ 
 \end{split}
\end{equation*}

Now, let us define three events as follows where $k\ge 1$.
\begin{equation*}
\begin{split}
	&E_{1,k} := \{ \overline{Q}_1^{+} = k, \overline{Q}_2^{+} = 0, \overline{U}_2 = 0\},\\
	&E_{2,k} := \{ \overline{Q}_1(t+2) = k, \overline{Q}_2(t+2) = 0, \overline{U}_2^+ > 0\},\\
	&E_{3,k} := \{ \overline{Q}_1^+ = k, \overline{Q}_2^+ = 0, \overline{U}_2 > 0\}.\\
\end{split}	
\end{equation*}
Then, it can be easily seen that for each occurrence of event $E_{1,k}$, due to the fact that arrival process is in $\mathcal{A}$, there exists a positive probability $p_0$, i.e., the probability of no arrival, such that $E_{2,k-1}$ will happen. This is because when there is no arrival, the unused service for $Q_2$ is $1$ and the queue length of $Q_1$ will decrease by $1$ due to the constant service rate $1$. Since
\begin{equation*}
	\mathbb{P}\left(E_{2,k-1}\right) \ge \mathbb{P}\left(E_{1,k}\right) \mathbb{P}\left(E_{2,k-1} \mid E_{1,k}\right),
\end{equation*}
which directly implies that $\mathbb{P}\left(E_{2,k-1}\right) \ge p_0 \mathbb{P}\left(E_{1,k}\right)$. Thus, we have 
\begin{equation}
\label{eq:E3_E1}
\mathbb{P}\left(E_{3,k-1}\right) = \mathbb{P}\left(E_{2,k-1}\right) \ge p_0 \mathbb{P}\left(E_{1,k}\right). 
\end{equation}
which holds since these events are defined on the steady-state distribution.

Now, let us first rewrite $\mathcal{T}_1$ with respect to events $E_{1,k}$ and $E_{3,k}$ as follows.
\begin{equation}
\label{eq:T1}
\begin{split}
	\mathcal{T}_1 &= -4\lambda_{\Sigma}\left( \sum_{k=1}^\infty k \mathbb{P}\left(E_{1,k}\right) + \sum_{k=1}^\infty k \mathbb{P}\left(E_{3,k}\right) \right)\\
	& \gep{a} -4\lambda_{\Sigma} \left(  \sum_{k=1}^\infty k \frac{1}{p_0}\mathbb{P}\left(E_{3,k-1}\right)  + \sum_{k=1}^\infty k \mathbb{P}\left(E_{3,k}\right)\right)\\
	& = -4\lambda_{\Sigma} \left(  \sum_{k=1}^\infty (k-1+1) \frac{1}{p_0}\mathbb{P}\left(E_{3,k-1}\right)  + \sum_{k=1}^\infty k \mathbb{P}\left(E_{3,k}\right)\right)\\
	& = -4\lambda_{\Sigma} \left( \left( \frac{1}{p_0} + 1\right) \sum_{k=1}^\infty k \mathbb{P}\left(E_{3,k}\right) + \frac{1}{p_0}\sum_{k=1}^\infty\mathbb{P}\left(E_{3,k-1}\right) \right)\\
	& \gep{b} -8 \left( \left( \frac{1}{p_0} + 1\right) \sum_{k=1}^\infty k \mathbb{P}\left(E_{3,k}\right) + \frac{1}{p_0}  \right),
\end{split}	
\end{equation}
where inequality (a) follows from Eq. \eqref{eq:E3_E1}; inequality (b) holds since $\sum_{k=1}^\infty\mathbb{P}\left(E_{3,k-1}\right) \le 1$ and $\lambda_\Sigma \le 2$.

Now, let us reintroduce the superscript $^{(\epsilon)}$. Note that 
\begin{equation*}
	 \lim_{\epsilon \downarrow 0}\sum_{k=1}^\infty k \mathbb{P}\left(E_{3,k}^{(\epsilon)}\right) \ep{a} \lim_{\epsilon \downarrow 0}\sum_{k=1}^\infty k \mathbb{P}\left( (\overline{Q}_1^{+})^{(\epsilon)} = k,  \overline{U}_2^{(\epsilon)}>0   \right) \ep{b} 0,
\end{equation*}
where equality (a) follows from the fact that $U_2(t) > 0$ implies that $Q_2(t+1) = 0$ for all $t\ge0$ and $\epsilon > 0$; equality (b) follows directly from Eq. \eqref{eq:case2}. Taking liminf on both sides of Eq. \eqref{eq:T1}, yields 
\begin{equation}
\label{eq:T1_inf}
	\liminf_{\epsilon \downarrow 0} \mathcal{T}_1 \ge -\frac{8}{p_0}.
\end{equation}
Next, let us turn to consider the second term on the right-hand side of Eq. \eqref{eq:JIQ_important}
\begin{equation*}
\begin{split}
	\mathcal{T}_2 &= \ex{\left( \overline{A}_1^{(\epsilon)} - \overline{A}_2^{(\epsilon)}\right)^2}\\
	& \ep{a} \ex{\left( \overline{A}_1^{(\epsilon)} + \overline{A}_2^{(\epsilon)}\right)^2}\\
	& = \ex{\left(A_{\Sigma}^{(\epsilon)}\right)^2}\\
	& = \left(\sigma_{\Sigma}^{(\epsilon)}\right)^2 + \left(\lambda_{\Sigma}^{(\epsilon)}\right)^2,
\end{split}	
\end{equation*}
where (a) holds since $A_1(t)A_2(t) = 0$ for all $t\ge0$ and $\epsilon > 0$. Taking liminf of both sides, we obtain
\begin{equation}
\label{eq:T2_inf}
	\liminf_{\epsilon \downarrow 0} \mathcal{T}_2 = \sigma_{\Sigma}^2 + 4,
\end{equation}
since $\left(\sigma_{\Sigma}^{(\epsilon)}\right)^2 $ approaches to $\sigma_{\Sigma}^2$.

We are left with the third term on the right-hand side of Eq. \eqref{eq:JIQ_important}.
\begin{equation*}
\begin{split}
	\mathcal{T}_3 &= -\ex{\left( \overline{U}_1^{(\epsilon)}  -  \overline{U}_2^{(\epsilon)}  \right)^2}\\
	& \ge -\ex{ \left(\overline{U}_1^{(\epsilon)}\right)^2 + \left(\overline{U}_2^{(\epsilon)}\right)^2}\\
	& \ep{a} -\ex{\left(\overline{U}_1^{(\epsilon)}\right) + \left(\overline{U}_2^{(\epsilon)}\right)}\\
	& \ep{b} -\epsilon,
\end{split}
\end{equation*}
where (a) follows from the fact the unused service is at most one; (b) follows directly from the property of unused service shown in Eq. \eqref{eq:unused_one_norm} of Lemma \ref{lem:unused_service}. Taking liminf of both sides, yields,
\begin{equation}
\label{eq:T3_inf}
	\liminf_{\epsilon \downarrow 0} \mathcal{T}_3 \ge 0.
\end{equation}
Now, taking liminf on both sides of Eq. \eqref{eq:B} and using Eq. \eqref{eq:JIQ_important}, yields,
\begin{equation*}
\begin{split}
	\liminf_{\epsilon \downarrow 0} B^{(\epsilon)} &\ge \frac{1}{2}\liminf_{\epsilon \downarrow 0}\left(\mathcal{T}_1 + \mathcal{T}_2 + \mathcal{T}_3\right)\\
	&\gep{a} \frac{1}{2}\left(\sigma_{\Sigma}^2 + 4 - \frac{8}{p_0}\right).\\
	&\gp{b} 0.
\end{split}	
\end{equation*}
where (a) follows from the super-additivity of liminf and Eqs. \eqref{eq:T1_inf}, \eqref{eq:T2_inf} and \eqref{eq:T3_inf}; inequality (b) comes from the fact that the arrival process is in $\mathcal{A}$. Therefore, we arrive that $\limsup_{\epsilon \downarrow 0} \overline{B}^{(\epsilon)} > 0 $, which contradicts with our assumption that $\limsup_{\epsilon \downarrow 0} \overline{B}^{(\epsilon)} = 0$. Therefore, the result in Theorem \ref{THM:JIQ} must hold, i.e., JIQ is not heavy-traffic delay optimal for a two-server homogeneous settings.\qed

\section{Additional Simulation Results}
\setcounter{figure}{9}
\label{sec:addition}
	In this section, we provide additional simulation results with various parameters and system sizes. More specifically, we provide the same set of simulation results for larger system sizes, e.g., $50$ and $100$ servers, as well as different combinations of arrival and service distributions for each time slot, including Poisson arrival-constant service, Poisson arrival-bursty service and bursty arrival-Poisson service.

\subsection{Throughput Performance}
	In this subsection, we further explore the throughput performance for different system sizes and different combinations of arrival and service process. In particular, we present the results for $50$ and $100$ heterogeneous servers under Poisson arrival and Poisson service. Moreover, simulation results for $10$ heterogeneous servers under Poisson arrival-constant service, Poisson arrival-bursty service and bursty arrival-Poisson arrival are all presented. Similar trend can be observed in all these results.

\begin{figure}[t]
\graphicspath{{./Figures/}}
\begin{minipage}{3.2in}
\begin{center}
\includegraphics[width=3.2in]{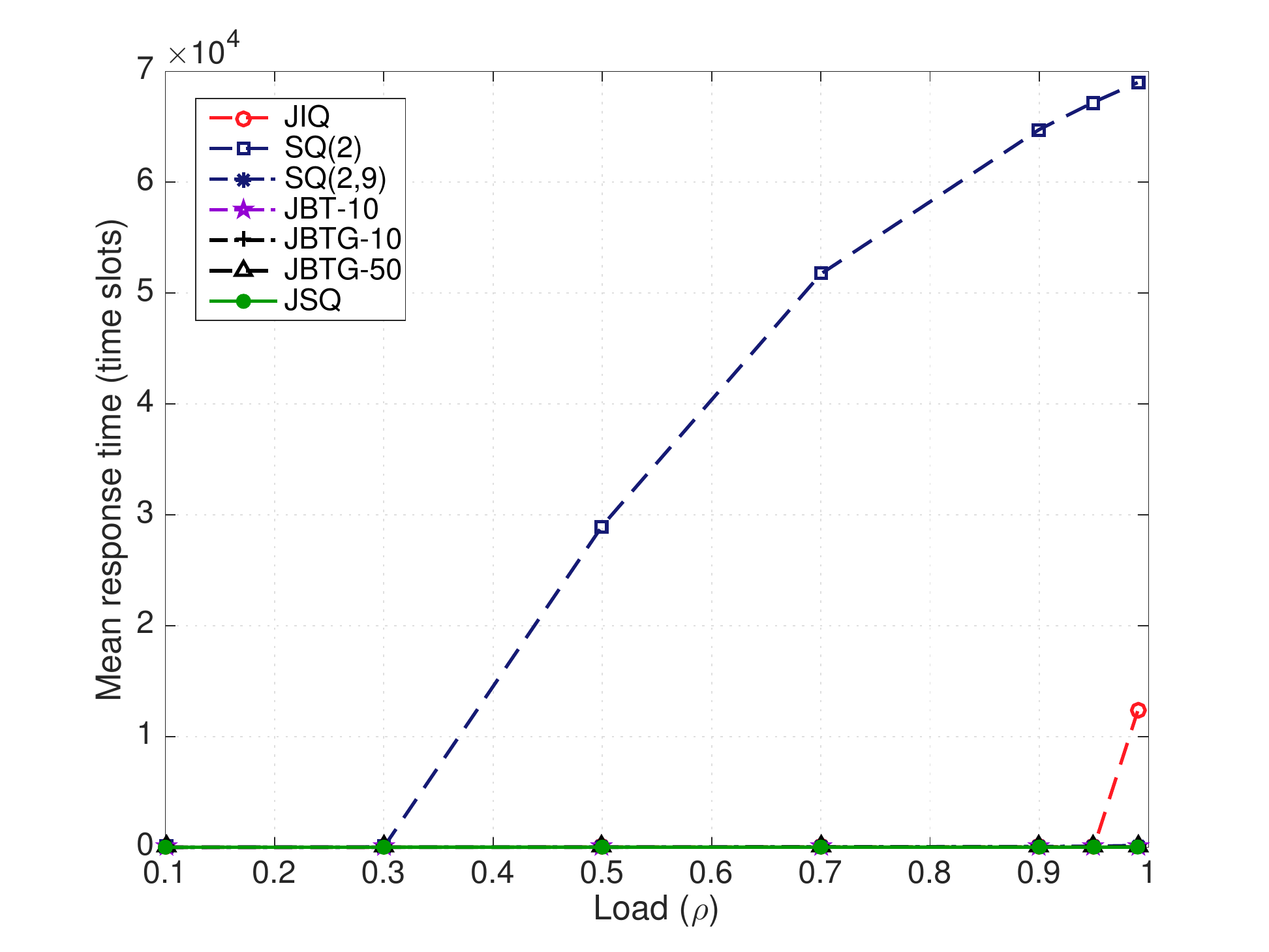}
\caption{Throughput performance in $50$ heterogeneous servers.}\label{fig:throughput_N50}
 \end{center}
\end{minipage}
\hfill
\begin{minipage}{3.2in}
\begin{center}
\includegraphics[width=3.2in]{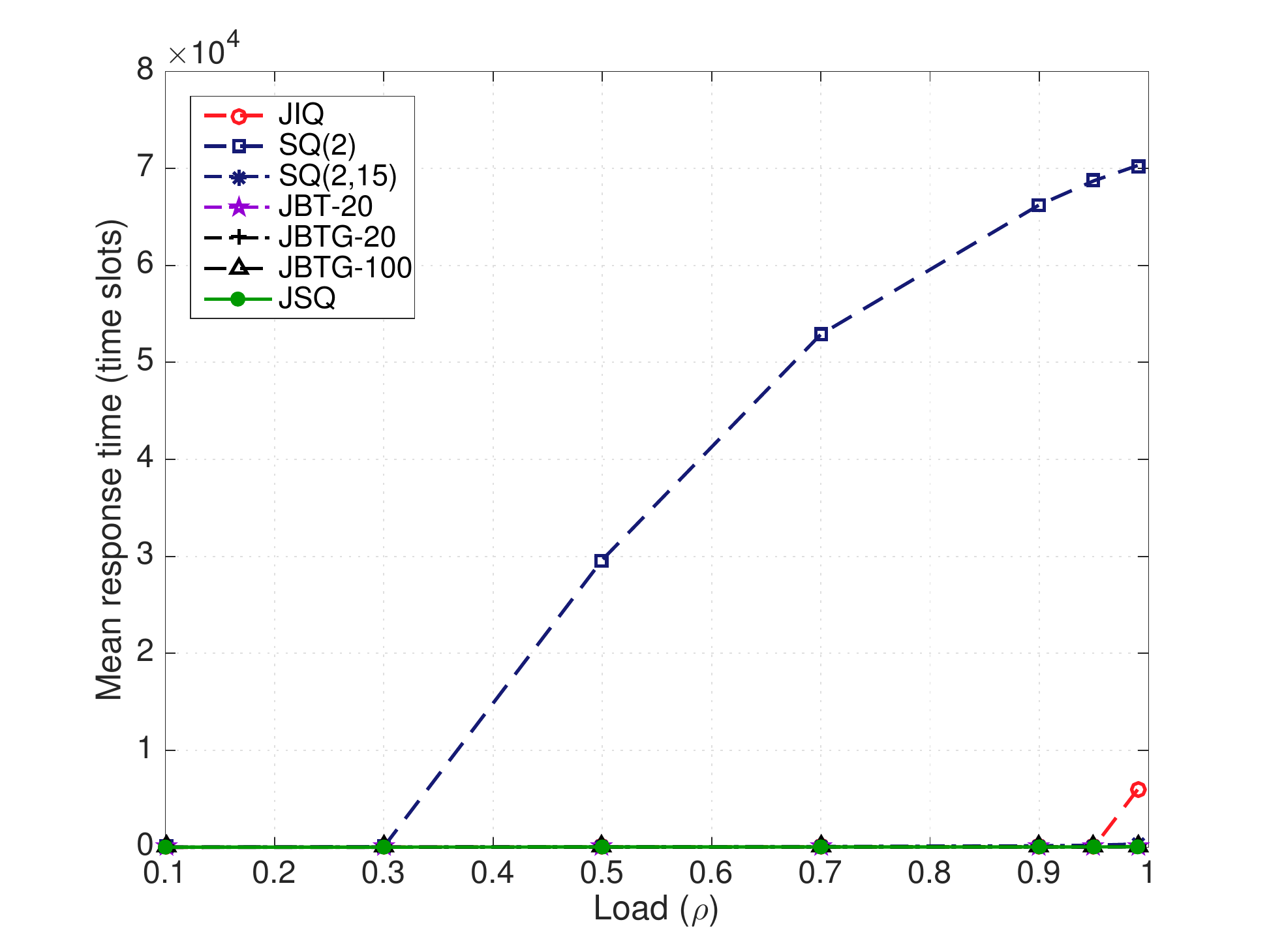}
\caption{Throughput performance in $100$ heterogeneous servers.}\label{fig:throughput_N100}
 \end{center}
\end{minipage}
\end{figure}



	Let us first look at the throughput performance under larger system sizes. Figures \ref{fig:throughput_N50} and \ref{fig:throughput_N100} demonstrate the throughput performance under $50$ and $100$ heterogeneous servers, respectively. As before, the servers are equally divided into two server pools with rate $1$ and $10$. A turning point in each curve indicates that the load approaches the throughput region boundary of the corresponding policy. It can be seen that power-of-$d$ policy has nearly the same throughput region as in the case of $10$ servers. For JIQ policy, it remains unstable for heavy loads, yet it tends to have a larger throughput region than that in the case of $10$ servers, which is intuitive since it is more likely for an arrival to find an idle queue in a  larger system. As expected, our proposed JBTG-$d$ policy is able to stabilize the system for all the considered loads which again agrees with our theoretical results. The JBT-$d$ policy can also ensure stability in these two cases as in the case of $10$ servers. The use of memory for improving stability of power-of-$d$ is verified by the power-of-$d$ with memory policies for both cases (SQ($2$,$9$),SQ($2$,$15$)).

\begin{figure}[t]
\graphicspath{{./Figures/}}
\begin{minipage}{3.2in}
\begin{center}
\includegraphics[width=3.2in]{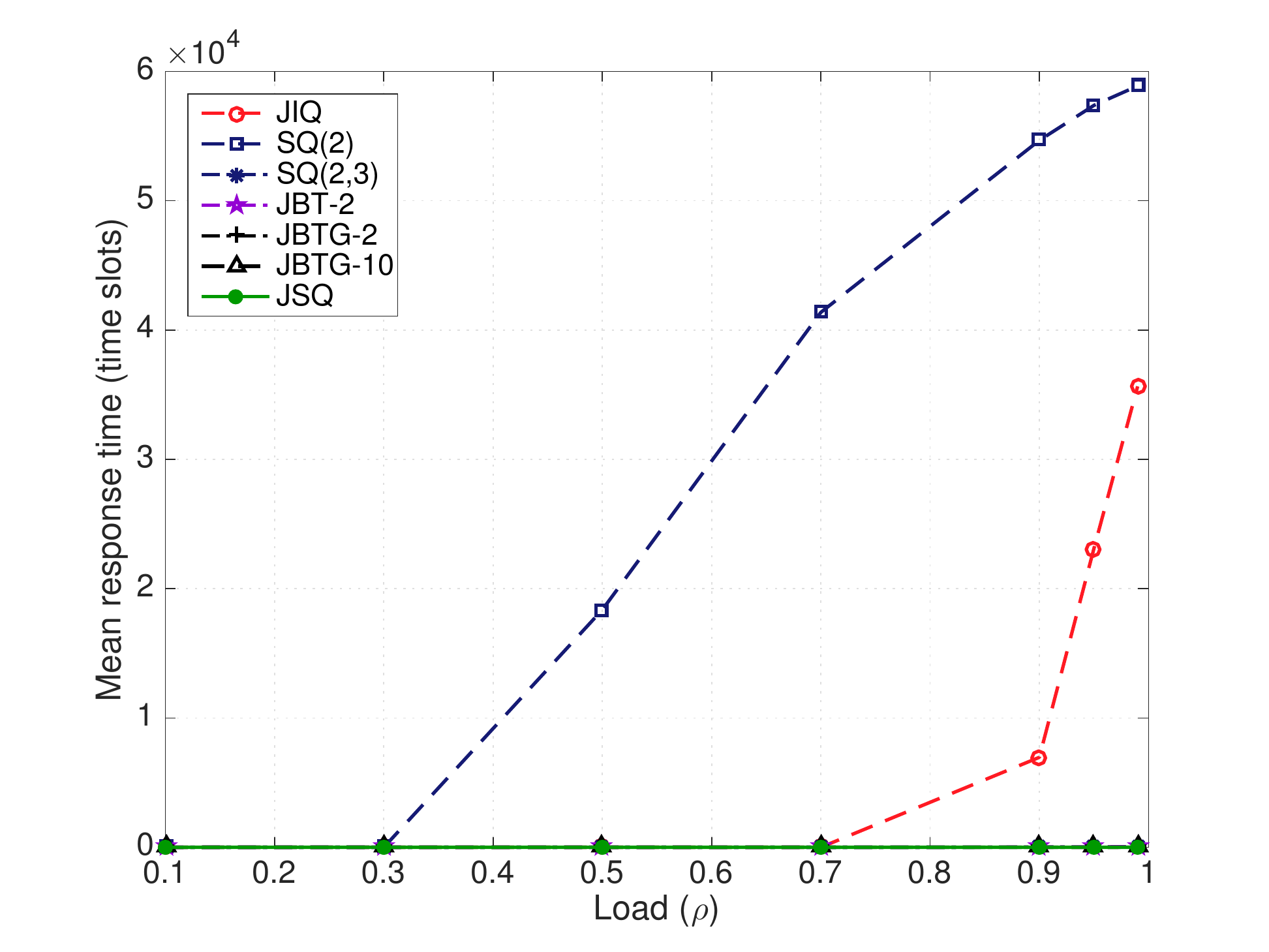}
\caption{Throughput performance in $10$ heterogeneous servers with Poisson arrival and constant service.}\label{fig:throughput_poiss_const}
 \end{center}
\end{minipage}
\hfill
\begin{minipage}{3.2in}
\begin{center}
\includegraphics[width=3.2in]{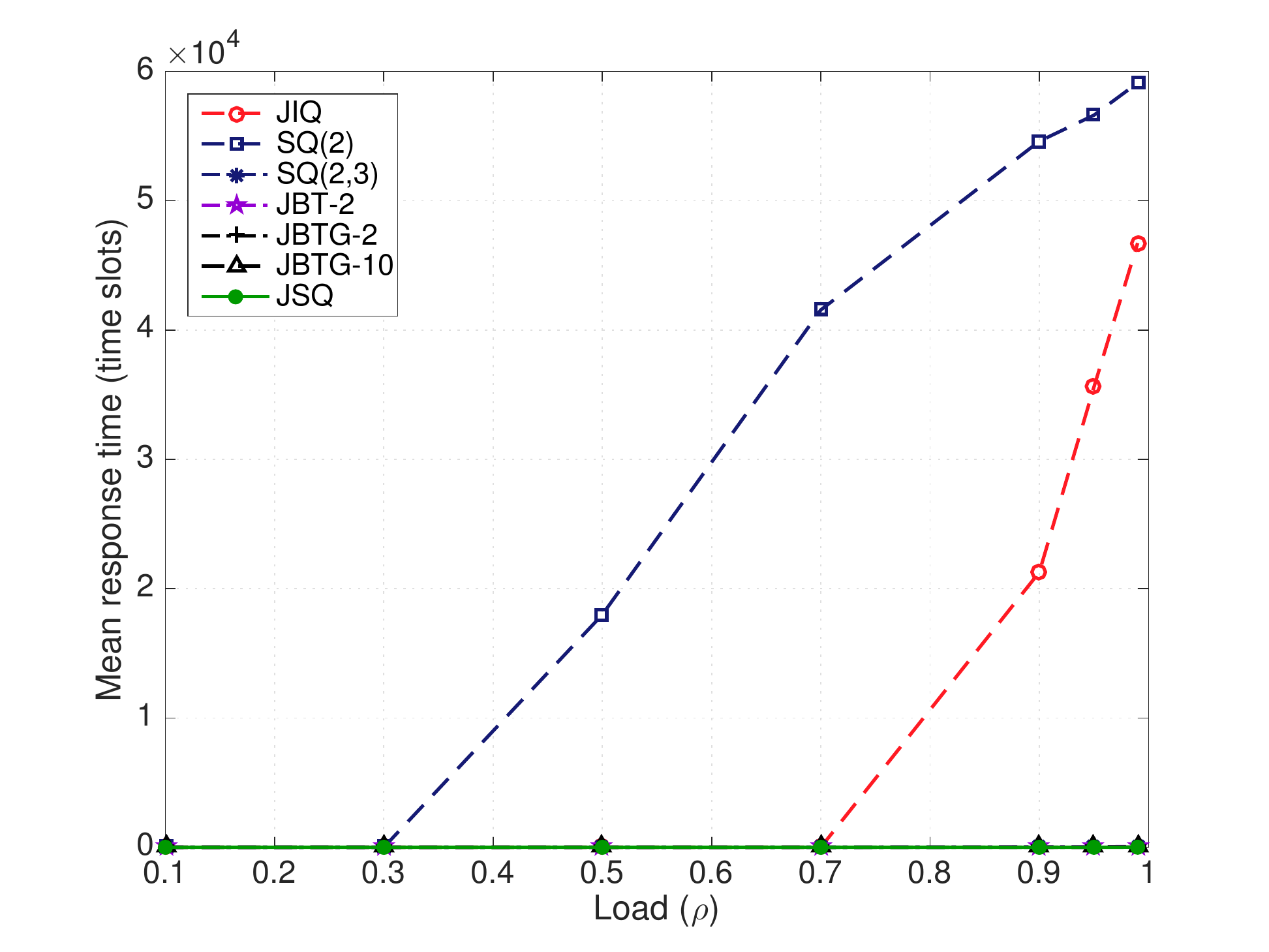}
\caption{Throughput performance in $10$ heterogeneous servers with Poisson arrival and bursty service.}\label{fig:throughput_poiss_burst}
 \end{center}
\end{minipage}
\end{figure}



\begin{figure}[t]
\graphicspath{{./Figures/}}
\begin{minipage}{3.2in}
\begin{center}
\includegraphics[width=3.2in]{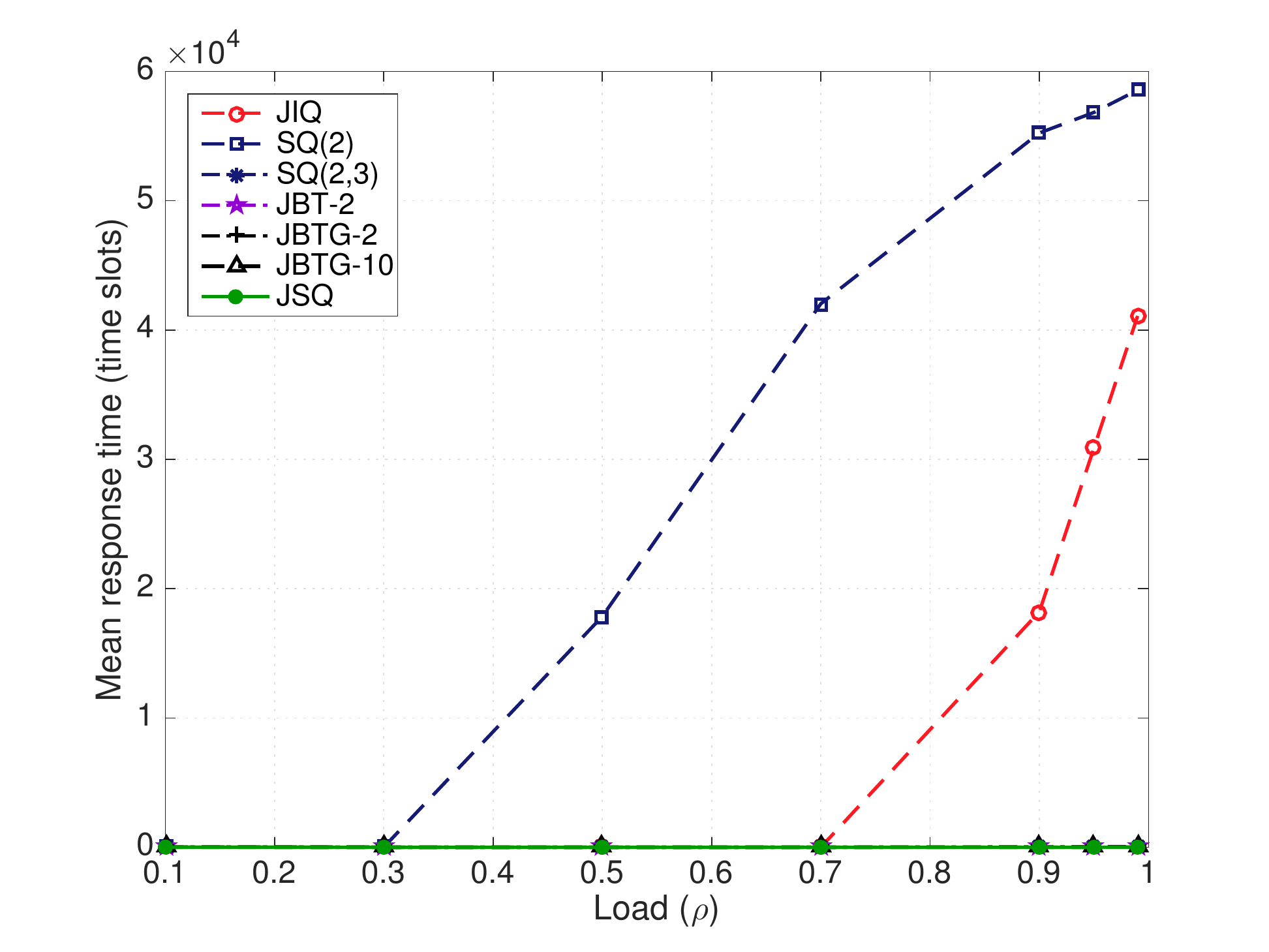}
\caption{Throughput performance in $10$ heterogeneous servers with bursty arrival and Poisson service.}\label{fig:throughput_burst_poiss}
 \end{center}
\end{minipage}
\hfill
\begin{minipage}{3.2in}
\begin{center}
\includegraphics[width=3.2in]{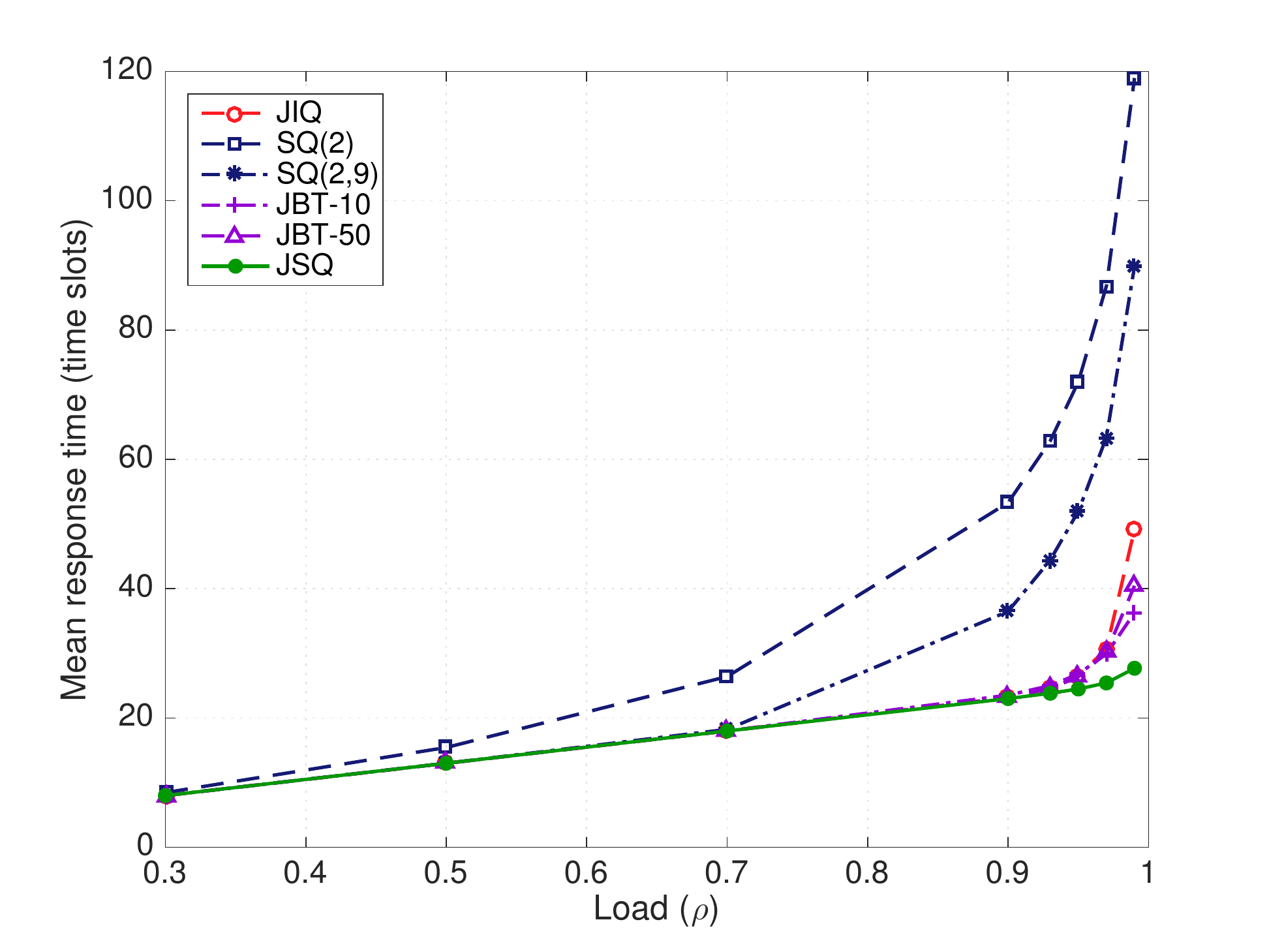}
\caption{Delay performance under $50$ homogeneous servers.}\label{fig:delay_N50}
 \end{center}
\end{minipage}
\end{figure}



\begin{figure}[t]
\graphicspath{{./Figures/}}
\begin{minipage}{3.2in}
\begin{center}
\includegraphics[width=3.2in]{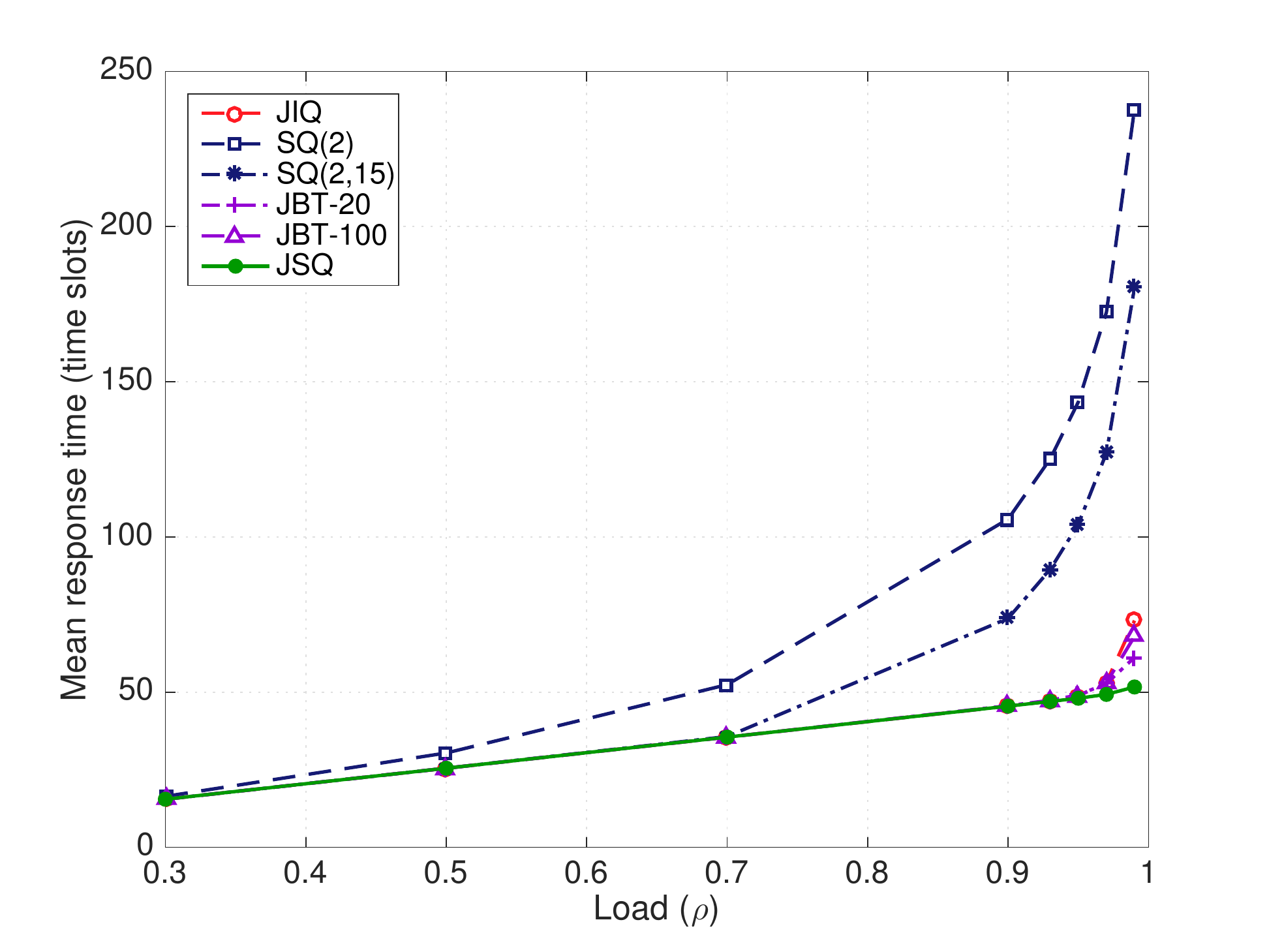}
\caption{Delay performance under $100$ homogeneous servers.}\label{fig:delay_N100}
 \end{center}
\end{minipage}
\hfill
\begin{minipage}{3.2in}
\begin{center}
\includegraphics[width=3.2in]{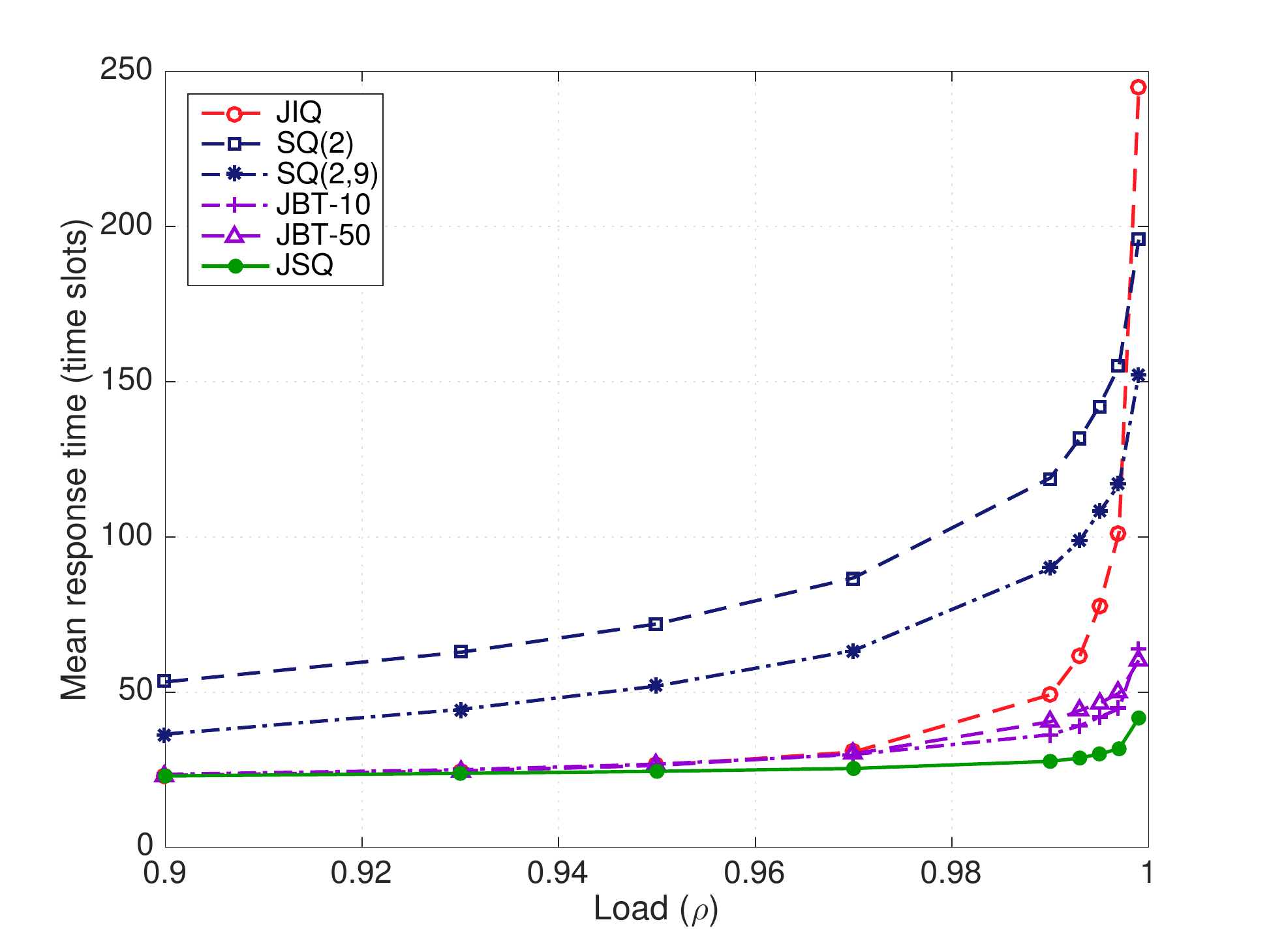}
\caption{Heavy-traffic Delay performance under $50$ homogeneous servers.}\label{fig:heavydelay_N50}
 \end{center}
\end{minipage}
\end{figure}



\begin{figure}[t]
\graphicspath{{./Figures/}}
\begin{minipage}{3.2in}
\begin{center}
\includegraphics[width=3.2in]{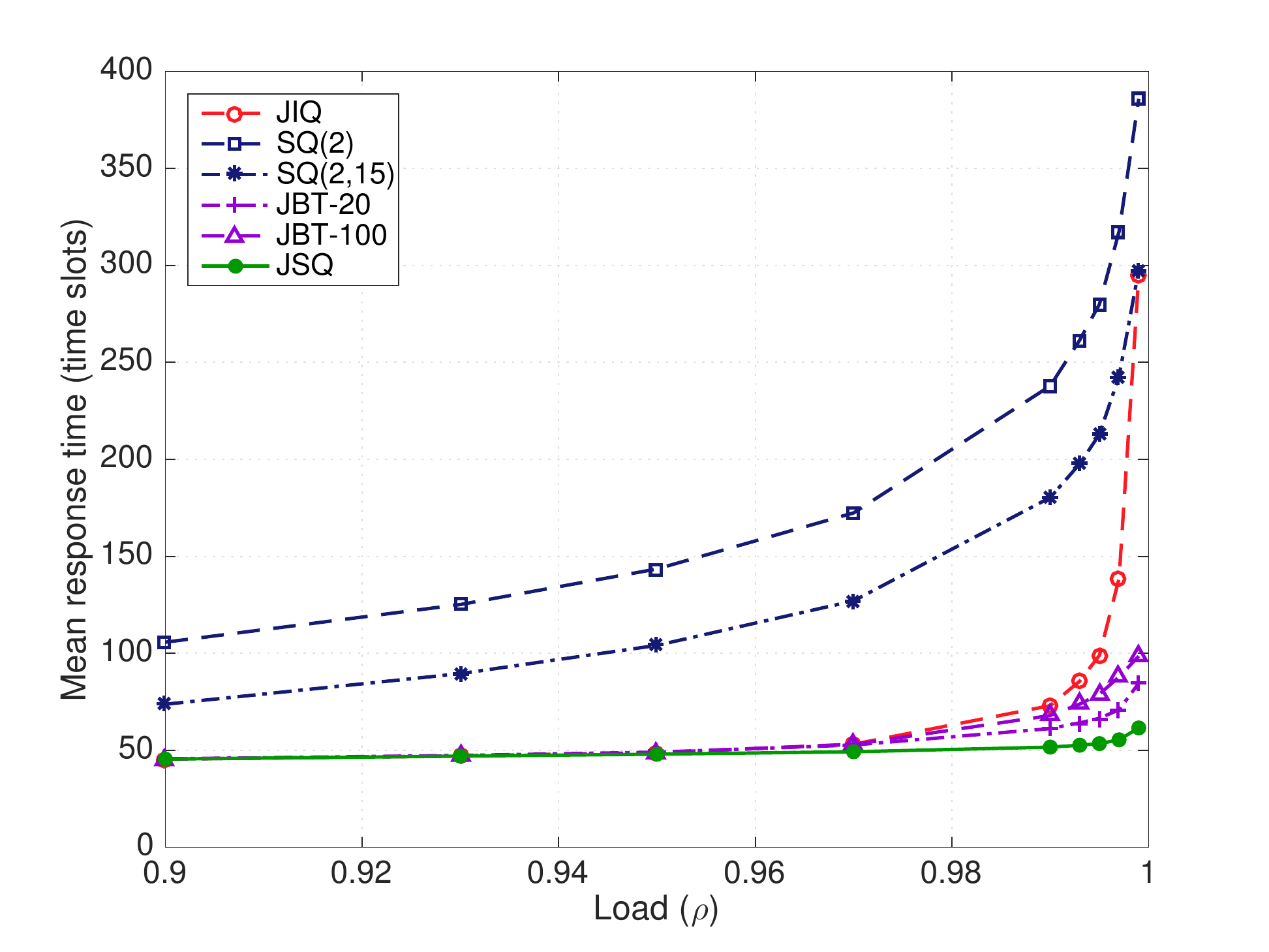}
\caption{Heavy-traffic Delay performance under $100$ homogeneous servers.}\label{fig:heavydelay_N100}
 \end{center}
\end{minipage}
\hfill
\begin{minipage}{3.2in}
\begin{center}
\includegraphics[width=3.2in]{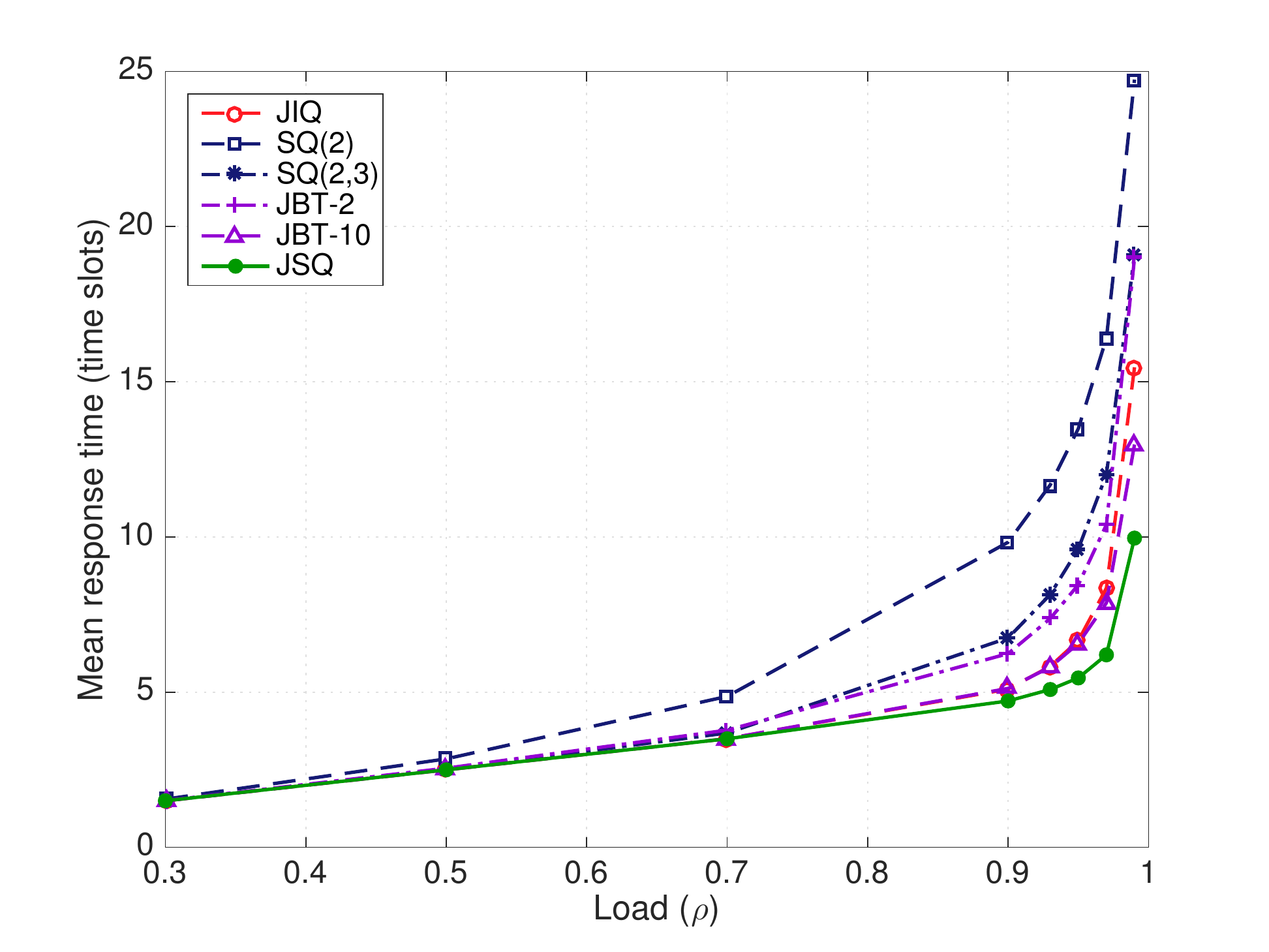}
\caption{Delay performance under $10$ homogeneous servers with Poisson arrival and constant service.}\label{fig:delay_poiss_const}
 \end{center}
\end{minipage}
\end{figure}



\begin{figure}[t]
\graphicspath{{./Figures/}}
\begin{minipage}{3.2in}
\begin{center}
\includegraphics[width=3.2in]{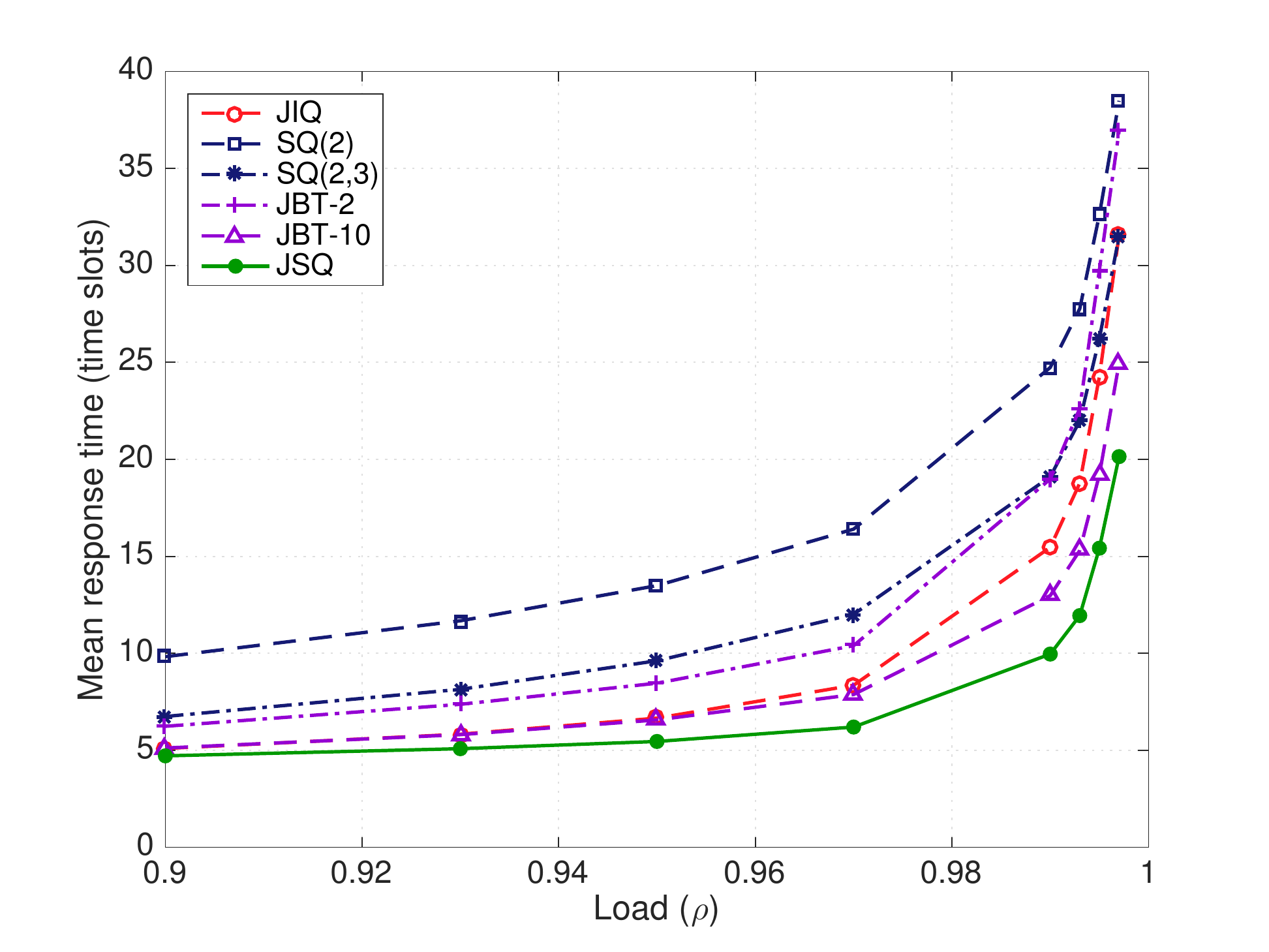}
\caption{Heavy-traffic delay performance under $10$ homogeneous servers with Poisson arrival and constant service.}\label{fig:heavydelay_poiss_const}
 \end{center}
\end{minipage}
\hfill
\begin{minipage}{3.2in}
\begin{center}
\includegraphics[width=3.2in]{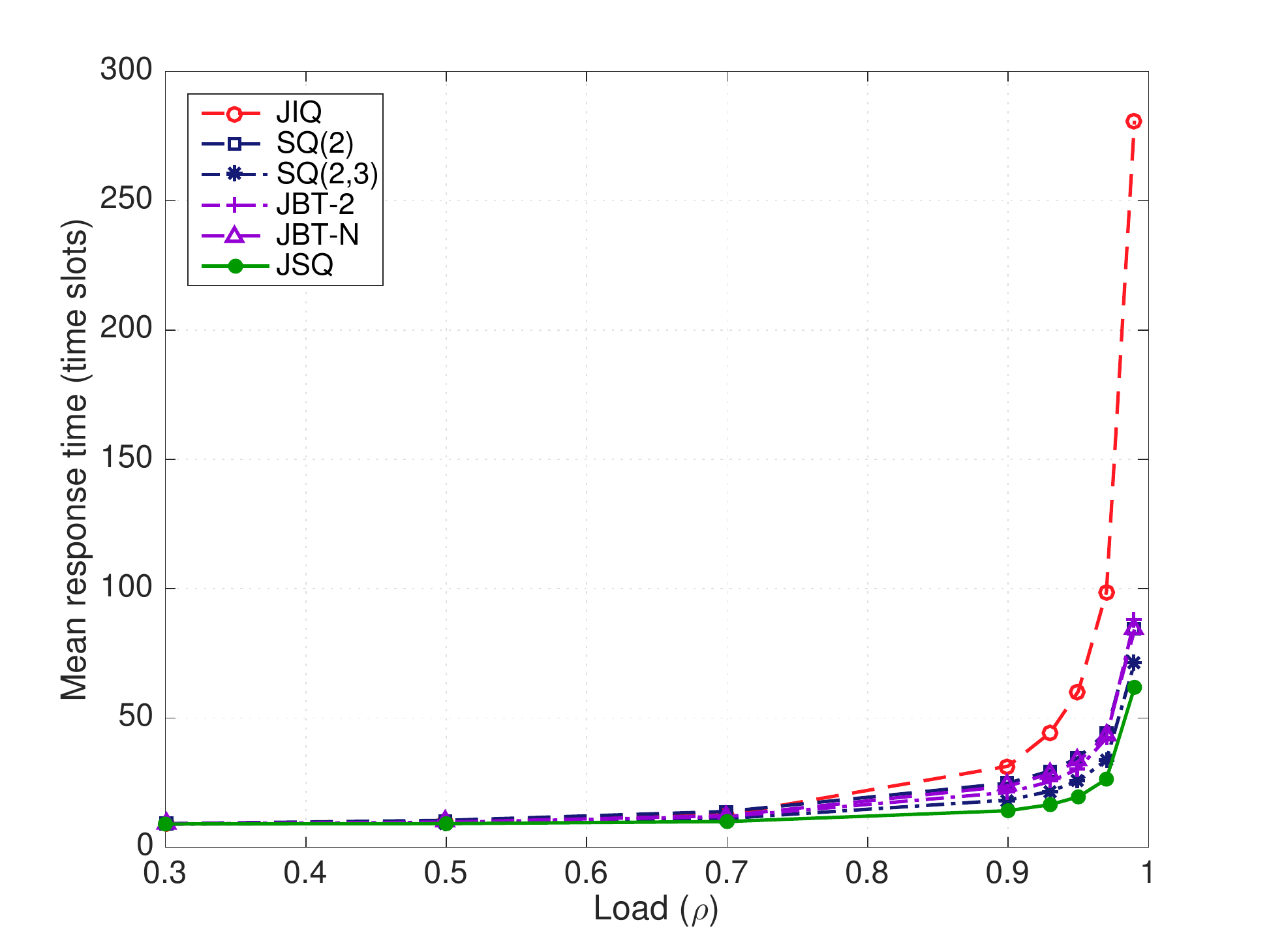}
\caption{Delay performance under $10$ homogeneous servers with Poisson arrival and bursty service.}\label{fig:delay_poiss_burst}
 \end{center}
\end{minipage}
\end{figure}



\begin{figure}[t]
\graphicspath{{./Figures/}}
\begin{minipage}{3.2in}
\begin{center}
\includegraphics[width=3.2in]{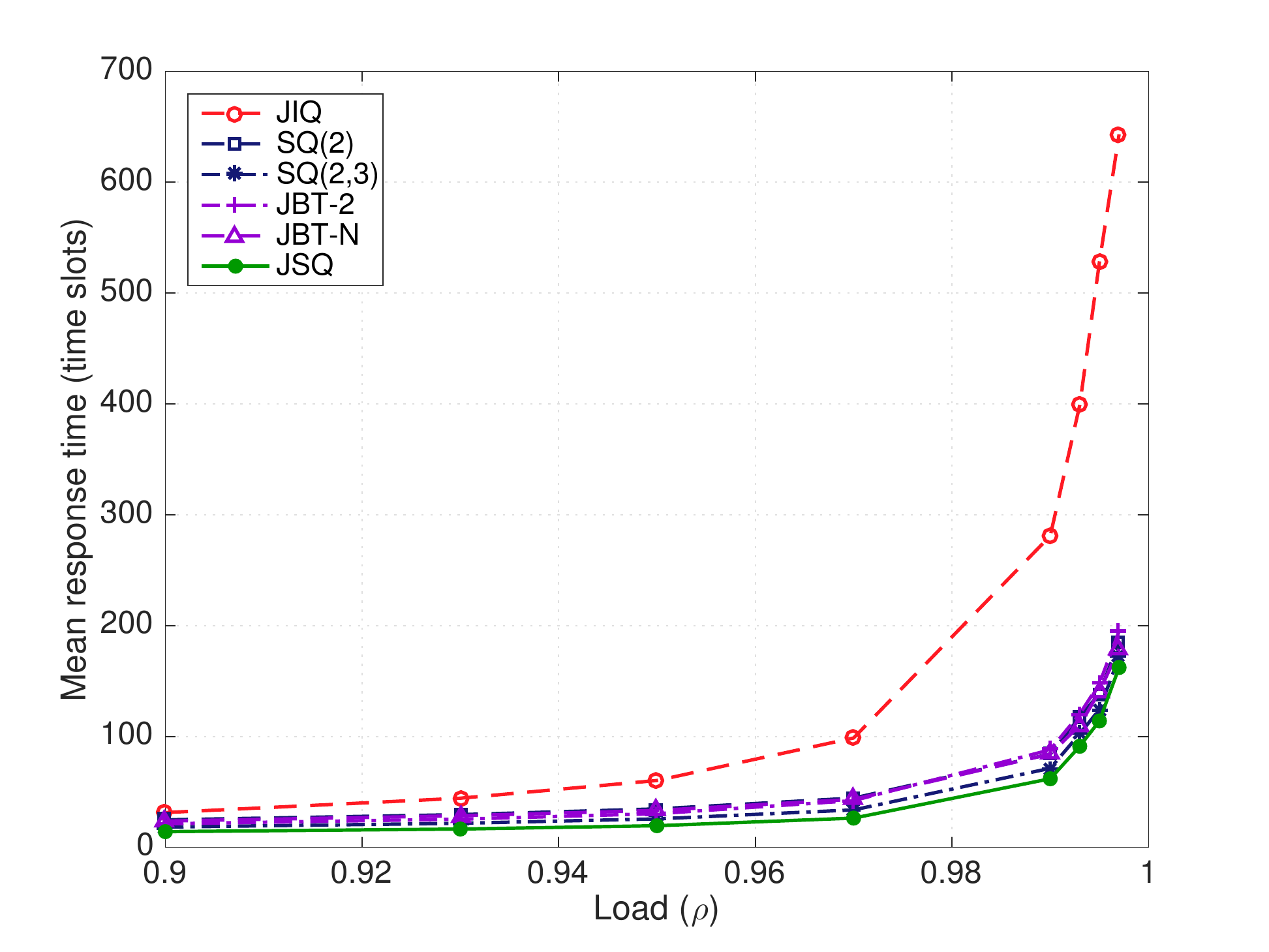}
\caption{Heavy-traffic delay performance under $10$ homogeneous servers with Poisson arrival and bursty service.}\label{fig:heavydelay_poiss_burst}
 \end{center}
\end{minipage}
\hfill
\begin{minipage}{3.2in}
\begin{center}
\includegraphics[width=3.2in]{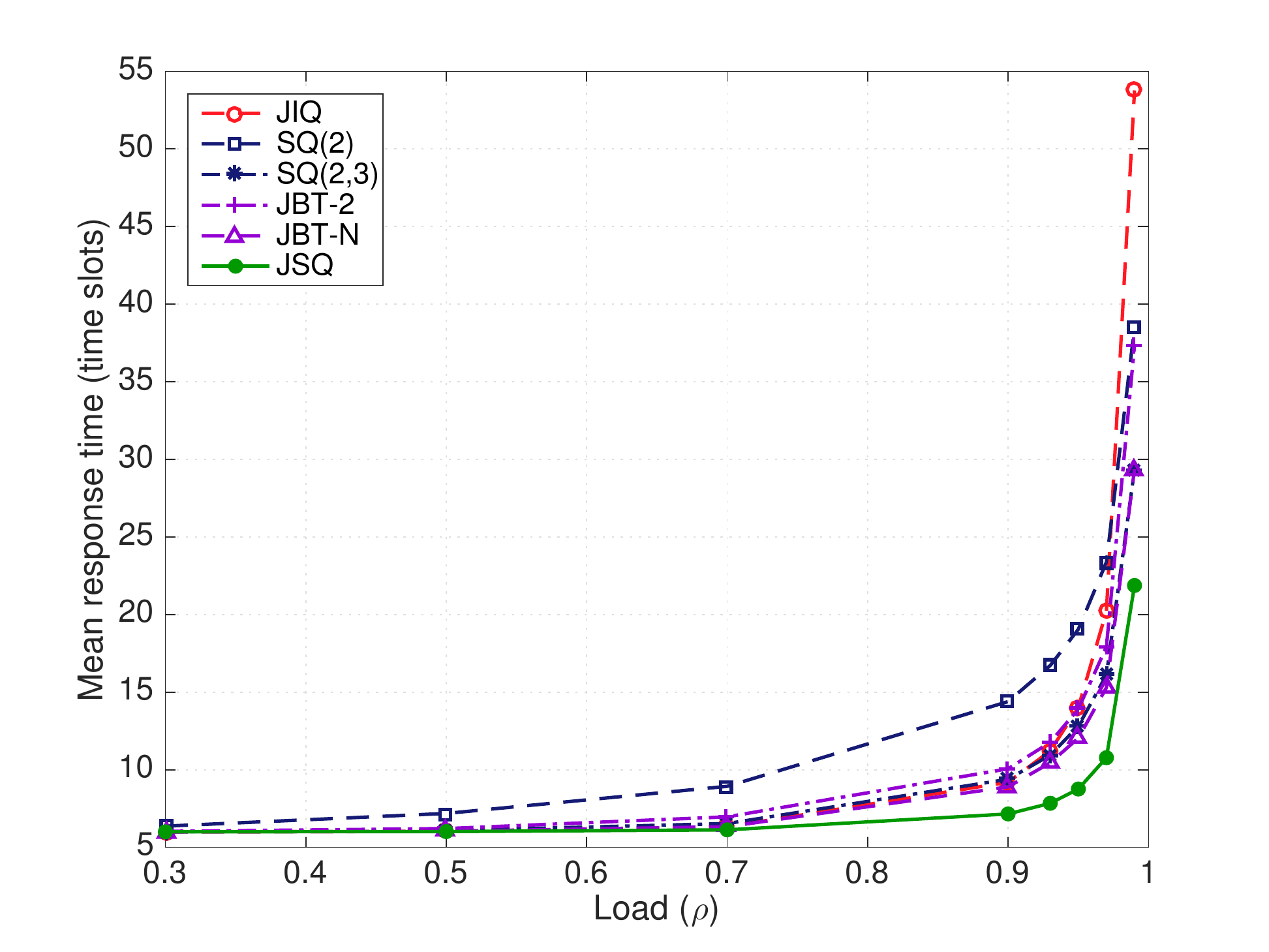}
\caption{Delay performance under $10$ homogeneous servers with bursty arrival and Poisson service.}\label{fig:delay_burst_poiss}
 \end{center}
\end{minipage}
\end{figure}



\begin{figure}[t]
\graphicspath{{./Figures/}}
\begin{minipage}{3.2in}
\begin{center}
\includegraphics[width=3.2in]{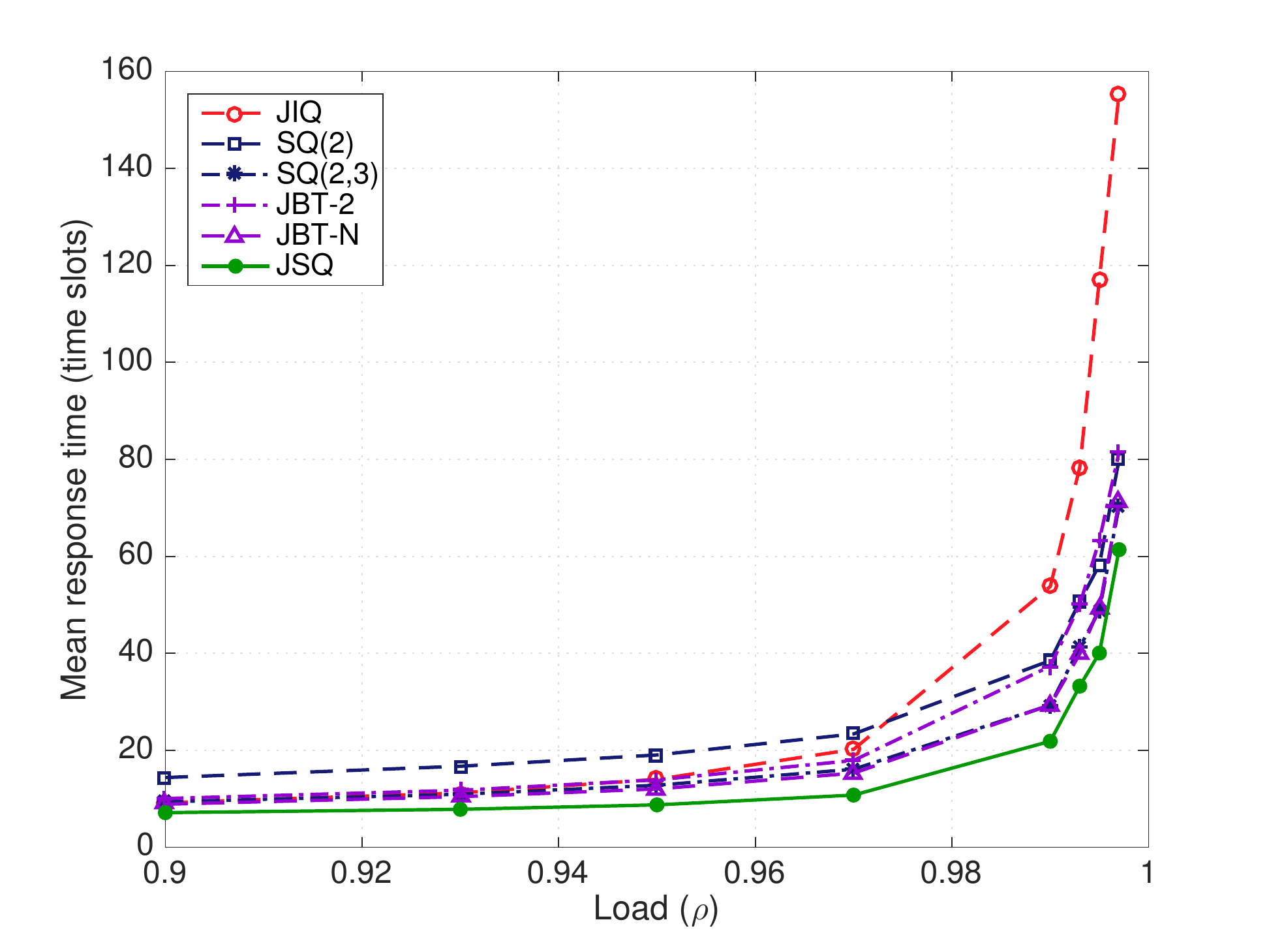}
\caption{Heavy-traffic delay performance under $10$ homogeneous servers with bursty arrival and Poisson service.}\label{fig:heavydelay_burst_poiss}
 \end{center}
\end{minipage}
\hfill
\begin{minipage}{3.2in}
\begin{center}
\includegraphics[width=3.2in]{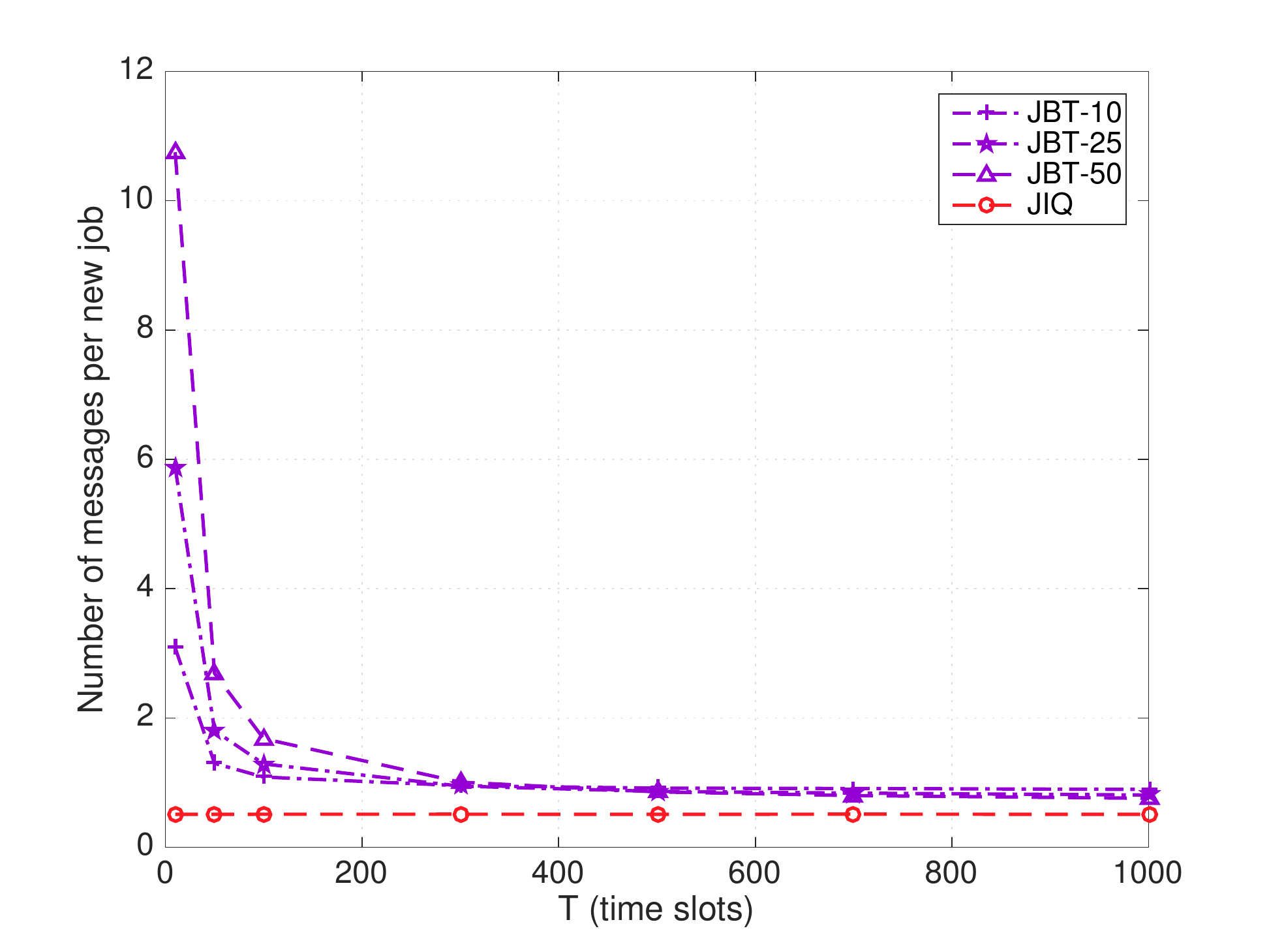}
\caption{Message per new job arrival under $50$ homogeneous servers with respect to $T$.}\label{fig:mess_N50_Td}
 \end{center}
\end{minipage}
\end{figure}



\begin{figure}[t]
\graphicspath{{./Figures/}}
\begin{minipage}{3.2in}
\begin{center}
\includegraphics[width=3.2in]{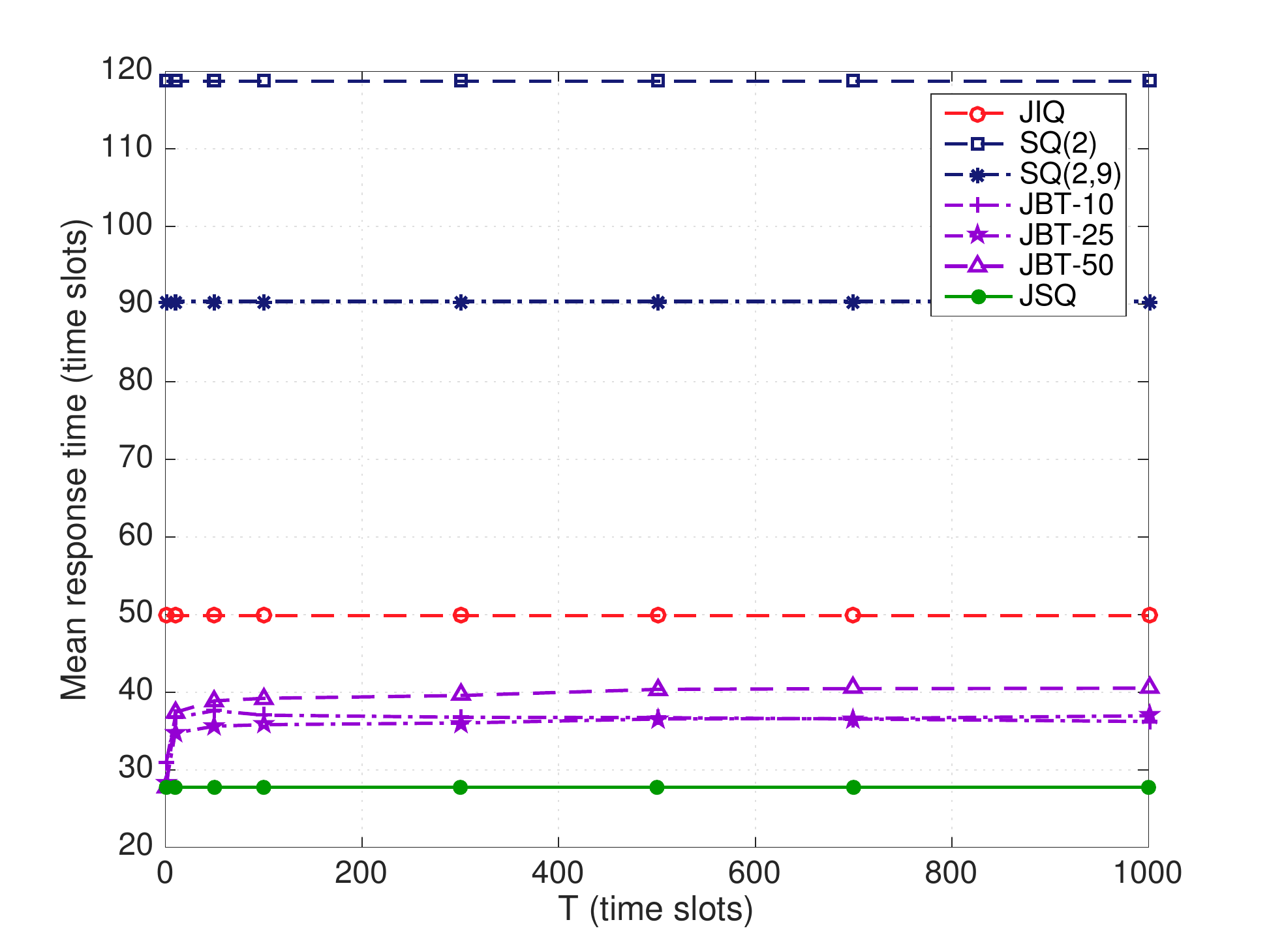}
\caption{Delay performance under $50$ homogeneous servers with respect to $T$.}\label{fig:delay_N50_Td}
 \end{center}
\end{minipage}
\hfill
\begin{minipage}{3.2in}
\begin{center}
\includegraphics[width=3.2in]{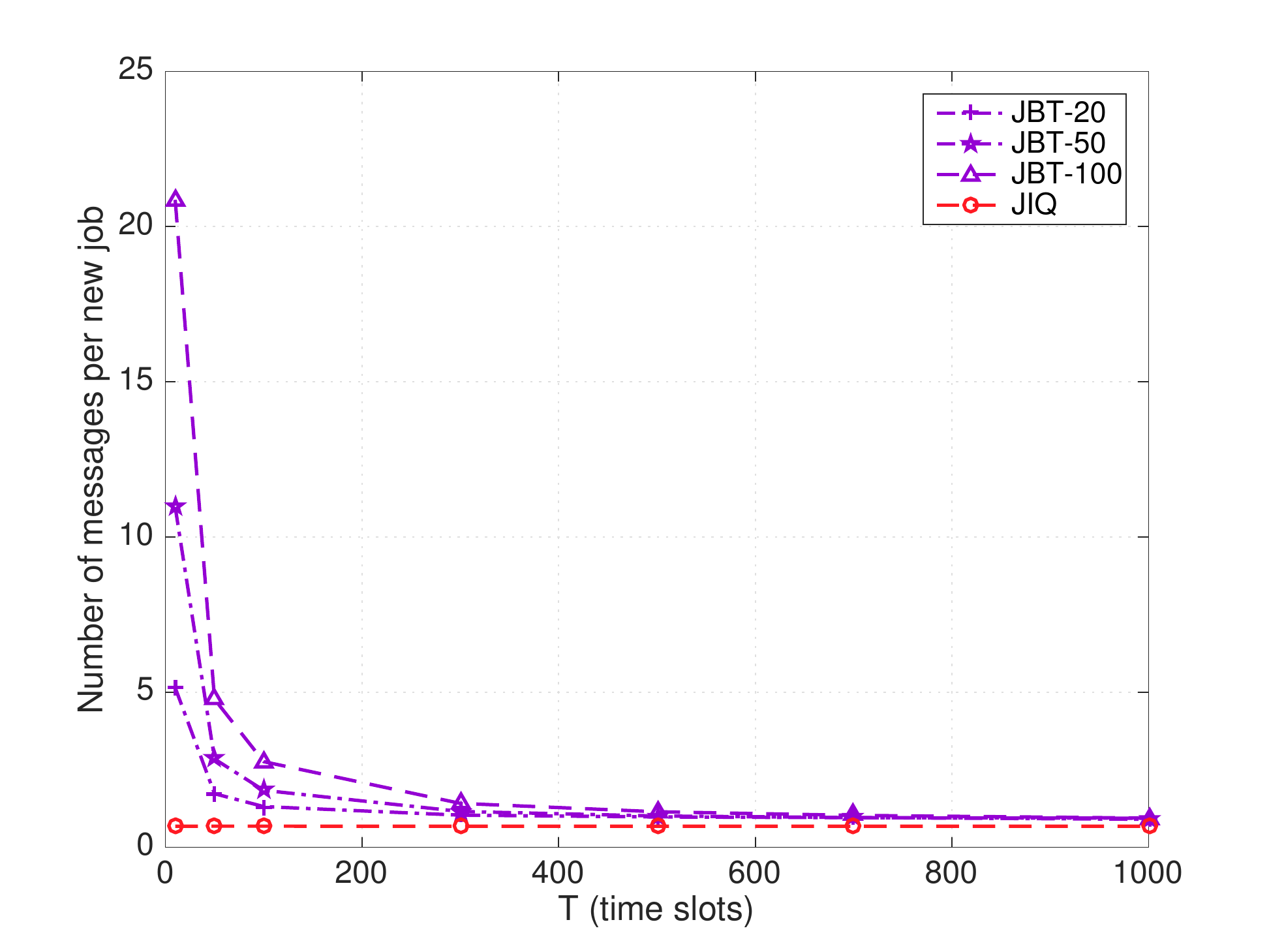}
\caption{Message per new job arrival under $100$ homogeneous servers with respect to $T$.}\label{fig:mess_N100_Td}
 \end{center}
\end{minipage}
\end{figure}



\begin{figure}[t]
\graphicspath{{./Figures/}}
\begin{minipage}{3.2in}
\begin{center}
\includegraphics[width=3.2in]{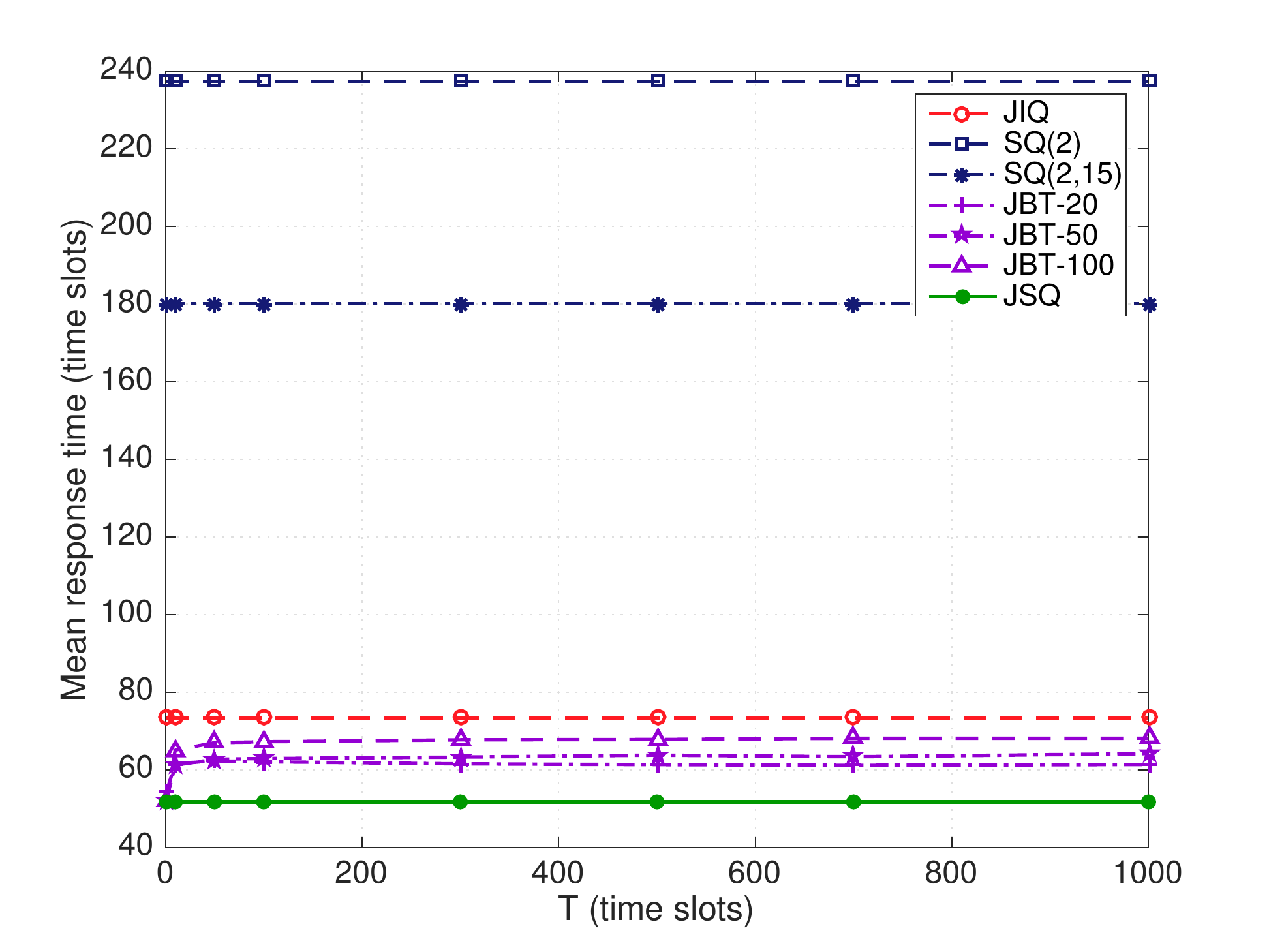}
\caption{Delay performance under $100$ homogeneous servers with respect to $T$.}\label{fig:delay_N100_Td}
 \end{center}
\end{minipage}
\hfill
\begin{minipage}{3.2in}
\begin{center}
\includegraphics[width=3.2in]{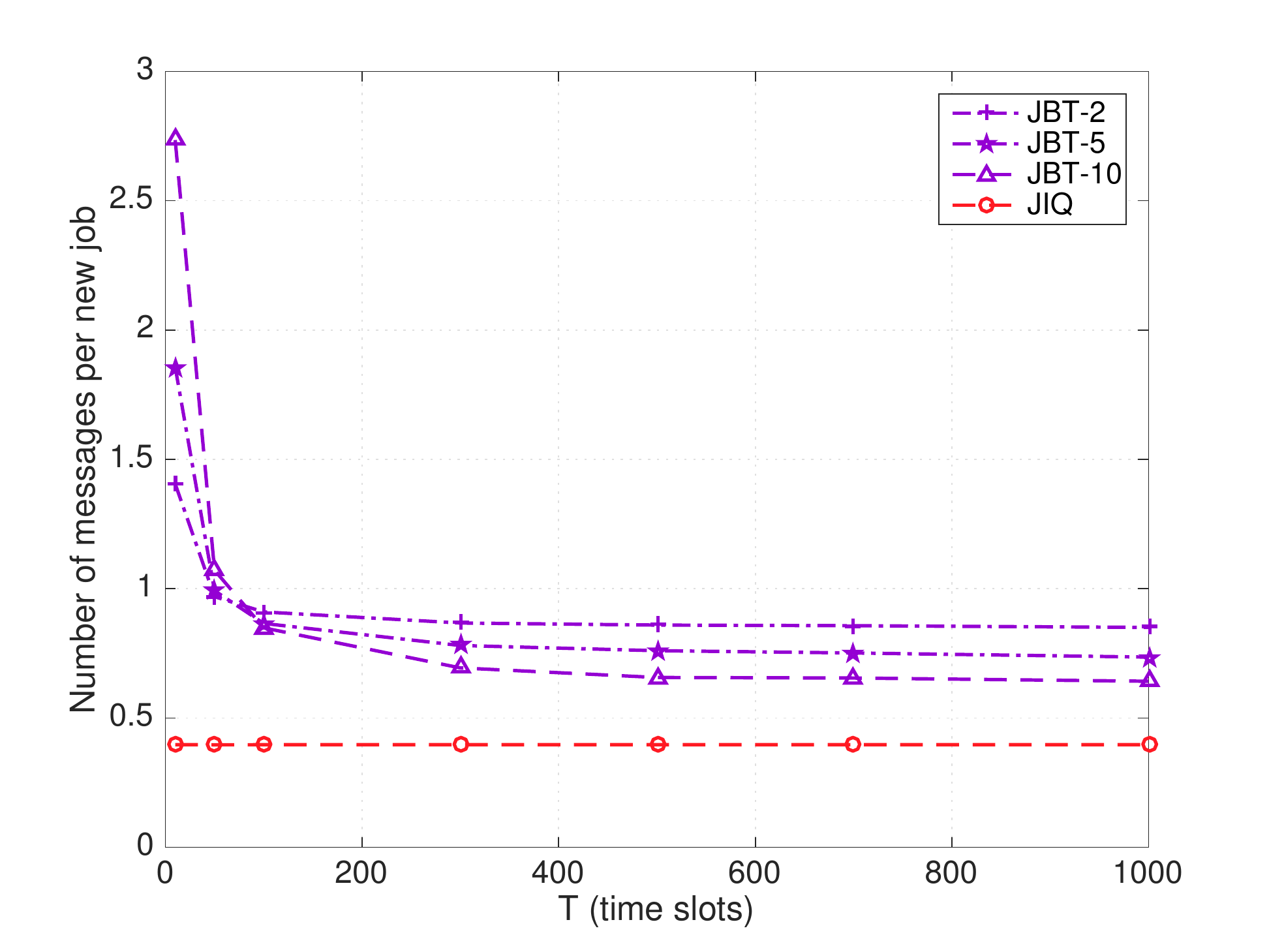}
\caption{Message per new job arrival under $10$ homogeneous servers with Poisson arrival and constant service versus $T$.}\label{fig:mess_N10_Td_poiss_const}
 \end{center}
\end{minipage}
\end{figure}



\begin{figure}[t]
\graphicspath{{./Figures/}}
\begin{minipage}{3.2in}
\begin{center}
\includegraphics[width=3.2in]{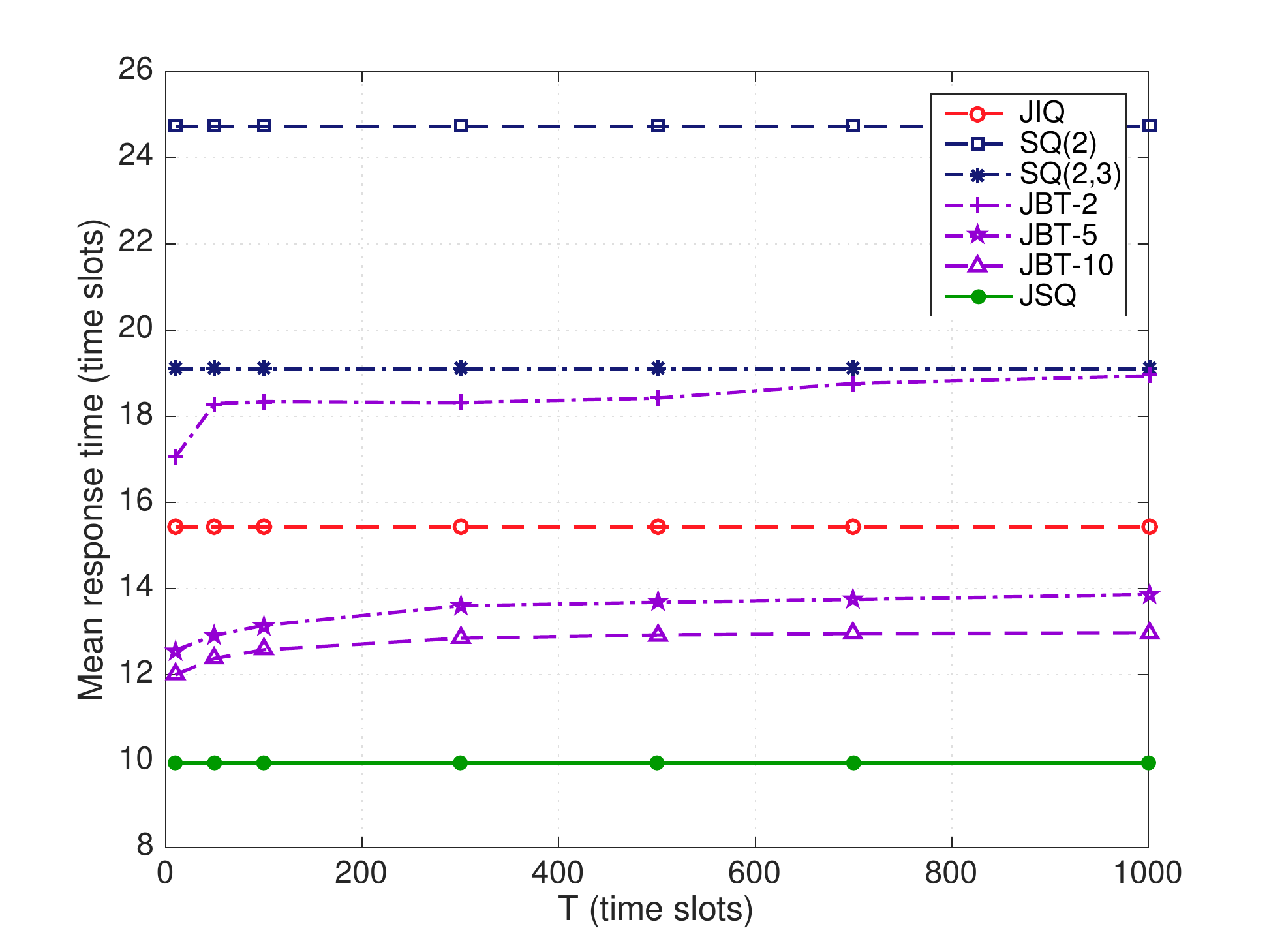}
\caption{Delay performance under $10$ homogeneous servers with Poisson arrival and constant service versus $T$.}\label{fig:delay_N10_Td_poiss_const}
 \end{center}
\end{minipage}
\hfill
\begin{minipage}{3.2in}
\begin{center}
\includegraphics[width=3.2in]{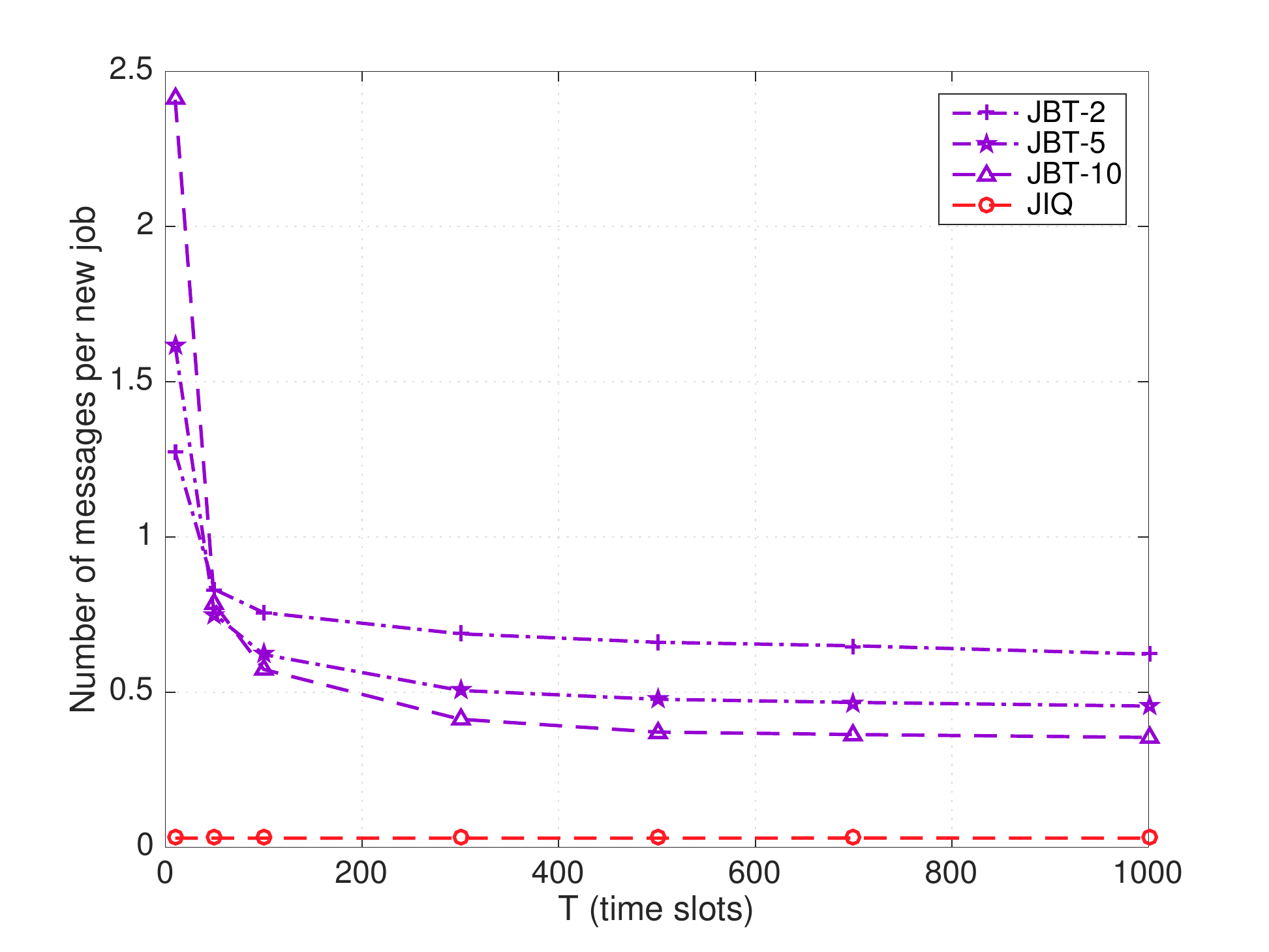}
\caption{Message per new job arrival under $10$ homogeneous servers with Poisson arrival and bursty service versus $T$.}\label{fig:mess_N10_Td_poiss_burst}
 \end{center}
\end{minipage}
\end{figure}



\begin{figure}[t]
\graphicspath{{./Figures/}}
\begin{minipage}{3.2in}
\begin{center}
\includegraphics[width=3.2in]{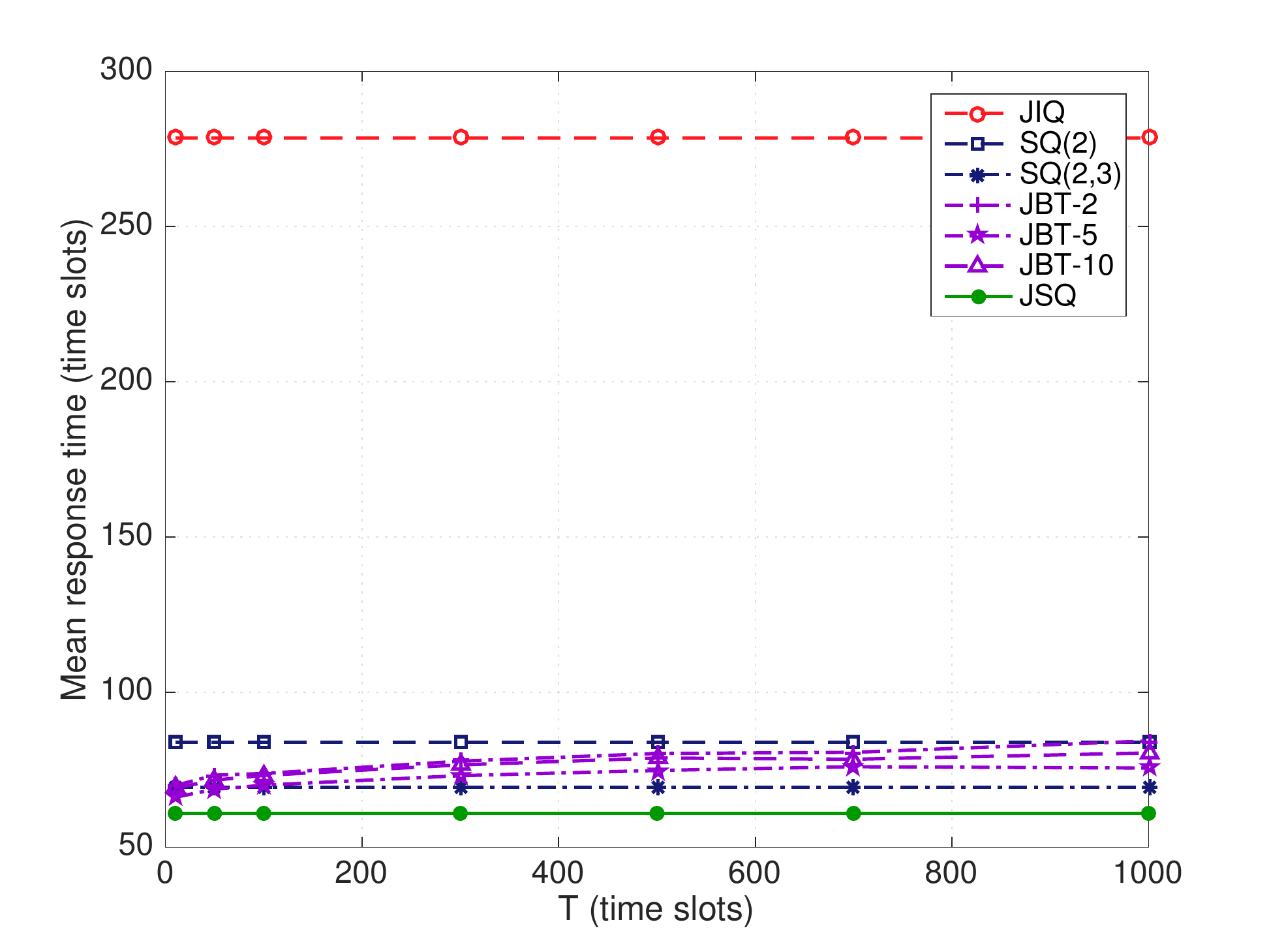}
\caption{Delay performance under $10$ homogeneous servers with Poisson arrival and bursty service versus $T$.}\label{fig:delay_N10_Td_poiss_burst}
 \end{center}
\end{minipage}
\hfill
\begin{minipage}{3.2in}
\begin{center}
\includegraphics[width=3.2in]{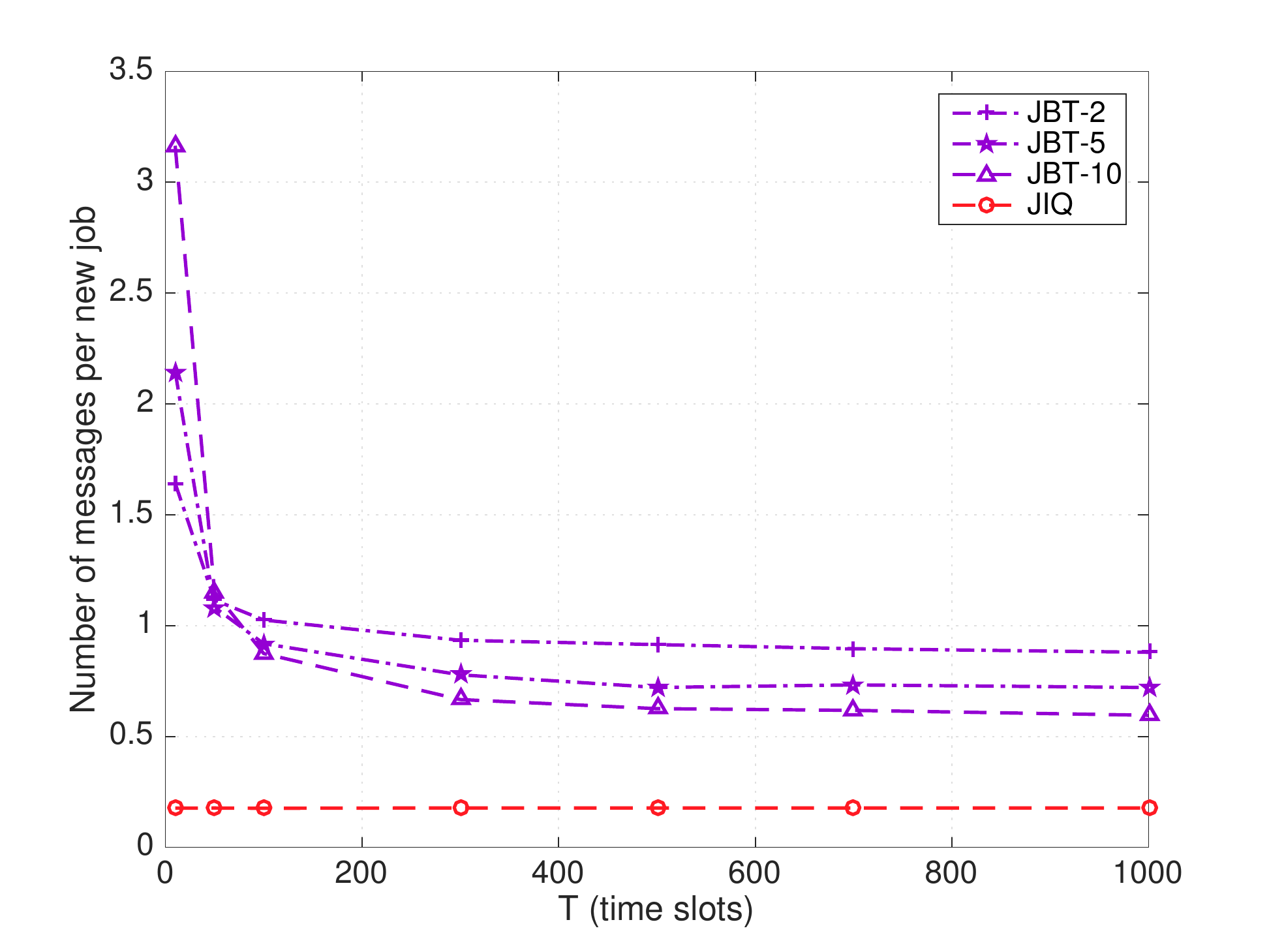}
\caption{Message per new job arrival under $10$ homogeneous servers with bursty arrival and Poisson service versus $T$.}\label{fig:mess_N10_Td_burst_poiss}
 \end{center}
\end{minipage}
\end{figure}



	\begin{figure}[t]
	\graphicspath{{./Figures/}}
	\centering
	\includegraphics[width=3.2in]{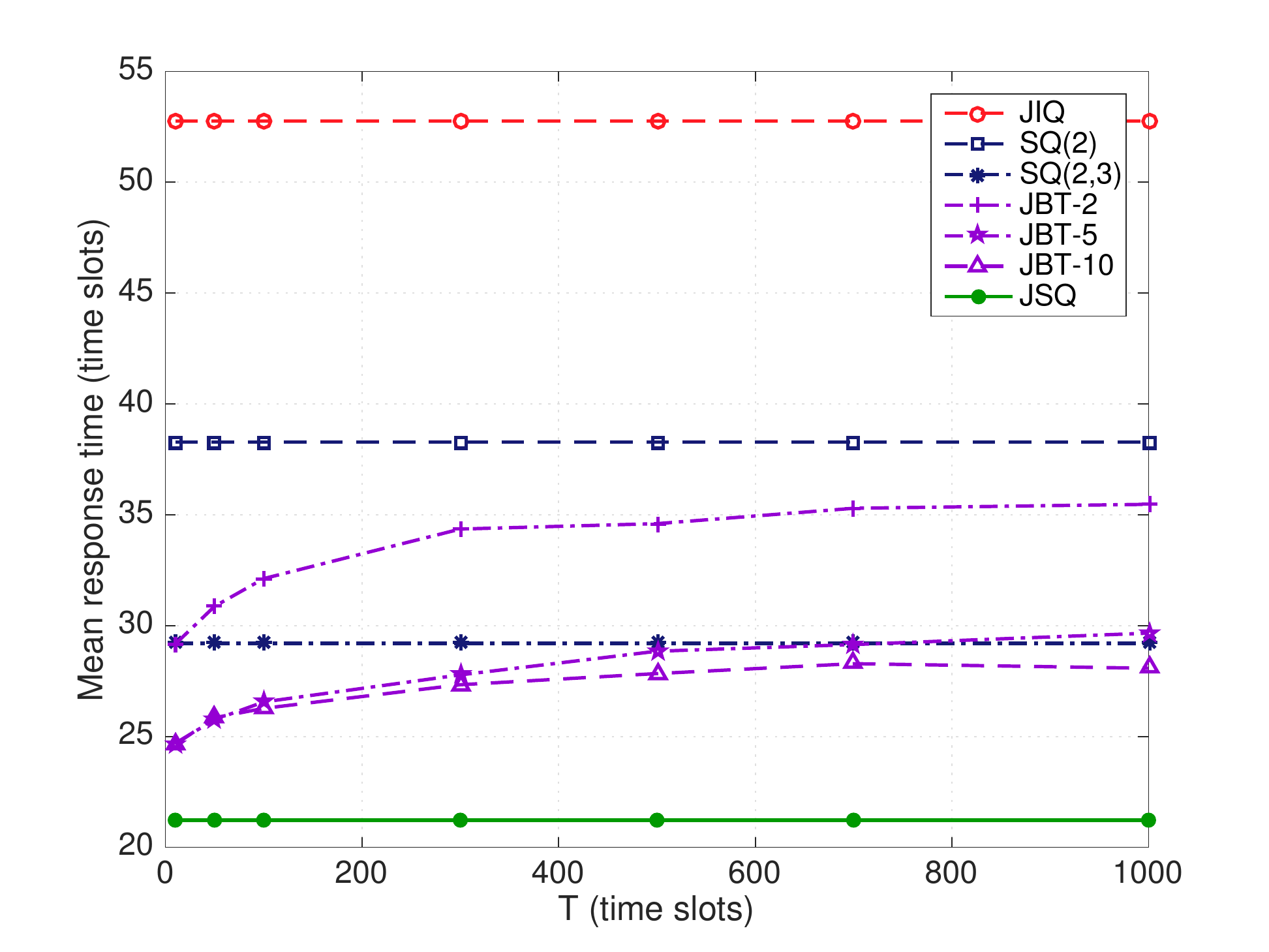}
	\caption{Delay performance under $10$ homogeneous servers with bursty arrival and Poisson service versus $T$.}\label{fig:delay_N10_Td_burst_poiss}
	\end{figure}

	Let us turn to look at the throughput performance for different combinations of arrival and service distribution in $10$ heterogeneous servers. As before, half of them have rate $1$ and half of them have rate $10$. Figure \ref{fig:throughput_poiss_const} shows the case of Poisson arrival and constant service. Similar results as before can be observed in this case. Figure \ref{fig:throughput_poiss_burst} shows the results for Poisson arrival and bursty service. In particular, the exogenous arrival at each time slot is drawn from Poisson distribution. The number of potential service at each time slot for servers with rate $1$ can be either $0$ or $10$, and the number of potential service at each time  for servers with rate $10$ is either $0$ or $15$ in this setting. Again, similar results are observed in this case. Figure \ref{fig:throughput_burst_poiss} illustrates the case of bursty arrival and Poisson service. In particular, the exogenous arrival at each time slot is either $0$ or $60$, and the potential service at each time slot of each server is drawn from a Poisson distribution with the corresponding rate. It can be easily seen that similar results hold in this case too.

\subsection{Delay Performance}

	In this subsection, we  provide additional delay performance for larger system sizes and different combinations of arrival and service process in homogeneous servers. In particular, we present the results for $50$ and $100$ homogeneous servers under Poisson arrival and Poisson service. Moreover, simulation results for $10$ homogeneous servers under Poisson arrival-constant service, Poisson arrival-bursty service and bursty arrival-Poisson arrival are all presented. As can be seen, our proposed policy JBT-$d$ achieves good performance in all these cases. All the servers have rate $1$.

	Let us first look at the delay performance under a load which ranges from light to heavy traffic regime for $50$ and $100$ homogeneous servers, as shown in Figures \ref{fig:delay_N50} and respectively. It can be easily seen that our proposed policy JBT-$d$ achieves a much smaller response time over all the considered loads than that of power-of-$2$ and power-of-$2$ with memory (SQ($2$,$9$), which utilizes almost the same amount of memory as the JBT-$d$). Moreover, the performance under JBT-$d$ is as good as JSQ for an even larger range of loads compared to the $10$ servers case, and the same trend holds for JIQ, which can be explained that  as $N$ becomes larger it is more likely to find an idle queue for a given load. However, as load approaches to $1$, the performance of JIQ degrades badly as discussed in the following.

	Now, let us get a closer look for the heavy-traffic regime, i.e., $\rho > 0.9$, under $50$ and $100$ servers, as shown in Figures \ref{fig:heavydelay_N50} and \ref{fig:heavydelay_N100}, respectively. We can easily observe that, as $\rho$ approaches to $1$, the performance of JIQ degrades substantially due to the lack of idle servers while our proposed policy JBT-$d$ remains quite close to JSQ in both cases. 

	Next, it is time for us to take a look at the delay performance under different combinations of arrival and service distribution in $10$ servers case. Figures \ref{fig:delay_poiss_const} and \ref{fig:heavydelay_poiss_const} shows the same set of results for the case of Poisson arrival and constant service. Same trend can be observed as before. Figures \ref{fig:delay_poiss_burst} and \ref{fig:heavydelay_poiss_burst} demonstrates the same set of results for the case of Poisson arrival and bursty service. In particular, the potential number of jobs served at each time slot is either $0$ or $10$ in this bursty setting. It can be easily seen that JIQ degrades much faster in this case as the load increases than that in the Poisson case. Figures \ref{fig:delay_burst_poiss} and \ref{fig:heavydelay_burst_poiss} illustrates the same set of results for the case of bursty arrival and Poisson service. In particular, the exogenous number of arrivals for each time slot is taken either $0$ or $12$. From the figures, we can observe that burst in arrival process also aggravates the performance of response time, and yet our proposed policy JBT-$d$ still has superior performance over other polices from light traffic to heavy traffic.

\subsection{Message Overhead}
	In this subsection, further results on message overhead for different settings are provided. As before, we present results for larger system sizes, e.g., $50$ and $100$ server, with Poisson arrival-Poisson service. Meanwhile, the message overhead for Poisson arrival-constant service, Poisson arrival-bursty service and bursty arrival-Poisson service in $10$ servers are also given. All these cases demonstrate the low message rate property of our proposed JBT-$d$ policy.

	Figure \ref{fig:mess_N50_Td} illustrates the message overhead with respect to $T$ under $50$ homogeneous servers at load $\rho = 0.99$, and the corresponding delay performance is shown in Figure \ref{fig:delay_N50_Td}. It can be easily seen that the JBT-$d$ policy is able to achieve a much better performance with a much lower message rate in a broad range of $T$ when compared to power-of-$d$ and power-of-$d$ with memory since both of them have message rate of $4$. Moreover, the message overhead of JBT-$d$ for $T>100$ is quite close to that of JIQ, which indicates that JBT-$d$ is capable of guaranteeing heavy-traffic delay optimality with just a slightly more message. In addition, we can observe that the delay performance is almost invariant with respect to $T$, especially when $T > 100$, which implies that we are allowed to use a large $T$ without sacrificing the delay performance. Figures \ref{fig:mess_N100_Td} and \ref{fig:delay_N100_Td} shows the case of $100$ homogeneous servers. We can see that the same results hold in this case as before.

	Now, let us  look at the message overhead under different arrival and service distributions. Figures \ref{fig:mess_N10_Td_poiss_const} and \ref{fig:delay_N10_Td_poiss_const} show the case for Poisson arrival and constant service with rate $1$. Nearly the same results hold in this case as well. Figures \ref{fig:mess_N10_Td_poiss_burst} and \ref{fig:delay_N10_Td_poiss_burst} demonstrates the case of Poisson arrival and bursty service. In particular, the burst in service is simulated by allowing the number of potential service at each time slot to take either $0$ or $10$. First thing to note is that burst in service degrades the performance of JIQ significantly, which agrees with our theoretical results. From the perspective of message rate, we can see that in this case message rate of JIQ is quite close to $0$, which accounts for the poor performance. In contrast, our proposed JBT-$d$ policy still achieve good performance with a much lower performance in this case. It is worth noting that although the delay performance of power-of-$d$ with memory (SQ($2$,$3$)) is slightly better than that of our JBT-$d$ policy, its message rate is four fold as much as the rate of JBT-$d$ when $T > 50$. As before, a larger $d$ does not necessarily mean a larger message rate when $T$ is large. Figures \ref{fig:mess_N10_Td_burst_poiss} and \ref{fig:delay_N10_Td_burst_poiss} illustrate the case of bursty arrival and Poisson service. More specifically, the exogenous arrival for each time slot is either $0$ or $12$, and the potential service for each time slot at each server is drawn from Poisson distribution with mean $1$. The behavior of message overhead is nearly the same as before. That is, for a large range of $T$, the message rate of JBT-$d$ is strictly less than $1$, which implies that our proposed JBT-$d$ is able to achieve heavy-traffic delay optimality by using a message rate that is less than $1$. From the delay performance in Figure \ref{fig:delay_N10_Td_burst_poiss}, we can observe that JBT-$d$ still has good performance in bursty arrival case and it also suggests that a large $d$ would be better for this bursty arrival setting.

\section{Confidence Intervals}
\label{sec:CI}
In this section, we provide the $95\%$ confidence interval for all average-based simulation results. In particular, it is obtained by running the simulation for $9,000,000$ time slots, which are divided into $30$ batches. Then, the standard batch means method is adopted to compute the confidence interval for all the results.
\subsection{Throughput performance}
In this subsection, we present the $95\%$ confidence intervals for all the simulation results of throughput performance, which is summarized in Table \ref{tab:result_CI_throughput}.

\subsection{Delay performance}
In this subsection, we present the $95\%$ confidence intervals for all the simulation results of delay performance, which is summarized in Table \ref{tab:result_CI_delay}.

\subsection{Message overhead}
In this subsection, we present the $95\%$ confidence intervals for all the simulation results of message overhead and the corresponding delay performance, which are summarized in Table \ref{tab:result_CI_mess}.

\begin{table}[!ht]

	\centering

	\caption{The simulation result of throughput performance and its corresponding confidence interval.}\label{tab:result_CI_throughput}

	\begin{tabular}{|c|c|}
	\hline
	Figures  & Confidence interval  table\\
	\hline
	\hline
	Figure \ref{fig:throughput_N10} & Table \ref{tab:throughput_N10}\\
	\hline
	Figure \ref{fig:throughput_N50} & Table \ref{tab:throughput_N50}\\
	\hline
	Figure \ref{fig:throughput_N100} & Table \ref{tab:throughput_N100}\\
	\hline
	Figure \ref{fig:throughput_poiss_const} & Table \ref{tab:throughput_poiss_const}\\
	\hline
	Figure \ref{fig:throughput_poiss_burst} & Table \ref{tab:throughput_poiss_burst}\\
	\hline
	Figure \ref{fig:throughput_burst_poiss} & Table \ref{tab:throughput_burst_poiss}\\
	\hline
	\end{tabular}

\end{table}

	\begin{table}[!ht]

	\centering

	\caption{The simulation result of delay performance and its corresponding confidence interval.}\label{tab:result_CI_delay}

	\begin{tabular}{|c|c|}
	\hline
	Figures  & Confidence interval  table\\
	\hline
	\hline
	Figures \ref{fig:delay_N10} and \ref{fig:heavydelay_N10} & Table \ref{tab:delay_N10}\\
	\hline
	Figures \ref{fig:delay_N50} and \ref{fig:heavydelay_N50_sim}(\ref{fig:heavydelay_N50})& Table \ref{tab:delay_N50}\\
	\hline
	Figures \ref{fig:delay_N100} and \ref{fig:heavydelay_N100}& Table \ref{tab:delay_N100}\\
	\hline
	Figures \ref{fig:delay_poiss_const} and \ref{fig:heavydelay_poiss_const} & Table \ref{tab:delay_poiss_const}\\
	\hline
	Figures \ref{fig:delay_poiss_burst} and \ref{fig:heavydelay_poiss_burst_sim}(\ref{fig:heavydelay_poiss_burst})& Table \ref{tab:delay_poiss_burst}\\
	\hline
	Figures \ref{fig:delay_burst_poiss} and \ref{fig:heavydelay_burst_poiss}& Table \ref{tab:delay_burst_poiss}\\
	\hline
	\end{tabular}

	\end{table}

	\begin{table}[!ht]

	\centering

	\caption{The simulation result of message overhead (delay performance) and its corresponding confidence interval.}\label{tab:result_CI_mess}

	\begin{tabular}{|c|c|}
	\hline
	Figures  & Confidence interval  table\\
	\hline
	\hline
	Figures \ref{fig:mess_N10_Td} and \ref{fig:delay_N10_Td} & Table \ref{tab:mess_N10_Td} and Table \ref{tab:delay_N10_Td}\\
	\hline
	Figures \ref{fig:mess_N50_Td} and \ref{fig:delay_N50_Td}& Table \ref{tab:mess_N50_Td}  and Table \ref{tab:delay_N50_Td}\\
	\hline
	Figures \ref{fig:mess_N100_Td} and \ref{fig:delay_N100_Td}& Table \ref{tab:mess_N100_Td} and Table \ref{tab:delay_N100_Td}\\
	\hline
	Figures \ref{fig:mess_N10_Td_poiss_const} and \ref{fig:delay_N10_Td_poiss_const} & Table \ref{tab:mess_N10_Td_poiss_const} and Table \ref{tab:delay_N10_Td_poiss_const}\\
	\hline
	Figures \ref{fig:mess_N10_Td_poiss_burst} and \ref{fig:delay_N10_Td_poiss_burst}& Table \ref{tab:mess_N10_Td_poiss_burst} and Table \ref{tab:delay_N10_Td_poiss_burst}\\
	\hline
	Figures \ref{fig:mess_N10_Td_burst_poiss} and \ref{fig:delay_N10_Td_burst_poiss}& Table \ref{tab:mess_N10_Td_burst_poiss} and Table \ref{tab:delay_N10_Td_burst_poiss} \\
	\hline
	\end{tabular}

	\end{table}

\begin{table*}

	\centering

	\caption{The $95\%$ confidence interval of response time under $10$ heterogeneous servers with Poisson arrival and Poisson service.}\label{tab:throughput_N10}

	\begin{tabular}{|c|c|c|c|c|c|}
	\hline
	Policy   & $\rho=0.5$ &$\rho=0.7$ & $\rho=0.9$ & $\rho = 0.95$ & $\rho = 0.99$ \\
	\hline
	\hline
	 JSQ &$3.363\pm 0.003$ &$3.894\pm 0.003$ &$4.998\pm0.007$ &$5.804\pm0.010$ & $8.093\pm0.083$\\
	\hline
	JBTG-10  &$2.156\pm 0.005$ &$3.246\pm 0.005$ &$5.732\pm0.027$ &$8.151\pm0.100$ & $13.562\pm0.380$\\
	\hline
	JBTG-2  &$2.192\pm 0.008$ &$3.414\pm 0.034$ &$6.159\pm0.079$ &$8.294\pm0.142$ & $13.902\pm0.412$\\
	\hline
	JBT-2  &$3.715\pm 0.084$ &$5.288\pm 0.113$ &$16.882\pm1.463$ &$28.904\pm4.272$ & $53.830\pm16.384$\\
	\hline
	SQ(2,3) &$3.968\pm 0.008$ &$6.398\pm 0.151$ &$12.942\pm0.052$ &$18.200\pm0.103$ & $31.200\pm0.291$\\
	\hline
	SQ(2) &$1.9\mathrm e 4\pm 0.4\mathrm e 4$ &$4.2\mathrm e 4\pm 0.9 \mathrm e 4$ &$5.5 \mathrm e 4\pm1.1 \mathrm e 4$ &$5.7 \mathrm e 4\pm1.2\mathrm e 4$ & $5.9 \mathrm e 4\pm1.2\mathrm e 4$\\
	\hline
	JIQ &$3.112\pm 0.003$ &$3.896\pm 0.006$ &$1.1 \mathrm e 4\pm2.3 \mathrm e 3$ &$2.7 \mathrm e 4\pm5.6\mathrm e 3$ & $3.9 \mathrm e 4\pm8.1\mathrm e 3$\\
	\hline
	\end{tabular}

\end{table*}

\begin{table*}

	\centering

	\caption{The $95\%$ confidence interval of response time under $50$ heterogeneous servers with Poisson arrival and Poisson service. }\label{tab:throughput_N50}

	\begin{tabular}{|c|c|c|c|c|c|}
	\hline
	Policy   & $\rho=0.5$ &$\rho=0.7$ & $\rho=0.9$ & $\rho = 0.95$ & $\rho = 0.99$ \\
	\hline
	\hline
	 JSQ &$17.841\pm 0.007$ &$20.564\pm 0.008$ &$23.369\pm0.008$ &$24.510\pm0.016$ & $27.290\pm0.051$\\
	\hline
	JBTG-50  &$12.115\pm 0.019$ &$17.134\pm 0.021$ &$22.449\pm0.015$ &$26.091\pm0.069$ & $42.996\pm0.685$\\
	\hline
	JBTG-10  &$12.119\pm 0.019$ &$17.147\pm 0.022$ &$22.837\pm0.085$ &$26.036\pm0.146$ & $35.153\pm0.441$\\
	\hline
	JBT-10  &$16.659\pm 0.011$ &$20.184\pm 0.044$ &$25.544\pm0.159$ &$34.323\pm0.496$ & $64.107\pm5.690$\\
	\hline
	SQ(2,9) &$22.554\pm 0.077$ &$39.086\pm 0.071$ &$77.287\pm0.223$ &$103.829\pm0.511$ & $170.932\pm1.552$\\
	\hline
	SQ(2)  &$2.9\mathrm e 4\pm 0.6\mathrm e 4$ &$5.1\mathrm e 4\pm 1.1 \mathrm e 4$ &$6.5 \mathrm e 4\pm1.4 \mathrm e 4$ &$6.7 \mathrm e 4\pm1.4\mathrm e 4$ & $6.9 \mathrm e 4\pm1.4\mathrm e 4$\\
	\hline
	JIQ &$16.663\pm 0.011$ &$20.031\pm 0.008$ &$24.553 \pm0.044$ &$57.322 \pm1.336$ & $1.2 \mathrm e 4\pm2.6\mathrm e 3$\\
	\hline
	\end{tabular}

\end{table*}

\begin{table*}

	\centering

	\caption{The $95\%$ confidence interval of response time under $100$ heterogeneous servers with Poisson arrival and Poisson service.}\label{tab:throughput_N100}

	\begin{tabular}{|c|c|c|c|c|c|}
	\hline
	Policy   & $\rho=0.5$ &$\rho=0.7$ & $\rho=0.9$ & $\rho = 0.95$ & $\rho = 0.99$ \\
	\hline
	\hline
	 JSQ &$35.964\pm 0.010$ &$41.441\pm 0.011$ &$46.942\pm0.010$ &$48.572\pm0.047$ & $51.896\pm0.138$\\
	\hline
	JBTG-100  &$24.627\pm 0.035$ &$34.615\pm 0.037$ &$44.789\pm0.016$ &$47.963\pm0.039$ & $74.042\pm0.699$\\
	\hline
	JBTG-20  &$24.621\pm 0.037$ &$34.610\pm 0.035$ &$44.845\pm0.025$ &$48.378\pm0.108$ & $61.347\pm0.590$\\
	\hline
	JBT-20  &$33.592\pm 0.0157$ &$40.417\pm 0.017$ &$47.508\pm0.127$ &$54.114\pm0.306$ & $92.365\pm4.815$\\
	\hline
	SQ(2,15) &$47.041\pm 0.217$ &$81.887\pm 0.164$ &$158.471\pm0.638$ &$211.925\pm0.964$ & $347.291\pm3.172$\\
	\hline
	SQ(2)  &$3.0\mathrm e 4\pm 0.6\mathrm e 4$ &$5.3\mathrm e 4\pm 1.1 \mathrm e 4$ &$6.6 \mathrm e 4\pm1.4 \mathrm e 4$ &$6.9 \mathrm e 4\pm1.4\mathrm e 4$ & $7.1 \mathrm e 4\pm1.5\mathrm e 4$\\
	\hline
	JIQ &$33.589\pm 0.019$ &$40.403\pm 0.012$ &$46.843 \pm0.028$ &$54.622 \pm0.203$ & $5.9 \mathrm e 3\pm1.2\mathrm e 3$\\
	\hline
	\end{tabular}

\end{table*}

\newpage

\begin{table*}

	\centering

	\caption{The $95\%$ confidence interval of response time under $10$ heterogeneous servers with Poisson arrival and constant service. }\label{tab:throughput_poiss_const}

	\begin{tabular}{|c|c|c|c|c|c|}
	\hline
	Policy   & $\rho=0.5$ &$\rho=0.7$ & $\rho=0.9$ & $\rho = 0.95$ & $\rho = 0.99$ \\
	\hline
	\hline
	 JSQ &$3.274\pm 0.003$ &$3.802\pm 0.002$ &$4.747\pm0.005$ &$5.447\pm0.008$ & $7.197\pm0.033$\\
	\hline
	JBTG-10  &$2.063\pm 0.004$ &$3.081\pm 0.004$ &$5.344\pm0.020$ &$7.623\pm0.100$ & $13.234\pm0.471$\\
	\hline
	JBTG-2  &$2.085\pm 0.009$ &$3.285\pm 0.027$ &$5.825\pm0.086$ &$7.871\pm0.174$ & $12.782\pm0.407$\\
	\hline
	JBT-2  &$3.575\pm 0.087$ &$5.108\pm 0.149$ &$14.389\pm1.769$ &$25.579\pm4.832$ & $58.508\pm17.599$\\
	\hline
	SQ(2,3) &$3.848\pm 0.007$ &$6.257\pm 0.133$ &$12.742\pm0.055$ &$17.800\pm0.087$ & $30.550\pm0.268$\\
	\hline
	SQ(2) &$1.8\mathrm e 4\pm 0.4\mathrm e 4$ &$4.1\mathrm e 4\pm 0.9 \mathrm e 4$ &$5.5 \mathrm e 4\pm1.1 \mathrm e 4$ &$5.7 \mathrm e 4\pm1.2\mathrm e 4$ & $5.9 \mathrm e 4\pm1.2\mathrm e 4$\\
	\hline
	JIQ &$3.018\pm 0.003$ &$3.714\pm 0.004$ &$7.0 \mathrm e 3\pm1.5 \mathrm e 3$ &$2.3 \mathrm e 4\pm4.9\mathrm e 3$ & $3.6 \mathrm e 4\pm7.5\mathrm e 3$\\
	\hline
	\end{tabular}

\end{table*}

\begin{table*}

	\centering

	\caption{The $95\%$ confidence interval of response time under $10$ heterogeneous servers with Poisson arrival and bursty service. }\label{tab:throughput_poiss_burst}

	\begin{tabular}{|c|c|c|c|c|c|}
	\hline
	Policy   & $\rho=0.5$ &$\rho=0.7$ & $\rho=0.9$ & $\rho = 0.95$ & $\rho = 0.99$ \\
	\hline
	\hline
	 JSQ &$3.274\pm 0.003$ &$3.802\pm 0.002$ &$4.747\pm0.005$ &$5.447\pm0.008$ & $7.197\pm0.033$\\
	\hline
	JBTG-10  &$2.063\pm 0.004$ &$3.081\pm 0.004$ &$5.344\pm0.020$ &$7.623\pm0.100$ & $13.234\pm0.471$\\
	\hline
	JBTG-2  &$2.085\pm 0.009$ &$3.285\pm 0.027$ &$5.825\pm0.086$ &$7.871\pm0.174$ & $12.782\pm0.407$\\
	\hline
	JBT-2  &$3.575\pm 0.087$ &$5.108\pm 0.149$ &$14.389\pm1.769$ &$25.579\pm4.832$ & $58.508\pm17.599$\\
	\hline
	SQ(2,3) &$3.848\pm 0.007$ &$6.257\pm 0.133$ &$12.742\pm0.055$ &$17.800\pm0.087$ & $30.550\pm0.268$\\
	\hline
	SQ(2) &$1.8\mathrm e 4\pm 0.4\mathrm e 4$ &$4.1\mathrm e 4\pm 0.9 \mathrm e 4$ &$5.5 \mathrm e 4\pm1.1 \mathrm e 4$ &$5.7 \mathrm e 4\pm1.2\mathrm e 4$ & $5.9 \mathrm e 4\pm1.2\mathrm e 4$\\
	\hline
	JIQ &$3.018\pm 0.003$ &$3.714\pm 0.004$ &$7.0 \mathrm e 3\pm1.5 \mathrm e 3$ &$2.3 \mathrm e 4\pm4.9\mathrm e 3$ & $3.6 \mathrm e 4\pm7.5\mathrm e 3$\\
	\hline
	\end{tabular}

\end{table*}

\begin{table*}

	\centering

	\caption{The $95\%$ confidence interval of response time under $10$ heterogeneous servers with bursty arrival and Poisson service. }\label{tab:throughput_burst_poiss}

	\begin{tabular}{|c|c|c|c|c|c|}
	\hline
	Policy   & $\rho=0.5$ &$\rho=0.7$ & $\rho=0.9$ & $\rho = 0.95$ & $\rho = 0.99$ \\
	\hline
	\hline
	 JSQ &$3.946\pm 0.007$ &$4.519\pm 0.007$ &$6.036\pm0.014$ &$7.157\pm0.032$ & $12.656\pm0.578$\\
	\hline
	JBTG-10  &$2.746\pm 0.008$ &$4.006\pm 0.009$ &$7.527\pm0.089$ &$10.588\pm0.231$ & $20.027\pm1.026$\\
	\hline
	JBTG-2  &$2.795\pm 0.012$ &$4.182\pm 0.036$ &$7.628\pm0.097$ &$10.319\pm0.231$ & $21.718\pm1.092$\\
	\hline
	JBT-2  &$4.397\pm 0.127$ &$6.203\pm 0.163$ &$30.532\pm5.625$ &$37.714\pm6.651$ & $64.409\pm19.751$\\
	\hline
	SQ(2,3) &$4.705\pm 0.012$ &$7.061\pm 0.023$ &$13.755\pm0.076$ &$19.396\pm0.154$ & $35.491\pm0.959$\\
	\hline
	SQ(2) &$1.8\mathrm e 4\pm 0.4\mathrm e 4$ &$4.2\mathrm e 4\pm 0.9 \mathrm e 4$ &$5.5 \mathrm e 4\pm1.1 \mathrm e 4$ &$5.7 \mathrm e 4\pm1.2\mathrm e 4$ & $5.9 \mathrm e 4\pm1.2\mathrm e 4$\\
	\hline
	JIQ &$3.725\pm 0.007$ &$5.092\pm 0.020$ &$2.1 \mathrm e 4\pm4.4 \mathrm e 3$ &$3.6 \mathrm e 4\pm7.5\mathrm e 3$ & $4.7 \mathrm e 4\pm9.8\mathrm e 3$\\
	\hline
	\end{tabular}

\end{table*}

\begin{table*}

	\centering

	\caption{The $95\%$ confidence interval of response time under $10$ homogeneous servers }\label{tab:delay_N10}

	\begin{tabular}{|c|c|c|c|c|c|}
	\hline
	Policy   & $\rho=0.3$ &$\rho=0.5$ & $\rho=0.7$ & $\rho = 0.9$ & $\rho = 0.95$ \\
	\hline
	\hline
	 JSQ   &$2.027\pm 0.005$  &$3.015\pm 0.006$  &$4.038\pm 0.007$  &$5.608\pm0.018$  &$6.854\pm0.052$ \\
	\hline
	JBT-10 &$2.027\pm 0.006$  &$3.015\pm 0.005$  &$4.073\pm 0.008$  &$6.671\pm0.041$  &$9.255\pm0.127$ \\
	\hline
	JBT-2  &$2.035\pm 0.007$  &$3.103\pm 0.019$  &$4.419\pm 0.047$  &$7.873\pm0.111$   &$10.826\pm0.345$ \\
	\hline
	SQ(2,3) &$2.027\pm 0.005$ &$3.054\pm 0.007$ &$4.389\pm 0.009$ &$8.080\pm0.044$ &$11.407\pm0.098$ \\
	\hline
	SQ(2) &$2.122\pm 0.006$ &$3.516\pm 0.009$ &$5.764\pm 0.017$ &$11.494\pm0.071$ &$15.737\pm0.128$ \\
	\hline
	JIQ &$2.028\pm 0.005$ &$3.017\pm 0.007$ &$4.070\pm 0.008$ &$6.780\pm0.035$ &$10.429\pm0.142$ \\
	\hline
	\hline
	\end{tabular}
	
	\begin{tabular}{|c|c|c|c|c|}
	\hline
	Policy   & $\rho=0.97$ &$\rho=0.99$ & $\rho=0.993$ & $\rho = 0.995$ \\
	\hline
	\hline
	 JSQ &$8.336\pm 0.142$ & $15.838\pm0.341$ &$19.119\pm0.562$ &$27.090\pm0.961$ \\
	\hline
	JBT-10  &$11.696\pm 0.205$& $22.035\pm0.401$ &$25.210\pm0.654$ &$34.208\pm1.092$ \\
	\hline
	JBT-2  &$14.127\pm 0.449$ & $28.686\pm0.532$ &$33.288\pm0.777$ &$44.217\pm1.234$ \\
	\hline
	SQ(2,3) &$14.484\pm 0.221$ & $25.010\pm0.353$ &$28.412\pm0.585$ &$36.990\pm1.013$ \\
	\hline
	SQ(2) &$19.512\pm 0.239$ & $31.469\pm0.365$ &$35.046\pm0.601$ &$44.112\pm1.082$ \\
	\hline
	JIQ &$15.221\pm 0.369$ & $41.456\pm0.832$ &$53.308\pm1.853$ &$74.239\pm3.526$ \\
	\hline
	\end{tabular}

\end{table*}

\newpage

\begin{table*}

	\centering

	\caption{The $95\%$ confidence interval of response time under $50$ homogeneous servers }\label{tab:delay_N50}

	\begin{tabular}{|c|c|c|c|c|c|}
	\hline
	Policy   & $\rho=0.3$ &$\rho=0.5$ & $\rho=0.7$ & $\rho = 0.9$ & $\rho = 0.95$ \\
	\hline
	\hline
	 JSQ   &$8.006\pm 0.008$  &$13.005\pm 0.012$  &$17.992\pm 0.015$  &$23.005\pm0.015$  &$24.489\pm0.020$ \\
	\hline
	JBT-50 &$8.011\pm 0.008$  &$13.005\pm 0.015$  &$17.992\pm 0.015$  &$23.151\pm0.020$  &$26.377\pm0.073$ \\
	\hline
	JBT-10  &$8.006\pm 0.008$  &$13.006\pm 0.015$  &$18.002\pm 0.016$  &$23.538\pm0.055$   &$26.797\pm0.154$ \\
	\hline
	SQ(2,9) &$8.014\pm 0.009$ &$13.005\pm 0.015$ &$18.242\pm 0.019$ &$36.489\pm0.129$ &$51.968\pm0.218$ \\
	\hline
	SQ(2) &$8.468\pm 0.010$ &$15.442\pm 0.023$ &$26.412\pm 0.045$ &$53.305\pm0.147$ &$71.985\pm0.345$ \\
	\hline
	JIQ &$8.012\pm 0.008$ &$13.006\pm 0.015$ &$17.993\pm 0.015$ &$23.158\pm0.0158$ &$26.466\pm0.058$ \\
	\hline
	\hline
	\end{tabular}
	
	\begin{tabular}{|c|c|c|c|c|c|}
	\hline
	Policy   & $\rho=0.97$ &$\rho=0.99$ & $\rho=0.993$ & $\rho = 0.995$  \\
	\hline
	\hline
	 JSQ &$25.423\pm 0.031$ & $27.694\pm0.056$ &$28.775\pm0.093$ &$30.100\pm0.186$ \\
	\hline
	JBT-50  &$30.136\pm 0.139$& $40.453\pm0.237$ &$44.106\pm0.372$ &$46.407\pm0.524$ \\
	\hline
	JBT-10  &$29.914\pm 0.187$ & $36.279\pm0.172$ &$39.181\pm0.214$ &$41.928\pm0.417$ \\
	\hline
	SQ(2,9) &$63.413\pm 0.370$ & $89.872\pm0.472$ &$98.885\pm0.694$ &$108.071\pm1.063$ \\
	\hline
	SQ(2) &$86.820\pm 0.538$ & $118.975\pm0.576$ &$131.737\pm0.821$ &$142.267\pm1.200$ \\
	\hline
	JIQ &$30.715\pm 0.141$ & $49.152\pm0.523$ &$61.600\pm1.094$ &$77.576\pm2.162$ \\
	\hline
	\end{tabular}

\end{table*}

\begin{table*}

	\centering

	\caption{The $95\%$ confidence interval of response time under $100$ homogeneous servers }\label{tab:delay_N100}

	\begin{tabular}{|c|c|c|c|c|c|}
	\hline
	Policy   & $\rho=0.3$ &$\rho=0.5$ & $\rho=0.7$ & $\rho = 0.9$ & $\rho = 0.95$ \\
	\hline
	\hline
	 JSQ   &$15.501\pm 0.011$  &$25.495\pm 0.017$  &$35.516\pm 0.020$  &$45.491\pm0.026$  &$48.034\pm0.023$ \\
	\hline
	JBT-100 &$15.504\pm 0.011$  &$25.501\pm 0.017$  &$35.516\pm 0.020$  &$45.496\pm0.026$  &$48.797\pm0.044$ \\
	\hline
	JBT-20  &$15.502\pm 0.011$  &$25.495\pm 0.019$  &$35.517\pm 0.020$  &$45.624\pm0.038$   &$49.098\pm0.069$ \\
	\hline
	SQ(2,15) &$15.501\pm 0.012$ &$25.497\pm 0.018$ &$35.707\pm 0.024$ &$73.777\pm0.257$ &$104.120\pm0.367$ \\
	\hline
	SQ(2) &$16.427\pm 0.014$ &$30.405\pm 0.041$ &$52.416\pm 0.086$ &$105.759\pm0.337$ &$143.503\pm0.494$ \\
	\hline
	JIQ &$15.502\pm 0.012$ &$25.501\pm 0.018$ &$35.516\pm 0.021$ &$45.500\pm0.025$ &$48.776\pm0.039$ \\
	\hline
	\hline
	\end{tabular}
	
	\begin{tabular}{|c|c|c|c|c|c|}
	\hline
	Policy   & $\rho=0.97$ &$\rho=0.99$ & $\rho=0.993$ & $\rho = 0.995$ & $\rho = 0.999$ \\
	\hline
	\hline
	 JSQ &$49.274\pm 0.028$ & $51.666\pm0.086$ &$52.646\pm0.164$ &$53.481\pm0.287$ & $61.550\pm2.018$\\
	\hline
	JBT-100  &$53.061\pm 0.106$& $67.958\pm0.898$ &$73.861\pm1.067$ &$78.988\pm1.348$ & $98.545\pm4.268$\\
	\hline
	JBT-20  &$52.757\pm 0.140$ & $61.210\pm0.497$ &$64.172\pm0.633$ &$66.531\pm1.023$ & $84.580\pm3.389$\\
	\hline
	SQ(2,15) &$127.063\pm 0.549$ & $180.343\pm1.833$ &$197.691\pm2.465$ &$212.866\pm3.037$ & $297.502\pm8.737$\\
	\hline
	SQ(2) &$172.406\pm 0.869$ & $237.873\pm2.293$ &$261.528\pm3.387$ &$279.946\pm4.296$ & $385.839\pm10.543$\\
	\hline
	JIQ &$53.172\pm 0.125$ & $73.049\pm0.849$ &$85.704\pm1.826$ &$98.982\pm2.802$ & $294.567\pm21.828$\\
	\hline
	\end{tabular}

\end{table*}

\begin{table*}

	\centering

	\caption{The $95\%$ confidence interval of response time under $10$ homogeneous servers with Poisson arrival and constant service.}\label{tab:delay_poiss_const}

	\begin{tabular}{|c|c|c|c|c|c|}
	\hline
	Policy   & $\rho=0.3$ &$\rho=0.5$ & $\rho=0.7$ & $\rho = 0.9$ & $\rho = 0.95$ \\
	\hline
	\hline
	 JSQ   &$1.500\pm 0.004$  &$2.497\pm 0.005$  &$3.497\pm 0.005$  &$4.716\pm0.011$  &$5.458\pm0.022$ \\
	\hline
	JBT-10 &$1.500\pm 0.004$  &$2.497\pm 0.005$  &$3.500\pm 0.005$  &$5.120\pm0.021$  &$6.568\pm0.044$ \\
	\hline
	JBT-2  &$1.506\pm 0.005$  &$2.550\pm 0.014$  &$3.776\pm 0.037$  &$6.240\pm0.147$   &$8.462\pm0.214$ \\
	\hline
	SQ(2,3) &$1.500\pm 0.004$ &$2.511\pm 0.005$ &$3.685\pm 0.007$ &$6.745\pm0.035$ &$9.617\pm0.069$ \\
	\hline
	SQ(2) &$1.552\pm 0.005$ &$2.850\pm 0.006$ &$4.863\pm 0.013$ &$9.830\pm0.050$ &$13.492\pm0.088$ \\
	\hline
	JIQ &$1.500\pm 0.004$ &$2.497\pm 0.005$ &$3.500\pm 0.005$ &$5.099\pm0.018$ &$6.661\pm0.038$ \\
	\hline
	\hline
	\end{tabular}
	
	\begin{tabular}{|c|c|c|c|c|c|}
	\hline
	Policy   & $\rho=0.97$ &$\rho=0.99$ & $\rho=0.993$ & $\rho = 0.995$  \\
	\hline
	\hline
	 JSQ &$6.201\pm 0.042$ & $9.962\pm0.157$ &$11.945\pm0.332$ &$15.422\pm0.467$ \\
	\hline
	JBT-10  &$7.878\pm 0.079$& $12.977\pm0.183$ &$15.302\pm0.405$ &$10.209\pm0.523$ \\
	\hline
	JBT-2  &$10.396\pm 0.290$ & $18.985\pm0.382$ &$22.628\pm0.583$ &$29.774\pm0.741$ \\
	\hline
	SQ(2,3) &$12.014\pm 0.111$ & $19.104\pm0.217$ &$21.985\pm0.397$ &$26.231\pm0.515$ \\
	\hline
	SQ(2) &$16.402\pm 0.145$ & $24.736\pm0.230$ &$27.764\pm0.444$ &$32.610\pm0.521$ \\
	\hline
	JIQ &$8.355\pm 0.087$ & $15.467\pm0.251$ &$18.773\pm0.496$ &$24.277\pm0.637$ \\
	\hline
	\end{tabular}

\end{table*}

\newpage

\begin{table*}

	\centering

	\caption{The $95\%$ confidence interval of response time under $10$ homogeneous servers with Poisson arrival and bursty service}\label{tab:delay_poiss_burst}

	\begin{tabular}{|c|c|c|c|c|c|}
	\hline
	Policy   & $\rho=0.3$ &$\rho=0.5$ & $\rho=0.7$ & $\rho = 0.9$ & $\rho = 0.95$ \\
	\hline
	\hline
	 JSQ   &$9.009\pm 0.027$  &$9.154\pm 0.025$  &$9.974\pm 0.031$  &$14.164\pm0.178$  &$19.512\pm0.587$ \\
	\hline
	JBT-10 &$9.100\pm 0.023$  &$9.773\pm 0.035$  &$12.252\pm 0.072$  &$23.693\pm0.493$  &$33.289\pm1.254$ \\
	\hline
	JBT-2  &$9.062\pm 0.026$  &$9.667\pm 0.037$  &$11.919\pm 0.091$  &$21.243\pm0.638$   &$30.387\pm0.968$ \\
	\hline
	SQ(2,3) &$9.017\pm 0.025$ &$9.398\pm 0.028$ &$11.183\pm 0.043$ &$18.287\pm0.241$ &$25.669\pm0.649$ \\
	\hline
	SQ(2) &$9.149\pm 0.029$ &$10.454\pm 0.045$ &$13.880\pm 0.063$ &$24.829\pm0.336$ &$34.707\pm0.748$ \\
	\hline
	JIQ &$9.080\pm 0.032$ &$9.810\pm 0.036$ &$12.652\pm 0.070$ &$31.385\pm0.035$ &$60.308\pm2.982$ \\
	\hline
	\hline
	\end{tabular}
	
	\begin{tabular}{|c|c|c|c|c|c|}
	\hline
	Policy   & $\rho=0.97$ &$\rho=0.99$ & $\rho=0.993$ & $\rho = 0.995$  \\
	\hline
	\hline
	 JSQ &$26.375\pm 1.204$ &$61.864\pm 2.771$ &$91.184\pm6.933$ &$114.013\pm10.963$ \\
	\hline
	JBT-10  &$43.297\pm 2.287$& $84.148\pm3.274$ &$110.089\pm7.315$ &$140.608\pm11.082$ \\
	\hline
	JBT-2  &$42.166\pm 2.523$ & $87.971\pm3.131$ &$120.281\pm7.352$ &$147.652\pm11.987$ \\
	\hline
	SQ(2,3) &$33.781\pm 1.338$ & $71.298\pm2.815$ &$102.409\pm7.020$ &$124.326\pm10.985$ \\
	\hline
	SQ(2) &$44.433\pm 1.575$ & $84.208\pm2.864$ &$117.421\pm7.138$ &$138.626\pm11.177$ \\
	\hline
	JIQ &$98.679\pm 7.706$ & $280.949\pm16.030$ &$398.908\pm26.882$ &$527.805\pm47.782$ \\
	\hline
	\end{tabular}

\end{table*}

\begin{table*}

	\centering

	\caption{The $95\%$ confidence interval of response time under $10$ homogeneous servers with bursty arrival and Poisson service }\label{tab:delay_burst_poiss}

	\begin{tabular}{|c|c|c|c|c|c|}
	\hline
	Policy   & $\rho=0.3$ &$\rho=0.5$ & $\rho=0.7$ & $\rho = 0.9$ & $\rho = 0.95$ \\
	\hline
	\hline
	 JSQ   &$6.001\pm 0.025$  &$6.015\pm 0.013$  &$6.128\pm 0.016$  &$7.153\pm0.037$  &$8.758\pm0.110$ \\
	\hline
	JBT-10 &$6.001\pm 0.029$  &$6.035\pm 0.015$  &$6.332\pm 0.021$  &$8.896\pm0.067$  &$12.043\pm0.233$ \\
	\hline
	JBT-2  &$6.031\pm 0.026$  &$6.210\pm 0.029$  &$6.966\pm 0.067$  &$10.064\pm0.170$   &$13.914\pm0.383$ \\
	\hline
	SQ(2,3) &$6.001\pm 0.027$ &$6.085\pm 0.015$ &$6.532\pm 0.022$ &$9.388\pm0.060$ &$12.805\pm0.154$ \\
	\hline
	SQ(2) &$6.365\pm 0.031$ &$7.185\pm 0.030$ &$8.931\pm 0.046$ &$14.394\pm0.110$ &$19.050\pm0.222$ \\
	\hline
	JIQ &$6.001\pm 0.028$ &$6.029\pm 0.015$ &$6.327\pm 0.019$ &$9.197\pm0.085$ &$14.023\pm0.262$ \\
	\hline
	\hline
	\end{tabular}
	
	\begin{tabular}{|c|c|c|c|c|c|}
	\hline
	Policy   & $\rho=0.97$ &$\rho=0.99$ & $\rho=0.993$ & $\rho = 0.995$  \\
	\hline
	\hline
	 JSQ &$10.778\pm 0.262$ & $21.872\pm0.698$ &$33.241\pm1.410$ &$40.061\pm8.565$ \\
	\hline
	JBT-10  &$15.293\pm 0.433$& $29.366\pm0.834$ &$39.880\pm1.446$ &$49.440\pm8.605$ \\
	\hline
	JBT-2  &$17.919\pm 0.788$ & $34.347\pm1.070$ &$50.013\pm1.957$ &$63.382\pm10.735$ \\
	\hline
	SQ(2,3) &$16.103\pm 0.332$ & $29.333\pm0.728$ &$41.273\pm1.445$ &$48.821\pm8.520$ \\
	\hline
	SQ(2) &$23.381\pm 0.389$ & $38.513\pm0.791$ &$50.151\pm1.492$ &$58.266\pm8.672$ \\
	\hline
	JIQ &$20.209\pm 0.665$ & $53.776\pm1.853$ &$78.304\pm3.827$ &$117.161\pm20.937$ \\
	\hline
	\end{tabular}

\end{table*}

\begin{table*}

	\centering

	\caption{Number of messages per new job arrival under $10$ homogeneous servers with respect to $T$ }\label{tab:mess_N10_Td}

	\begin{tabular}{|c|c|c|c|c|c|}
	\hline
	Policy   & $T=10$ &$T=50$ & $T=100$ & $T= 500$ & $T = 700$ \\
	\hline
	\hline
	JSQ &$20$ &$20$ &$20$ &$20$ &$20$\\
	\hline
	JBT-10  &$2.633\pm 0.004$ &$0.978\pm 0.005$ &$0.755\pm0.006$ &$0.558\pm0.013$ & $0.545\pm0.014$\\
	\hline
	JBT-5  &$1.796\pm 0.003$ &$0.921\pm 0.004$ &$0.787\pm0.006$ &$0.658\pm0.017$ & $0.655\pm0.016$\\
	\hline
	JBT-2  &$1.377\pm 0.003$ &$0.935\pm 0.004$ &$0.868\pm0.006$ &$0.787\pm0.014$ & $0.795\pm0.014$\\
	\hline
	SQ(2,3) &$4$ &$4$ &$4$ &$4$ &$4$\\
	\hline
	SQ(2)  &$4$ &$4$ &$4$ &$4$ &$4$\\
	\hline
	JIQ &$0.167\pm 0.012$ &$0.167\pm 0.012$ &$0.167\pm 0.012$ &$0.167\pm 0.012$ &$0.167\pm 0.012$\\
	\hline
	\end{tabular}

\end{table*}

\newpage

\begin{table*}

	\centering

	\caption{The $95\%$ confidence interval of response time under $10$ homogeneous servers with respect to $T$ }\label{tab:delay_N10_Td}

	\begin{tabular}{|c|c|c|c|c|c|}
	\hline
	Policy   & $T=10$ &$T=50$ & $T=100$ & $T= 500$ & $T = 700$ \\
	\hline
	\hline
	JSQ &$15.838\pm 0.341$ &$15.838\pm 0.341$ &$15.838\pm 0.341$ &$15.838\pm 0.341$ &$15.838\pm 0.341$\\
	\hline
	JBT-10  &$18.616\pm 0.352$ &$19.394\pm 0.357$ &$19.816\pm0.372$ &$20.848\pm0.383$ & $21.151\pm0.413$\\
	\hline
	JBT-5  &$18.749\pm 0.363$ &$19.434\pm 0.362$ &$19.967\pm0.365$ &$21.162\pm0.408$ & $21.554\pm0.445$\\
	\hline
	JBT-2  &$22.717\pm 0.392$ &$24.212\pm 0.380$ &$24.742\pm0.407$ &$26.645\pm0.483$ & $26.971\pm0.492$\\
	\hline
	SQ(2,3) &$25.010\pm 0.353$ &$25.010\pm 0.353$ &$25.010\pm 0.353$ &$25.010\pm 0.353$ &$25.010\pm 0.353$\\
	\hline
	SQ(2)  &$31.469\pm 0.365$ &$31.469\pm 0.365$ &$31.469\pm 0.365$ &$31.469\pm 0.365$ &$31.469\pm 0.365$\\
	\hline
	JIQ &$41.456\pm 0.832$  &$41.456\pm 0.832$  &$41.456\pm 0.832$  &$41.456\pm 0.832$ &$41.456\pm 0.832$\\
	\hline
	\end{tabular}

\end{table*}

\begin{table*}

	\centering

	\caption{Number of messages per new job arrival under $50$ homogeneous servers with respect to $T$ }\label{tab:mess_N50_Td}

	\begin{tabular}{|c|c|c|c|c|c|}
	\hline
	Policy   & $T=10$ &$T=50$ & $T=100$ & $T= 500$ & $T = 700$ \\
	\hline
	\hline
	JSQ &$100$ &$100$ &$100$ &$100$ &$100$\\
	\hline
	JBT-50  &$10.749\pm 0.005$ &$2.690\pm 0.005$ &$1.679\pm0.007$ &$0.857\pm0.010$ & $0.800\pm0.010$\\
	\hline
	JBT-25  &$5.879\pm 0.004$ &$1.804\pm 0.004$ &$1.290\pm0.004$ &$0.865\pm0.010$ & $0.838\pm0.010$\\
	\hline
	JBT-10  &$3.085\pm 0.258$ &$1.298\pm 0.347$ &$1.086\pm0.326$ &$0.915\pm0.411$ & $0.909\pm0.395$\\
	\hline
	SQ(2,9) &$4$ &$4$ &$4$ &$4$ &$4$\\
	\hline
	SQ(2)  &$4$ &$4$ &$4$ &$4$ &$4$\\
	\hline
	JIQ &$0.511\pm 0.009$ &$0.511\pm 0.009$ &$0.511\pm 0.009$ &$0.511\pm 0.009$ &$0.511\pm 0.009$\\
	\hline
	\end{tabular}

\end{table*}

\begin{table*}

	\centering

	\caption{The $95\%$ confidence interval of response time under $50$ homogeneous servers with respect to $T$ }\label{tab:delay_N50_Td}

	\begin{tabular}{|c|c|c|c|c|c|}
	\hline
	Policy   & $T=10$ &$T=50$ & $T=100$ & $T= 500$ & $T = 700$ \\
	\hline
	\hline
	JSQ &$27.694\pm 0.056$ &$27.694\pm 0.056$ &$27.694\pm 0.056$ &$27.694\pm 0.056$ &$27.694\pm 0.056$\\
	\hline
	JBT-50  &$37.396\pm 0.121$ &$38.885\pm 0.136$ &$39.202\pm0.156$ &$40.357\pm0.183$ & $40.449\pm0.203$\\
	\hline
	JBT-25  &$34.732\pm 0.091$ &$35.611\pm 0.109$ &$35.807\pm0.110$ &$36.541\pm0.155$ & $36.596\pm0.143$\\
	\hline
	JBT-10  &$36.717\pm 0.100$ &$37.613\pm 0.137$ &$37.037\pm0.149$ &$36.714\pm0.158$ & $36.540\pm0.159$\\
	\hline
	SQ(2,9) &$89.872\pm 0.472$ &$89.872\pm 0.472$ &$89.872\pm 0.472$ &$89.872\pm 0.472$ &$89.872\pm 0.472$\\
	\hline
	SQ(2)  &$118.975\pm 0.576$ &$118.975\pm 0.576$ &$118.975\pm 0.576$ &$118.975\pm 0.576$ &$118.975\pm 0.576$\\
	\hline
	JIQ &$49.876\pm 0.523$  &$49.876\pm 0.523$  &$49.876\pm 0.523$  &$49.876\pm 0.523$ &$49.876\pm 0.523$\\
	\hline
	\end{tabular}

\end{table*}

\begin{table*}

	\centering

	\caption{Number of messages per new job arrival under $100$ homogeneous servers with respect to $T$ }\label{tab:mess_N100_Td}

	\begin{tabular}{|c|c|c|c|c|c|}
	\hline
	Policy   & $T=10$ &$T=50$ & $T=100$ & $T= 500$ & $T = 700$ \\
	\hline
	\hline
	JSQ &$200$ &$200$ &$200$ &$200$ &$200$\\
	\hline
	JBT-100  &$20.861\pm 0.006$ &$4.772\pm 0.004$ &$2.761\pm0.005$ &$1.146\pm0.006$ & $1.030\pm0.008$\\
	\hline
	JBT-50  &$10.963\pm 0.004$ &$2.847\pm 0.003$ &$1.837\pm0.004$ &$1.010\pm0.007$ & $0.960\pm0.007$\\
	\hline
	JBT-20  &$5.143\pm 0.004$ &$1.728\pm 0.003$ &$1.313\pm0.003$ &$0.979\pm0.006$ & $0.955\pm0.006$\\
	\hline
	SQ(2,15) &$4$ &$4$ &$4$ &$4$ &$4$\\
	\hline
	SQ(2)  &$4$ &$4$ &$4$ &$4$ &$4$\\
	\hline
	JIQ &$0.682\pm 0.006$ &$0.682\pm 0.006$ &$0.682\pm 0.006$ &$0.682\pm 0.006$ &$0.682\pm 0.006$\\
	\hline
	\end{tabular}

\end{table*}

\begin{table*}

	\centering

	\caption{The $95\%$ confidence interval of response time under $100$ homogeneous servers with respect to $T$ }\label{tab:delay_N100_Td}

	\begin{tabular}{|c|c|c|c|c|c|}
	\hline
	Policy   & $T=10$ &$T=50$ & $T=100$ & $T= 500$ & $T = 700$ \\
	\hline
	\hline
	JSQ &$51.666\pm 0.086$ &$51.666\pm 0.086$ &$51.666\pm 0.086$ &$51.666\pm 0.086$ &$51.666\pm 0.086$\\
	\hline
	JBT-100  &$65.064\pm 0.341$ &$66.965\pm 0.414$ &$67.221\pm0.490$ &$67.803\pm0.518$ & $68.104\pm0.677$\\
	\hline
	JBT-50  &$61.172\pm 0.238$ &$62.704\pm 0.285$ &$62.896\pm0.327$ &$63.771\pm0.524$ & $63.358\pm0.500$\\
	\hline
	JBT-20  &$61.554\pm 0.268$ &$62.301\pm 0.285$ &$62.065\pm0.328$ &$61.318\pm0.407$ & $61.133\pm0.415$\\
	\hline
	SQ(2,15) &$180.343\pm 1.833$ &$180.343\pm 1.833$ &$180.343\pm 1.833$ &$180.343\pm 1.833$ &$180.343\pm 1.833$\\
	\hline
	SQ(2)  &$237.873\pm 2.293$ &$237.873\pm 2.293$ &$237.873\pm 2.293$ &$237.873\pm 2.293$ &$237.873\pm 2.293$\\
	\hline
	JIQ &$73.049\pm 0.849$  &$73.049\pm 0.849$  &$73.049\pm 0.849$  &$73.049\pm 0.849$ &$73.049\pm 0.849$\\
	\hline
	\end{tabular}

\end{table*}

\newpage

\begin{table*}

	\centering

	\caption{Number of messages per new job arrival under $10$ homogeneous servers with Poisson arrival and constant service versus $T$ }\label{tab:mess_N10_Td_poiss_const}

	\begin{tabular}{|c|c|c|c|c|c|}
	\hline
	Policy   & $T=10$ &$T=50$ & $T=100$ & $T= 500$ & $T = 700$ \\
	\hline
	\hline
	JSQ &$20$ &$20$ &$20$ &$20$ &$20$\\
	\hline
	JBT-10  &$2.734\pm 0.003$ &$1.071\pm 0.005$ &$0.848\pm0.006$ &$0.656\pm0.013$ & $0.654\pm0.013$\\
	\hline
	JBT-5  &$1.856\pm 0.003$ &$0.992\pm 0.004$ &$0.865\pm0.005$ &$0.760\pm0.013$ & $0.751\pm0.010$\\
	\hline
	JBT-2  &$1.405\pm 0.003$ &$0.969\pm 0.003$ &$0.910\pm0.005$ &$0.859\pm0.009$ & $0.856\pm0.009$\\
	\hline
	SQ(2,3) &$4$ &$4$ &$4$ &$4$ &$4$\\
	\hline
	SQ(2)  &$4$ &$4$ &$4$ &$4$ &$4$\\
	\hline
	JIQ &$0.397\pm 0.07$ &$0.397\pm 0.07$ &$0.397\pm 0.07$ &$0.397\pm 0.07$ &$0.397\pm 0.07$\\
	\hline
	\end{tabular}

\end{table*}

\begin{table*}

	\centering

	\caption{The $95\%$ confidence interval of response time under $10$ homogeneous servers with Poisson arrival and constant service versus $T$ }\label{tab:delay_N10_Td_poiss_const}

	\begin{tabular}{|c|c|c|c|c|c|}
	\hline
	Policy   & $T=10$ &$T=50$ & $T=100$ & $T= 500$ & $T = 700$ \\
	\hline
	\hline
	JSQ &$9.962\pm 0.157$ &$9.962\pm 0.157$ &$9.962\pm 0.157$ &$9.962\pm 0.157$ &$9.962\pm 0.157$\\
	\hline
	JBT-10  &$12.000\pm 0.168$ &$12.381\pm 0.170$ &$12.581\pm0.172$ &$12.926\pm0.182$ & $12.960\pm0.183$\\
	\hline
	JBT-5  &$12.569\pm 0.168$ &$12.920\pm 0.168$ &$13.148\pm0.177$ &$13.683\pm0.199$ & $13.751\pm0.202$\\
	\hline
	JBT-2  &$17.043\pm 0.183$ &$18.294\pm 0.209$ &$18.338\pm0.243$ &$18.418\pm0.264$ & $18.761\pm0.282$\\
	\hline
	SQ(2,3) &$19.104\pm 0.217$ &$19.104\pm 0.217$ &$19.104\pm 0.217$ &$19.104\pm 0.217$ &$19.104\pm 0.217$\\
	\hline
	SQ(2)  &$24.736\pm 0.230$ &$24.736\pm 0.230$ &$24.736\pm 0.230$ &$24.736\pm 0.230$ &$24.736\pm 0.230$\\
	\hline
	JIQ &$15.467\pm 0.251$  &$15.467\pm 0.251$   &$15.467\pm 0.251$   &$15.467\pm 0.251$  &$15.467\pm 0.251$ \\
	\hline
	\end{tabular}

\end{table*}

\begin{table*}

	\centering

	\caption{Number of messages per new job arrival under $10$ homogeneous servers with Poisson arrival and bursty service versus $T$ }\label{tab:mess_N10_Td_poiss_burst}

	\begin{tabular}{|c|c|c|c|c|c|}
	\hline
	Policy   & $T=10$ &$T=50$ & $T=100$ & $T= 500$ & $T = 700$ \\
	\hline
	\hline
	JSQ &$20$ &$20$ &$20$ &$20$ &$20$\\
	\hline
	JBT-10  &$2.408\pm 0.003$ &$0.783\pm 0.004$ &$0.573\pm0.005$ &$0.371\pm0.013$ & $0.363\pm0.017$\\
	\hline
	JBT-5  &$1.615\pm 0.003$ &$0.750\pm 0.005$ &$0.621\pm0.006$ &$0.477\pm0.016$ & $0.466\pm0.017$\\
	\hline
	JBT-2  &$1.271\pm 0.003$ &$0.830\pm 0.005$ &$0.756\pm0.007$ &$0.660\pm0.016$ & $0.649\pm0.018$\\
	\hline
	SQ(2,3) &$4$ &$4$ &$4$ &$4$ &$4$\\
	\hline
	SQ(2)  &$4$ &$4$ &$4$ &$4$ &$4$\\
	\hline
	JIQ &$0.029\pm 0.003$ &$0.029\pm 0.003$ &$0.029\pm 0.003$ &$0.029\pm 0.003$ &$0.029\pm 0.003$\\
	\hline
	\end{tabular}

\end{table*}

\begin{table*}

	\centering

	\caption{The $95\%$ confidence interval of response time under $10$ homogeneous servers with Poisson arrival and bursty service versus $T$ }\label{tab:delay_N10_Td_poiss_burst}

	\begin{tabular}{|c|c|c|c|c|c|}
	\hline
	Policy   & $T=10$ &$T=50$ & $T=100$ & $T= 500$ & $T = 700$ \\
	\hline
	\hline
	JSQ &$61.864\pm 2.771$ &$61.864\pm 2.771$ &$61.864\pm 2.771$ &$61.864\pm 2.771$ &$61.864\pm 2.771$\\
	\hline
	JBT-10  &$69.047\pm 4.879$ &$71.659\pm 4.796$ &$73.187\pm4.802$ &$78.771\pm5.082$ & $78.389\pm5.543$\\
	\hline
	JBT-5  &$66.133\pm 4.973$ &$68.687\pm 5.118$ &$70.028\pm5.021$ &$74.772\pm5.171$ & $75.977\pm5.296$\\
	\hline
	JBT-2  &$69.910\pm 5.054$ &$73.245\pm 5.499$ &$73.796\pm4.987$ &$80.246\pm5.456$ & $80.611\pm6.044$\\
	\hline
	SQ(2,3) &$71.298\pm2.815$ &$71.298\pm2.815$ &$71.298\pm2.815$ &$71.298\pm2.815$ &$71.298\pm2.815$\\
	\hline
	SQ(2)  & $84.208\pm2.864$ & $84.208\pm2.864$ & $84.208\pm2.864$ & $84.208\pm2.864$ & $84.208\pm2.864$\\
	\hline
	JIQ & $280.949\pm16.030$  & $280.949\pm16.030$   & $280.949\pm16.030$  & $280.949\pm16.030$  & $280.949\pm16.030$\\
	\hline
	\end{tabular}

\end{table*}

\newpage

\begin{table*}

	\centering

	\caption{Number of messages per new job arrival under $10$ homogeneous servers with bursty arrival and Poisson service versus $T$ }\label{tab:mess_N10_Td_burst_poiss}

	\begin{tabular}{|c|c|c|c|c|c|}
	\hline
	Policy   & $T=10$ &$T=50$ & $T=100$ & $T= 500$ & $T = 700$ \\
	\hline
	\hline
	JSQ &$20$ &$20$ &$20$ &$20$ &$20$\\
	\hline
	JBT-10  &$3.164\pm 0.005$ &$1.159\pm 0.008$ &$0.874\pm0.009$ &$0.625\pm0.016$ & $0.618\pm0.021$\\
	\hline
	JBT-5  &$2.145\pm 0.005$ &$1.082\pm 0.007$ &$0.918\pm0.010$ &$0.721\pm0.020$ & $0.732\pm0.023$\\
	\hline
	JBT-2  &$1.641\pm 0.004$ &$1.115\pm 0.006$ &$1.025\pm0.008$ &$0.914\pm0.021$ & $0.896\pm0.027$\\
	\hline
	SQ(2,3) &$4$ &$4$ &$4$ &$4$ &$4$\\
	\hline
	SQ(2)  &$4$ &$4$ &$4$ &$4$ &$4$\\
	\hline
	JIQ &$0.177\pm 0.017$ &$0.177\pm 0.017$ &$0.177\pm 0.017$ &$0.177\pm 0.017$ &$0.177\pm 0.017$\\
	\hline
	\end{tabular}

\end{table*}

\begin{table*}

	\centering

	\caption{The $95\%$ confidence interval of response time under $10$ homogeneous servers with bursty arrival and Poisson service versus $T$ }\label{tab:delay_N10_Td_burst_poiss}

	\begin{tabular}{|c|c|c|c|c|c|}
	\hline
	Policy   & $T=10$ &$T=50$ & $T=100$ & $T= 500$ & $T = 700$ \\
	\hline
	\hline
	JSQ &$21.872\pm 0.698$ &$21.872\pm 0.698$ &$21.872\pm 0.698$ &$21.872\pm 0.698$ &$21.872\pm 0.698$\\
	\hline
	JBT-10  &$24.636\pm 0.720$ &$25.845\pm 0.741$ &$26.269\pm0.743$ &$27.838\pm0.745$ & $28.284\pm0.747$\\
	\hline
	JBT-5  &$24.694\pm 0.728$ &$25.755\pm 0.732$ &$26.558\pm0.763$ &$28.846\pm0.783$ & $29.147\pm0.806$\\
	\hline
	JBT-2  &$29.164\pm 0.739$ &$30.869\pm 0.780$ &$32.121\pm0.805$ &$34.599\pm0.959$ & $35.293\pm0.989$\\
	\hline
	SQ(2,3) &$29.333\pm0.728$ &$29.333\pm0.728$ &$29.333\pm0.728$ &$29.333\pm0.728$ &$29.333\pm0.728$\\
	\hline
	SQ(2)  & $38.513\pm0.791$ & $38.513\pm0.791$ & $38.513\pm0.791$ & $38.513\pm0.791$ & $38.513\pm0.791$\\
	\hline
	JIQ & $53.776\pm1.853$  & $53.776\pm1.853$   & $53.776\pm1.853$ & $53.776\pm1.853$ & $53.776\pm1.853$\\
	\hline
	\end{tabular}

\end{table*}

\end{document}